\newcommand{\ra}[1]{\renewcommand{\arraystretch}{#1}}
\DeclareMathOperator*{\argmax}{arg\,max}
\newcommand{\StatexIndent}[1][3]{%
  \setlength\@tempdima{\algorithmicindent}%
  \Statex\hskip\dimexpr#1\@tempdima\relax}
\newcommand{\etal}{\textit{et al.\ }}
\theoremstyle{plain} 
\newtheorem{theorem}{Theorem}[section]
\newtheorem*{theorem*}{Theorem}
\newtheorem{lemma}[theorem]{Lemma}
\newtheorem*{lemma*}{Lemma}
\newtheorem*{corollary*}{Corollary}
\newtheorem*{proposition*}{Proposition}
\newtheorem*{definition*}{Definition}
\newtheorem{example}[theorem]{Example}
\newtheorem*{example*}{Example}
\newtheorem{remark}[theorem]{Remark}
\newtheorem*{remark*}{Remark}
\newtheorem*{remarks*}{Remarks}
\newtheorem{ass}[theorem]{Assumption}
\def \d{\mathbf{d}}
\def \be{\begin{equs}}
\def \ee{\end{equs}}
\def \P{\mathbb{P}}
\def \E{\mathbb{E}}
\newcommand \TV{\mathrm{TV}}
\def \defi{\mathrm{def}}
\begin{document}

\title{Accelerating Asymptotically Exact MCMC for Computationally Intensive Models via Local Approximations}

%
\author[1]{Patrick R.\ Conrad}
\author[1]{Youssef M.\ Marzouk}
\author[2]{Natesh S.\ Pillai}
\author[3]{Aaron Smith}
\affil[1]{Department of Aeronautics and Astronautics\\Massachusetts Institute of Technology \authorcr 77 Massachusetts Avenue\\Cambridge, MA 02139, USA. \texttt{\{prconrad, ymarz\}@mit.edu}}
\affil[2]{Department of Statistics\\Harvard University\authorcr 1 Oxford Street\\Cambridge, MA 02138, USA. \texttt{pillai@stat.harvard.edu}}
\affil[3]{Department of Mathematics and Statistics\\University of Ottawa \authorcr 585 King Edward Avenue\\Ottawa, ON K1N 7N5, Canada. \texttt{asmi28@uottawa.ca}}

\maketitle






\begin{abstract}
We construct a new framework for accelerating Markov chain Monte Carlo in posterior sampling problems where standard methods are limited by the computational cost of the likelihood, or of numerical models embedded therein. Our approach introduces local approximations of these models into the Metropolis-Hastings kernel, borrowing ideas from deterministic approximation theory, optimization, and experimental design. Previous efforts at integrating approximate models into inference typically sacrifice either the sampler's exactness or efficiency; our work seeks to address these limitations by exploiting useful convergence characteristics of local approximations. We prove the ergodicity of our approximate Markov chain, showing that it samples asymptotically from the \emph{exact} posterior distribution of interest. We describe variations of the algorithm that employ either local polynomial approximations or local Gaussian process regressors. Our theoretical results reinforce the key observation underlying this paper: when the likelihood has some local regularity, the number of model evaluations per MCMC step can be greatly reduced without biasing the Monte Carlo average. Numerical experiments demonstrate multiple order-of-magnitude reductions in the number of forward model evaluations used in representative ODE and PDE inference problems, with both synthetic and real data.
\end{abstract}
\smallskip
\noindent \textbf{Keywords:} approximation theory, computer experiments, emulators, experimental design,  local approximation, Markov chain Monte Carlo



\section{Introduction}
Bayesian inference for computationally intensive models is often limited by the computational cost of Markov chain Monte Carlo (MCMC) sampling. For example, scientific models in diverse fields such as geophysics, chemical kinetics, and biology often invoke ordinary or partial differential equations to describe the underlying physical or natural phenomena. These differential equations constitute the \emph{forward model} which, combined with measurement or model error, yield a likelihood function. Given a numerical implementation of this physical model, standard MCMC techniques are in principle appropriate for sampling from the posterior distribution. However, the cost of running the forward model anew at each MCMC step can quickly become prohibitive if the forward model is computationally expensive. 

An important strategy for mitigating this cost is to recognize that the forward model may exhibit regularity in its dependence on the parameters of interest, such that the model outputs may be approximated with fewer samples than are needed to characterize the posterior via MCMC. Replacing the forward model with an approximation or ``surrogate'' \textit{decouples} the required number of forward model evaluations from the length of the MCMC chain, and thus can vastly reduce the overall cost of inference~\citep{Sacks1989, Kennedy2001}. Existing approaches typically create high-order global approximations for either the forward model outputs or the log-likelihood function using, for example, global polynomials \citep{Marzouk2007,Marzouk2009a}, radial basis functions \citep{Bliznyuk2012, Joseph2012}, or Gaussian processes \citep{Sacks1989, Kennedy2001, Rasmussen2003, Santner2003}. As in most of these efforts, we will assume that the forward model is deterministic and available only as a black box, thus limiting ourselves to ``non-intrusive'' approximation methods that are based on evaluations of the forward model at selected input points.\footnote{Interesting examples of intrusive techniques exploit multiple spatial resolutions of the forward model~\citep{higdon2003, christen2005, Efendiev2006}, models with tunable accuracy~\citep{Korattikara2013, Bal2013}, or projection-based reduced order models \citep{Frangos2010, Lieberman2010, Cui2014}.} 
Since we assume that the exact forward model is available and computable, but simply too expensive to be run a large number of times, the present setting is distinct from that of either pseudo-marginal MCMC or approximate Bayesian computation (ABC); these are important methods for intractable posteriors where the likelihood can only be estimated or simulated from, respectively \citep{Andrieu2009, Marin2011}.\footnote{Typically the computational model itself is an approximation of some underlying governing equations. Though numerical discretization error can certainly affect the posterior \citep{Kaipio2007}, we do not address this issue here; we let a numerical implementation of the forward model, embedded appropriately in the likelihood function, define the exact posterior of interest.}

Although current approximation methods can provide significant empirical performance improvements, they tend either to over- or under-utilize the surrogate, sacrificing exact sampling or potential speedup, respectively. In the first case, many methods produce some fixed approximation, inducing an approximate posterior. In principle, one might require only that the bias of a posterior expectation computed using samples from this approximate posterior be small relative to the variance introduced by the finite length of the MCMC chain, but current methods lack a rigorous approach to controlling this bias \citep{Bliznyuk2008, Fielding2011}; \cite{Cotter2010} show that bounding the bias is in principle possible, by proving that the rate of convergence of the forward model approximation can be transferred to the approximate posterior, but their bounds include unknown constants and hence do not suggest practical strategies for error control. Conversely, other methods limit potential performance improvement by failing to ``trust'' the surrogate even when it is accurate. Delayed-acceptance schemes, for example, eliminate the need for error analysis of the surrogate but require at least one full model evaluation for each accepted sample \citep{Rasmussen2003, christen2005, cui2011wrr}, which remains a significant computational effort.

Also, analyzing the error of a forward model approximation can be quite challenging for the \textit{global} approximation methods used in previous work---in particular for methods that use complex sequential experimental design heuristics to build surrogates over the posterior \citep{Rasmussen2003, Bliznyuk2008, Fielding2011}. 
Even when these design heuristics perform well, it is not clear how to establish rigorous error bounds for finite samples or even how to establish convergence for infinite samples, given relatively arbitrary point sets. Polynomial chaos expansions sidestep some of these issues by designing sample grids \citep{Xiu2005, Nobile2007, Constantine2012, Conrad2013} with respect to the prior distribution, which are known to induce a convergent approximation of the posterior density \citep{Marzouk2009a}. However, only using prior information is likely to be inefficient; whenever the data are informative, the posterior concentrates on a small fraction of the parameter space relative to the prior \citep{LiMarzouk2014}. Figure \ref{fig:expDesignCartoon} illustrates the contrast between a prior-based sparse grid \citep{Conrad2013} and a posterior-adapted, unstructured, sample set. Overall, there is a need for efficient approaches with provable convergence properties---such that one can achieve exact sampling while making full use of the surrogate model.

\begin{figure}
\centering
\subfloat[Prior-based sparse grid samples.]{
\def\svgwidth{0.4\textwidth}
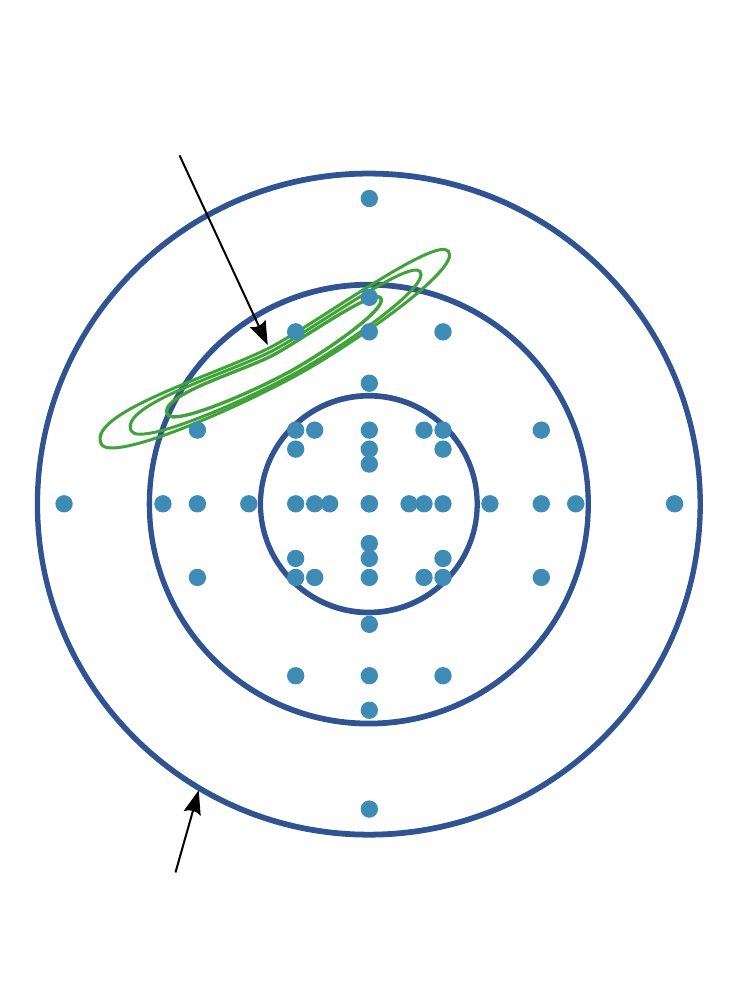
}
\subfloat[Posterior-adapted samples.]{
\def\svgwidth{0.4\textwidth}
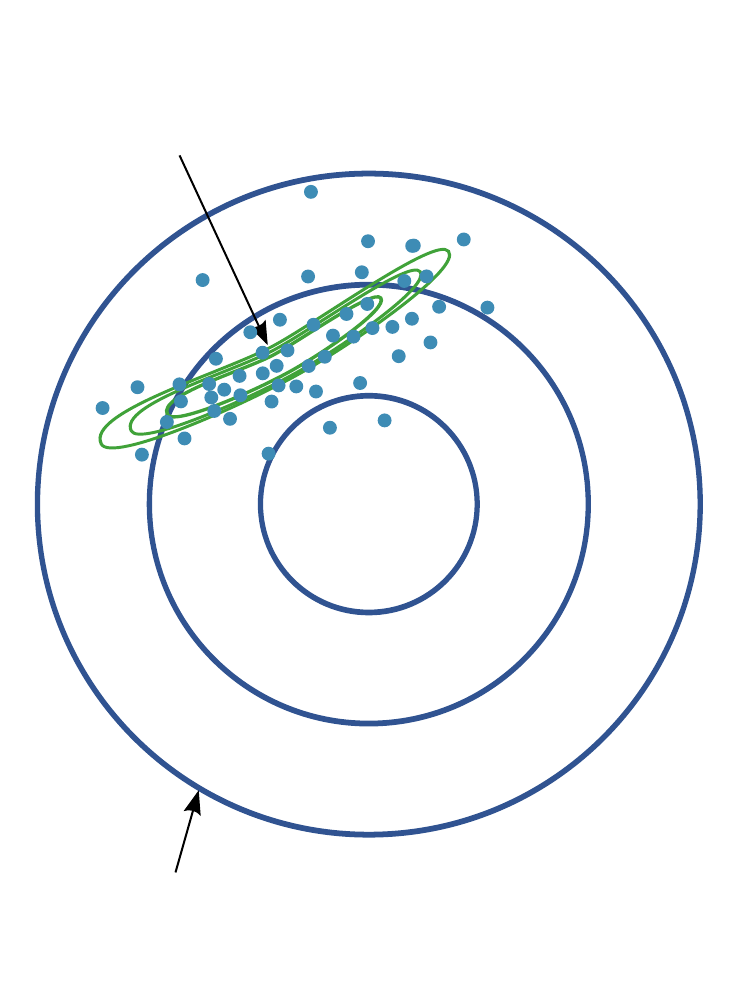

}\caption{Schematic of an inference problem with a Gaussian prior and a posterior concentrated therein, with two experimental design approaches superimposed. Points are locations in the parameter space where the forward model is evaluated.}
\label{fig:expDesignCartoon}
\end{figure}

\subsection{Our contribution}
This work attempts to resolve the above-mentioned issues by proposing a new framework that integrates \textit{local} approximations into Metropolis-Hastings kernels, producing a Markov chain that asymptotically (in the number of MCMC steps) samples from the \emph{exact} posterior distribution. As examples of this approach, we will employ approximations of either the log-likelihood function or the forward model, using local linear, quadratic, or Gaussian process regression. To produce the sample sets used for these local approximations, we will introduce a sequential experimental design procedure that interleaves infinite refinement of the approximation with the Markov chain's exploration of the posterior. The overall experimental design reflects a combination of guidance from MCMC (so that samples are focused on the posterior) and local space filling heuristics (to ensure good quality sample sets for local approximation), triggered both by random refinement and by local error indicators of approximation quality. The result is a practical approach that also permits rigorous error analysis. This concept is inspired by the use of local approximations in \emph{trust region methods} for derivative-free optimization \citep{Conn2000, Conn2009}, wherein local models similarly allow the reuse of model evaluations while enabling refinement until convergence. Local approximations also have a long history in the statistics literature \citep{Cleveland1979, Friedman1991} and have recently been reintroduced as an important strategy for scaling Gaussian processes to large data contexts \citep{Gramacy2013}.

Local approximations are convergent under relatively straightforward conditions (compared to global approximations), and we use this property to prove that the resulting MCMC algorithm converges asymptotically to the posterior distribution induced by the exact forward model and likelihood.
%
%
%
Our proof involves demonstrating that the transition kernel converges quickly as the posterior distribution is explored and as the surrogate is refined; our theoretical analysis focuses on the specific case of a random-walk Metropolis algorithm coupled with local quadratic approximations of the log-posterior density.
Our arguments are not limited to the random-walk Metropolis algorithm, however; they apply quite broadly and can be adapted to many other Metropolis-Hastings algorithms and local approximation schemes. Broadly, our theoretical results reinforce the notion that \emph{it is possible to greatly reduce the number of evaluations of the forward model per MCMC step  when the likelihood has some local regularity.} We complement the theory by demonstrating experimental performance improvements of up to several orders of magnitude on inference problems involving ordinary differential equation and partial differential equation forward models, with no discernable loss in accuracy, using several different MCMC algorithms and local approximation schemes.

%
We note that our theoretical results are asymptotic in nature; in this paper, we do not focus on finite-time error bounds. While we can comment on such bounds in a few specific settings, obtaining more general quantitative estimates for the finite-time bias of the algorithm is a significant challenge and will be tackled elsewhere. Nevertheless, we argue that asymptotic convergence is quite useful for practitioners, as it supports how the algorithm is actually applied.
Since the aim of our approach is to reduce the use of the forward model, it is natural to ask how many model runs would be necessary to construct an MCMC chain that yields estimates with a certain error. We cannot \textit{a priori} answer this question, just as we cannot (in general) say in advance how long it will take any other MCMC algorithm to reach stationarity. Yet asymptotic convergence makes our algorithm comparable to standard MCMC algorithms in practice: iterations continue until MCMC diagnostics suggest that the chain, and hence the underlying approximation, is sufficiently converged for the application. The cost of running the forward model is accumulated incrementally as the MCMC chain is extended, in a way that balances the error of the finite chain with the error introduced by the approximation. Moreover, this process may be interrupted at any time. This approach to posterior sampling stands in contrast with existing non-convergent methods, where the cost of constructing the approximation is incurred \textit{before} performing inference, and where the user must carefully balance the error induced by the approximation with the MCMC sampling error, without any rigorous strategy for doing so.

The remainder of this paper is organized as follows. We describe the new MCMC approach in Section \ref{sec:algorithm}. Theoretical results on asymptotically exact sampling are provided in Section \ref{sec:theory}; proofs of these theorems are deferred to Appendix~\ref{sec:theoryappendix}. Section \ref{sec:results} then provides empirical assessments of performance in several examples. We emphasize that, while the examples demonstrate strong computational performance, the present implementation is merely a representative of a class of asymptotically exact MCMC algorithms. Therefore, Section \ref{sec:discussion} discusses several variations on the core algorithm that may be pursued in future work. A reusable implementation of the algorithm described is available as part of the MIT Uncertainty Quantification Library, \url{https://bitbucket.org/mituq/muq/}.

\section{Metropolis-Hastings with local approximations}
\label{sec:algorithm}

This section describes our framework for Metropolis-Hastings algorithms based on local approximations, which incrementally and infinitely refine an approximation of the forward model or likelihood as inference is performed. 

\subsection{Algorithm overview}
Consider a Bayesian inference problem with posterior density
\begin{equation*}
p(\theta|\mathbf{d} ) \propto \mathcal{L}(\theta | \mathbf{d}, \mathbf{f})p(\theta),
\end{equation*}
for inference parameters $\theta \in \Theta \subseteq \mathbb{R}^d$, data $\mathbf{d} \in \mathbb{R}^n$, forward model $\mathbf{f}: \Theta \rightarrow \mathbb{R}^n$, and probability densities specifying the prior $p(\theta)$ and likelihood function $\mathcal{L}$. The forward model may enter the likelihood function in various ways. For instance, if $\mathbf{d} = \mathbf{f}(\theta) + \eta$, where $\eta \sim p_{\eta}$ represents some measurement or model error, then $\mathcal{L}(\theta |  \mathbf{d}, \mathbf{f}) = p_{\eta} ( \mathbf{d} - \mathbf{f}(\theta) )$.

A standard approach is to explore this posterior with a Metropolis-Hastings algorithm using a suitable proposal kernel $L$, yielding the Metropolis-Hastings transition kernel $K_\infty(X_t,\cdot)$; existing MCMC theory governs the correctness and performance of this approach \citep{Roberts2004}.
For simplicity, assume that the kernel $L$ is translation-invariant and symmetric.\footnote{Assuming symmetry simplifies our discussion, but the generalization to non-symmetric proposals is straightforward. Extensions to translation-dependent kernels, \emph{e.g.,} the Metropolis-adjusted Langevin algorithm, are also possible \citep{Conrad2014a}.}
We assume that the forward model evaluation is computationally expensive---requiring, for example, a high-resolution numerical solution of a partial differential equation (PDE). Also assume that drawing a proposal is inexpensive, and that given the proposed parameters and the forward model evaluation, the prior density and likelihood are similarly inexpensive to evaluate, \emph{e.g.}, Gaussian. In such a setting, the computational cost of MCMC is dominated by the cost of forward model evaluations required by $K_\infty(X_t,\cdot)$.\footnote{Identifying the appropriate target for approximation is critical to the performance of our approach, and depends upon the relative dimensionality, regularity, and computational cost of the various components of the posterior model. In most settings, the forward model is a clear choice because it contributes most of the computational cost, while the prior and likelihood may be computed cheaply without further approximation. The algorithm presented here may be adjusted to accommodate other choices by merely relabeling the terms. For another discussion of this issue, see \cite{Bliznyuk2008}.} 

Previous work has explored strategies for replacing the forward model with some cheaper approximation, and a typical scheme works as follows \citep{Rasmussen2003, Bliznyuk2012, Marzouk2007}. Assume that one has a collection of model evaluations,  $\mathcal{S} := \{(\theta, \mathbf{f}(\theta))\}$, and a method for constructing an approximation $\tilde{\mathbf{f}}$ of $\mathbf{f}$ based on those examples. This approximation can be substituted into the computation of the Metropolis-Hastings acceptance probability. However, $\mathcal{S}$ is difficult to design in advance, so the algorithm is allowed to refine the approximation, as needed, by computing new forward model evaluations near the sample path and adding them to the growing sample set $\mathcal{S}_t$. 

%
Our approach, outlined in Algorithm \ref{alg:algSketch}, is in the same spirit as these previous efforts. Indeed, the sketch in Algorithm \ref{alg:algSketch} is sufficiently general to encompass both the previous efforts mentioned above and the present work. We write $K_{t}$ to describe the evolution of the sampling process at time $t$ in order to suggest the connection of our process with a time-inhomogeneous Markov chain; this connection is made explicit in Section \ref{sec:theory}. Intuitively, one can argue that this algorithm will produce accurate samples if $\tilde{\mathbf{f}}$ is close to $\mathbf{f}$, and that the algorithm will be efficient if the size of $\mathcal{S}_t$ is small and $\tilde{\mathbf{f}}$ is cheap to construct.

\algrenewcomment[1]{\hfill\makebox[0.33\linewidth][l]{\(\triangleright\) #1}}

\begin{algorithm}
\caption{Sketch of approximate Metropolis-Hastings algorithm}
\label{alg:algSketch}
\begin{algorithmic}[1]
\Procedure{\textsc{RunChain}}{$\theta_{1}, \mathcal{S}_{1}, \mathcal{L}, \mathbf{d}, p, \mathbf{f}, L, T$}
\For{$t = 1 \ldots T$}
	\State  $(\theta_{t+1}, \mathcal{S}_{t+1}) \gets K_t(\theta_t, \mathcal{S}_t, \mathcal{L}, \mathbf{d}, p, \mathbf{f}, L)$
\EndFor
\EndProcedure
\Statex
\Procedure{$K_t$}{$\theta^-, \mathcal{S},  \mathcal{L}, \mathbf{d}, p, \mathbf{f}, L$}
	\State Draw proposal $\theta^+ \sim L(\theta^-, \cdot)$ 

\State Compute approximate models $\tilde{\mathbf{f}}^+$ and $\tilde{\mathbf{f}}^-$, \label{alg:algSketch:construct} valid near $\theta^+$ and $\theta^-$
\State Compute acceptance probability \label{alg:algSketch:alpha}
 $\alpha \gets \min \left(1,\frac{\mathcal{L}(  \theta | \mathbf{d} ,\tilde{\mathbf{f}}^+)p(\theta^+)}{\mathcal{L}(\theta  | \mathbf{d}  ,\tilde
{\mathbf{f}}^-)p(\theta^-)} \right)$ 
\If{approximation needs refinement near $\theta^-$ or $\theta^+$}
\State Select new point $\theta^\ast$ and grow $\mathcal{S} \gets \mathcal{S} \cup (\theta^\ast, \mathbf{f}(\theta^\ast))$. Repeat from Line \ref{alg:algSketch:construct}.
\Else
\State Draw $u \sim \text{Uniform}(0,1)$. If $u < \alpha$, \textbf{return} $(\theta^+, \mathcal{S})$, else \textbf{return} $(\theta^-, \mathcal{S})$.
\EndIf
\EndProcedure

\end{algorithmic}
\end{algorithm}

Our implementation of this framework departs from previous work in two important ways. First, rather than using global approximations constructed from the entire sample set $\mathcal{S}_t$, we construct local approximations that use only a nearby subset of $\mathcal{S}_t$ for each evaluation of $\tilde{\mathbf{f}}$, as in LOESS \citep{Cleveland1979} or derivative-free optimization \citep{Conn2009}. Second, previous efforts usually halt the growth of $\mathcal{S}_t$ after a fixed number of refinements;\footnote{For example, \cite{Rasmussen2003} and \cite{Bliznyuk2012} only allow refinements until some fixed time $T_{\text{ref}} < T$, and polynomial chaos expansions are typically constructed in advance, omitting refinement entirely \citep{Marzouk2007}.}
instead, we allow an infinite number of refinements to occur as the MCMC chain proceeds. Figure \ref{fig:refinementCartoon} depicts how the sample set might evolve as the algorithm is run, becoming denser in regions of higher posterior probability, allowing the corresponding local approximations to use ever-smaller neighborhoods and thus to become increasingly accurate. Together, these two changes allow us to construct an MCMC chain that, under appropriate conditions, asymptotically samples from the exact posterior. Roughly, our theoretical arguments (in Section~\ref{sec:theory} and Appendix~\ref{sec:theoryappendix}) will show that refinements of the sample set $\mathcal{S}_t$ produce a convergent approximation $\tilde{\mathbf{f}}$ and hence that $K_t$ converges to the standard ``full model'' Metropolis kernel $K_\infty$ in such a way that the chain behaves as desired.
Obviously, we require that $\mathbf{f}$ be sufficiently regular for local approximations to converge. For example, when using local quadratic approximations, it is sufficient (but not necessary) for the Hessian of $\mathbf{f}$ to be Lipschitz continuous \citep{Conn2009}.

\begin{figure}
\centering
\subfloat[Early times.]{
\def\svgwidth{0.4\textwidth}
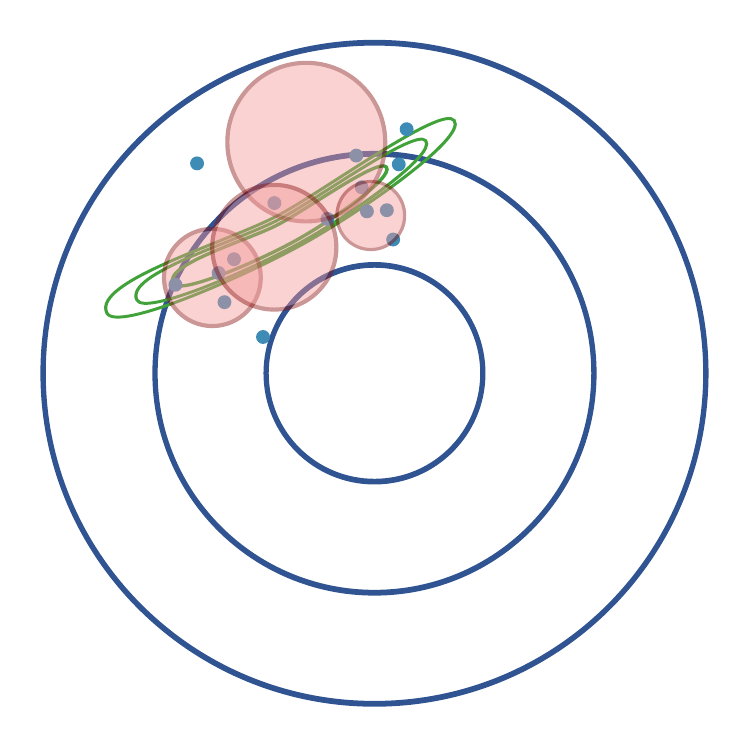}
\subfloat[Late times.]{
\def\svgwidth{0.4\textwidth}
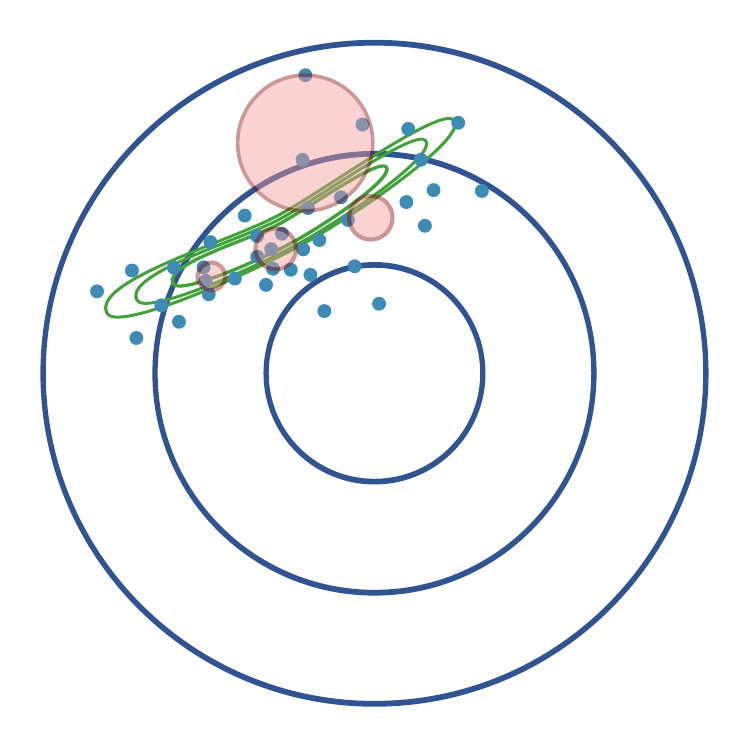}
\caption{Schematic of the behavior of local approximations as the algorithm proceeds on the example from Figure \ref{fig:expDesignCartoon}. The balls are centered at locations where local models might be needed and the radius indicates the size of the sample set; the accuracy of local models generally increases as this ball size shrinks. At early times the sample set is sparse and the local approximations are built over relatively large balls, implying that their accuracy is limited. At later times refinements enrich the sample set near regions of high posterior density, allowing the local models to shrink and become more accurate.}
\label{fig:refinementCartoon}
\end{figure}

The remainder of this section expands this outline into a usable algorithm, detailing how to construct the local approximations, when to perform refinement, and how to select new points to refine the approximations. Section \ref{s:polyapprox} describes how to construct local linear or quadratic models and outlines the convergence properties that make them useful.  Section \ref{sec:trigger} explains when to trigger refinement, either randomly or based on a cross validation error indicator. Section \ref{s:howtorefine} explains how to refine the approximations by evaluating the full model at a new point chosen using a space filling experimental design. Finally, Section \ref{sec:localgp} explains the changes required to substitute local Gaussian process approximations for polynomial approximations.

\subsection{Local polynomial approximation}
\label{s:polyapprox}
This section describes how to construct local  linear or quadratic models. We construct these models using samples from $\mathcal{S}$ drawn from a ball of radius $R$ centered on $\theta$, $\mathcal{B}(\theta,R):=  \left \{ \left (\theta_i,\mathbf{f}(\theta_i) \right ) \in \mathcal{S} : \| \theta_i - \theta\|_2 \leq R  \right \}$. If this set contains a sufficient number of samples, local polynomial models may easily be fit using least squares regression. We write the operators that produce such linear or quadratic approximations as $\mathcal{L}_{\mathcal{B}(\theta,R)}^{\sim j}$ or $\mathcal{Q}_{\mathcal{B}(\theta,R)}^{\sim j}$, respectively. The superscript $\sim \! j$, if non-empty, indicates that sample $j$ should be omitted; this option is used to support cross-validation error indicators, described below. 

It can be shown that the following error bounds hold independently for linear or quadratic approximations of each output component $i=1\ldots n$, for every point within the ball, $\theta^\prime :  \|\theta^\prime - \theta\|_2 \leq R$ \citep{Conn2009}, assuming that the gradient or Hessian of $\mathbf{f}$ is Lipschitz continuous, respectively: 
\begin{subequations}
\label{e:bounds}
\begin{eqnarray}
\left| {f_i}(\theta^\prime) - \left( \mathcal{L}^{\sim j}_{\mathcal{B}(\theta,R)}(\theta^\prime)\right)_i \right| &\leq  & \kappa_l(\nu_1, \lambda, d) R^2,\\
\left| {f_i}(\theta^\prime) - \left( \mathcal{Q}^{\sim j}_{\mathcal{B}(\theta,R)}(\theta^\prime)\right)_i \right| &\leq & \kappa_q(\nu_2, \lambda, d) R^3.
\end{eqnarray}
\end{subequations}
%
%
where the constants $\kappa$ are functions of the Lipschitz constants $\nu_1, \nu_2 < \infty$ of the gradient or Hessian of $\mathbf{f}$, respectively; a ``poisedness'' constant $\lambda$ reflecting the geometry of the input sample set; and the parameter dimension $d$. Intuitively, $\lambda$ is small if the points are well separated, fill the ball from which they are drawn, and do not lie near any linear or quadratic paths (for the linear and quadratic approximations, respectively). As long as $\lambda$ is held below some fixed finite value, the model is said to be $\lambda$-\emph{poised}, and these bounds show that the approximations converge as $R \to 0$.\footnote{Although \cite{Conn2009} explicitly compute and control the value of $\lambda$, this step is not necessary in practice for our algorithm. The geometric quality of our sample sets is generally good because of the experimental design procedure we use to construct them. Also, we are less sensitive to poor geometry because we perform regression, rather than interpolation, and because the cross validation procedure described below considers geometric quality and can trigger refinement as needed.} These simple but rigorous local error bounds form the foundation of our theoretical analysis, and are the reason that we begin with local polynomial approximations. Usefully, they are representative of the general case, in that most reasonable local models converge in some sense as the ball size falls to zero.

It remains to precisely specify the choice of radius, $R$, and the weights used in the least squares regression. The radius $R$ is selected to include a fixed number of points $N$. A linear model is fully defined by $N_{\text{def}}=d+1$ points and a quadratic is defined by $N_{\text{def}}=(d+1)(d+2)/2$ points; hence, performing a least squares regression requires at least this many samples. Such models are interpolating, but the associated least squares system is often poorly conditioned unless the geometry of the sample set is carefully designed. \cite{Conn2009} show that adding additional samples can only stabilize the regression problem, so we select $N = \sqrt{d}N_{\text{def}}$, which seems to work well in practice.\footnote{In very low dimensions, $\sqrt{d}$ provides very few extra samples and hence should be inflated. For $d=6$, in the numerical experiments below, this exact form is used.}

We depart from Conn \etal \citeyearpar{Conn2009} by performing a weighted regression using a variation of the tricube weight function often used with LOESS \citep{Cleveland1979}. If the radius that contains the inner $N_{\text{def}}$ samples is $R_{\text{def}}$, then $R > R_{\text{def}}$ and the weight of each sample is:
\begin{equation}
w_i = \begin{cases}
1 & \|\theta_i - \theta\|_2 \leq R_{\text{def}},\\
0 & \|\theta_i - \theta\|_2 > R,\\
\left(1- \left( \frac{ \|\theta_i - \theta\|_2 - R_{\text{def}}}{R-R_{\text{def}}} \right)^3 \right) ^3 & \text{else.}
\end{cases}
\label{eqn:tricubeWeights}
\end{equation}
Setting the inner points to have unity weight ensures that the regression is full rank, while subsequently decreasing the weights to zero puts less emphasis on more distant samples. An interesting side effect of using this weight function is that the global approximation $\tilde{\mathbf{f}}$ has two continuous derivatives, even though it is constructed independently at each point \citep{Atkeson1997}.

This process is described by the subroutine \textsc{LocApprox} in Algorithm \ref{alg:localApproximation}, which produces an approximation at $\theta$, using a fixed set of samples $\mathcal{S}$, optionally omitting sample $j$.  The pseudocode uses $\mathcal{A}_{\mathcal{B}(\theta,R)}^{\sim j}$ to represent either polynomial fitting algorithm. Appendix \ref{sec:polyDetails} describes the regression procedure and the numerical approach to the corresponding least squares problems in more detail. Multiple outputs are handled by constructing a separate approximation for each one. Fortunately, the expensive step of the least squares problem is identical for all the outputs, so the cost of constructing the approximation scales well with the number of observations.

\algrenewcomment[1]{\hfill\makebox[0.4\linewidth][l]{\(\triangleright\) #1}}

\begin{algorithm}
\caption{Construct local approximation}
\label{alg:localApproximation}
\begin{algorithmic}[1]

\Procedure{\textsc{LocApprox}}{$\theta, \mathcal{S}, j$}
\State Select $R$ so that $|\mathcal{B}(\theta,R)| = N$, where 
\Statex $\mathcal{B}(\theta,R) :=  \{(\theta_i,\mathbf{f}(\theta_i)) \in \mathcal{S} : \| \theta_i - \theta\|_2 \leq R\}$ \Comment{Select ball of points}
\State $\tilde{\mathbf{f}} \gets \mathcal{A}_{\mathcal{B}(\theta,R)}^{\sim j} $ \label{alg:algOverview:regress}\Comment{Local approximation as defined in }
\Statex \hfill\makebox[0.4\linewidth][l]{Section \ref{s:polyapprox}, possibly without sample $j$}
\State \textbf{return} $\tilde{\mathbf{f}}$
\EndProcedure

\end{algorithmic}
\end{algorithm}

\subsection{Triggering model refinement}
\label{sec:trigger}
We separate the model refinement portion of the algorithm into two stages. This section discusses \textit{when} refinement is needed, while Section~\ref{s:howtorefine} explains \textit{how} the refinement is performed. The MCMC step uses local approximations at both $\theta^+$ and $\theta^-$, and either are candidates for refinement. We choose a refinement criteria that is symmetric, that is, which behaves identically if the labels of $\theta^+$ and $\theta^-$ are reversed; by treating the two points equally, we aim to avoid adverse coupling with the decision of whether to accept a move. 

Refinement is triggered by either of two criteria. The first is random: with probability $\beta_t$, the model refined at either the current point $\theta^-$ or the proposed point $\theta^+$. This process fits naturally into MCMC and is essential to establishing the theoretical convergence results in the next section. The second criterion, based on a cross-validation error indicator, is intended to make the approximation algorithm efficient in practice. For a Metropolis-Hastings algorithm with a symmetric proposal, recall that the acceptance probability computed using the true forward model is
\begin{equation*}
\alpha = \min \left(1, \frac{\mathcal{L}(  \theta^+ | \mathbf{d},{\mathbf{f}})p(\theta^+)}{\mathcal{L}( \theta^- |\mathbf{d},{\mathbf{f}})p(\theta^-)} \right).
\end{equation*}
Since the acceptance probability is a scalar, and this equation is the only appearance of the forward model in the sampling algorithm, it is a natural target for an error indicator. We employ a leave-one-out cross validation strategy, computing the sensitivity of the acceptance probability to the omission of samples from each of the approximate models, producing scalar error indicators $\epsilon^+$ and $\epsilon^-$. Refinement is performed whenever one of these indicators exceed a threshold $\gamma_t$, at the point whose error indicator is larger. 

To construct the indicators, begin by computing the ratio inside the acceptance probability, using the full sample sets and variations leaving out each sample,  $j=1,\ldots, N$.
\begin{eqnarray*}
\zeta &:=& \frac{\mathcal{L}(\theta^+|\mathbf{d}, \textsc{LocApprox}(\theta^+, \mathcal{S}, \emptyset))p(\theta^+)}
{\mathcal{L}(\theta^-|\mathbf{d}, \textsc{LocApprox}(\theta^-, \mathcal{S}, \emptyset))p(\theta^-)} \\
\zeta^{+,\sim j} &:=& \frac{\mathcal{L}(\theta^+|\mathbf{d}, \textsc{LocApprox}(\theta^+, \mathcal{S}, j))p(\theta^+)}
{\mathcal{L}(\theta^-|\mathbf{d}, \textsc{LocApprox}(\theta^-, \mathcal{S}, \emptyset))p(\theta^-)}\\
\zeta^{-,\sim j} &:=& \frac{\mathcal{L}(\theta^+|\mathbf{d}, \textsc{LocApprox}(\theta^+, \mathcal{S}, \emptyset))p(\theta^+)}
{\mathcal{L}(\theta^-|\mathbf{d}, \textsc{LocApprox}(\theta^-, \mathcal{S}, j))p(\theta^-)}
\end{eqnarray*}
Next, find the maximum difference between the $\alpha$ computed using $\zeta$ and that computed using the leave-one-out variations $\zeta^{+,\sim j}$ and $\zeta^{-,\sim j}$. The error indicators consider the acceptance probability in both the forward and reverse directions, ensuring equivalent behavior under relabeling of $\theta^+$ and $\theta^-$; this prevents the cross validation process from having any impact on the reversibility of the transition kernel.
\begin{align}
\epsilon^+ &:=& \underset{j}{\max} \left( \bigg|\min \left(1, \zeta \right) - \min \left(1,\zeta^{+,\sim j}\right)\bigg| + \left|\min \left(1, \frac{1}{\zeta}\right) - \min \left(1,\frac{1}{\zeta^{+,\sim j}}\right)\right| \right) \label{eq:cvStart}\\
\epsilon^- &:=& \underset{j}{\max} \left( \bigg|\min \left(1, \zeta \right) - \min \left(1,\zeta^{-,\sim j}\right)\bigg| + \left|\min \left(1, \frac{1}{\zeta}\right) - \min \left(1,\frac{1}{\zeta^{-,\sim j}}\right)\right| \right) \label{eq:cvEnd}
\end{align}

We emphasize that the acceptance probability is a natural quantity of interest in this context; it captures the entire impact of the forward model and likelihood on the MH kernel. The cross-validation error indicator is easily computable, summarizes a variety of error sources, and is easily interpretable as an additive error in a probability. These features make it possible for the user to exercise a problem-independent understanding of the threshold to which it is compared, $\gamma_t$. In contrast, attempting to control the error in either the forward model outputs or log-likelihood at the current or proposed point is not generically feasible, as their scale and the sensitivity of the MH kernel to their perturbations cannot be known \textit{a priori}. 

Our two refinement criteria have different purposes, and both are useful to ensure a quick and accurate run. The cross validation criterion is a natural and efficient way to refine our estimates, and is the primary source of refinement during most runs. The random criterion is less efficient, but some random evaluations may be required for the algorithm to be asymptotically correct for all starting positions. Thus, we use both in combination. The two parameters $\beta_t$ and $\gamma_t$ are allowed to decrease over time, decreasing the rate of random refinement and increasing the stringency of the cross validation criterion; theory governing the rates at which they may decrease and guidance on choosing them in practice are discussed later.


\subsection{Refining the local model}
\label{s:howtorefine}
If refinement of the local model at a point $\theta$ is required, we perform refinement by selecting a single new nearby point $\theta^\ast$, computing $\mathbf{f}(\theta^\ast)$, and inserting the new pair into $\mathcal{S}$. To be useful, this new model evaluation should improve the sample set for the local model $\mathcal{B}(\theta,R)$, either by allowing the radius $R$ to decrease or by improving the local geometry of the sample set. Consider that MCMC will revisit much of the parameter space many times, hence our algorithm must ensure that local refinements maintain the global quality of the sample set, that is, the local quality at every nearby location.

Intuitively, local polynomial regression becomes ill-conditioned if the points do not fill the whole ball, or if some points are clustered much more tightly than others. The obvious strategy of simply adding $\theta$ to $\mathcal{S}$ is inadvisable because it often introduces tightly clustered points, inducing poorly conditioned regression problems. Instead, a straightforward and widely used type of experimental design is to choose points in a space-filling fashion; doing so near $\theta$ naturally fulfills our criteria. Specifically, we select the new point $\theta^\ast$ by finding a local maximizer of the problem: 
\begin{eqnarray*}
  \theta^\ast & = & \argmax_{\theta^\prime}  \, \: \min_{\theta_i \in \mathcal{S}} \| \theta^\prime - \theta_i \|_2,\\
   &	& \text{subject to } \| \theta^\prime - \theta \|_2 \leq R,
\end{eqnarray*}
where optimization iterations are initialized at $\theta^\prime=\theta$. The constraint ensures that the new sample lies in the ball and thus can be used to improve the current model, and the inner minimization operator finds a point well separated from the entire set $\mathcal{S}$ in order to ensure the sample's global quality. Inspection of the constraints reveals that the inner minimization may be simplified to $\theta_i \in \mathcal{B}(\theta, 3R)$, as points outside a ball of radius $3R$ have no impact on the optimization. We seek a local optimum of the objective because it is both far easier to find than the global optimum, and is more likely to be useful: the global optimum will often be at radius $R$, meaning that the revised model cannot be built over a smaller ball. This strategy is summarized in Algorithm \ref{alg:spaceFill}.

\algrenewcomment[1]{\hfill\makebox[0.33\linewidth][l]{\(\triangleright\) #1}}

\begin{algorithm}
\caption{Refine a local approximation}
\label{alg:spaceFill}
\begin{algorithmic}[1]
\Statex
\Procedure{\textsc{RefineNear}}{$\theta, \mathcal{S}$}
\State Select $R$ so that $|\mathcal{B}(\theta,R)| = N$ \Comment{Select ball of points}
\State $\theta^\ast \gets \argmax_{\|\theta^\prime - \theta\| \leq R} \min_{\theta_i \in \mathcal{S}} \| \theta^\prime - \theta_i\| $\Comment{Optimize near $\theta$}
\State $\mathcal{S} \gets \mathcal{S} \cup \{\theta^\ast, \mathbf{f}(\theta^\ast) \}$ \Comment{Grow the sample set}
\State \textbf{return} $\mathcal{S}$
\EndProcedure

\end{algorithmic}
\end{algorithm}

Although there is a close relationship between the set of samples where the forward model is evaluated and the posterior samples that are produced by MCMC, they are distinct and in general the two sets do not overlap. A potential limitation of the space filling approach above is that it might select points outside the support of the prior. This is problematic only if the model is not feasible outside the prior, in which case additional constraints can easily be added.

\subsection{Local Gaussian process surrogates}
\label{sec:localgp}
Gaussian process (GP) regression underlies an important and widely used class of computer model surrogates, so it is natural to consider its application in the present local approximation framework \cite{Sacks1989, Santner2003}. Local Gaussian processes have been previously explored in \citep{Vecchia1988, Cressie1991, Stein2004, Snelson2007, Gramacy2013}. This section explains how local Gaussian process approximations may be substituted for the polynomial approximations described above.

The adaptation is quite simple: we define a new approximation operator $\mathcal{G}_{\mathcal{S}}^{\sim j}$ that may be substituted for the abstract operator $\mathcal{A}_{\mathcal{B}(\theta,R)}^{\sim j}$ in Algorithm \ref{alg:localApproximation}. The error indicators are computed much as before, except that we use the predictive distribution $\tilde{\mathbf{f}}(\theta) \sim \mathcal{N}(\mu(\theta), {{\sigma}}^2(\theta))$ instead of a leave-one-out procedure. We define $\mathcal{G}_{\mathcal{S}}^{\sim j}$ to be the mean of the local Gaussian process, $\mu(\theta)$, when $j = \emptyset$, and a draw from the Gaussian predictive distribution otherwise. This definition allows us to compute $\epsilon^+$ and $\epsilon^-$ without further modification, using the posterior distribution naturally produced by GP regression.

Our implementation of GPs borrows heavily from \cite{Gramacy2013}, using a separable squared exponential covariance kernel (i.e., with a different correlation length $\ell_i$ for each input dimension) and an empirical Bayes approach to choosing the kernel hyperparameters, i.e., using optimization to find the mode of the appropriate posterior marginals. The variance is endowed with an inverse-gamma hyperprior and a MAP estimate is found analytically, while the correlation lengths and nugget are endowed with gamma hyperpriors whose product with the marginal likelihood is maximized numerically. Instead of constructing the GP only from nearest neighbors $\mathcal{B}(\theta,R)$, we use a subset of $\mathcal{S}$ that mostly lies near the point of interest but also includes a few samples further away. This combination is known to improve surrogate quality over a pure nearest-neighbor strategy \citep{Gramacy2013}. We perform a simple approximation of the strategy developed by Gramacy and Apley: beginning with a small number of the nearest points, we estimate the hyperparameters and then randomly select more neighbors to introduce into the set, where the existing samples are weighted by their distance under the norm induced by the current length scales. This process is repeated in several batches, until the desired number of samples is reached. We are relatively unconstrained in choosing the number of samples $N$; in the numerical examples to be shown later, we choose $N=d^{5/2}$, mimicking the choice for quadratic approximations. Multiple outputs are handled with separate predictive distributions, but the hyperparameters are jointly optimized.\footnote{Choosing an optimal number of samples is generally challenging, and we do not claim that this choice of $N$ is the most efficient. Rather, it is the same scaling that we use for local quadratic approximations, and appears to work well for GP approximation in the range where we have applied it. For very low $d$, however, this $N$ may need to be increased.}

\subsection{Algorithm summary}
Our Metropolis-Hastings approach using local approximations is summarized in Algorithm \ref{alg:algOverview}. The algorithm proceeds in much the same way as the sketch provided in Algorithm \ref{alg:algSketch}. It is general enough to describe both local polynomial and Gaussian process approximations, and calls several routines developed in previous sections. The chain is constructed by repeatedly constructing a new state with $K_t$.\footnote{Before MCMC begins, $\mathcal{S}_1$ needs to be seeded with a sufficient number of samples for the first run. Two simple strategies are to draw these samples from the prior, or else near the MCMC starting point, which is often the posterior mode as found by optimization.} This function first draws a proposal and forms the approximate acceptance probability. Then error indicators are computed and refinement is performed as needed, until finally the proposal is accepted or rejected.

\algrenewcomment[1]{\hfill\makebox[0.33\linewidth][l]{\(\triangleright\) #1}}

\begin{algorithm}
\caption{Metropolis-Hastings with local approximations}
\label{alg:algOverview}
\begin{algorithmic}[1]

\Procedure{\textsc{RunChain}}{$\mathbf{f}, L, \theta_{1}, \mathcal{S}_{1}, \mathcal{L}, \mathbf{d}, p, T, \{ \beta_t \}_{t=1}^{T}, \{ \gamma_t \}_{t=1}^{T}$}
\For{$t = 1 \ldots T$}
	\State  $(\theta_{t+1}, \mathcal{S}_{t+1}) \gets K_t(\theta_t, \mathcal{S}_t, \mathcal{L}, \mathbf{d}, p, \mathbf{f}, L, \beta_t, \gamma_t)$
\EndFor
\EndProcedure

\Statex

\Procedure{$K_t$}{$\theta^-, \mathcal{S},  \mathcal{L}, \mathbf{d}, p, \mathbf{f}, L,\beta_t, \gamma_t$}
	\State Draw proposal $\theta^+ \sim L(\theta^-, \cdot)$ 

	\State $\tilde{\mathbf{f}}^+ \gets \textsc{LocApprox}(\theta^+, \mathcal{S}, \emptyset)$  \label{alg:algOverview:fplus} \Comment{Compute nominal approximations}
	\State $\tilde{\mathbf{f}}^- \gets \textsc{LocApprox}(\theta^-, \mathcal{S}, \emptyset)$ \label{alg:algOverview:fminus} 
	\State $\alpha \gets \min \left(1,\frac{\mathcal{L}(  \theta | \mathbf{d} ,\tilde{\mathbf{f}}^+)p(\theta^+)}{\mathcal{L}(\theta  | \mathbf{d}  ,\tilde
{\mathbf{f}}^-)p(\theta^-)} \right)$ \Comment{Compute nominal acceptance ratio} \label{alg:algOverview:errorStart} 
\State Compute $\epsilon^+$ and $\epsilon^-$ as in Equations \ref{eq:cvStart}-\ref{eq:cvEnd}.

\If{$u \sim \text{Uniform}(0,1) < \beta_t$} \Comment{Refine with probability $\beta_t$}
	\State Randomly, $\mathcal{S} \gets \textsc{RefineNear}(\theta^+, \mathcal{S})$ or $\mathcal{S} \gets \textsc{RefineNear}(\theta^-, \mathcal{S})$
\ElsIf{$\epsilon^+ \geq \epsilon^-$ and $ \epsilon^+ \geq \gamma_t$} \Comment{If needed, refine near the larger error}  \label{alg:algOverview:refineStart}
\State $\mathcal{S} \gets \textsc{RefineNear}(\theta^+, \mathcal{S})$
\ElsIf{$\epsilon^- > \epsilon^+$ and $ \epsilon^- \geq \gamma_t$}
\State $\mathcal{S} \gets \textsc{RefineNear}(\theta^-, \mathcal{S})$

	\EndIf
	\If{refinement occured} repeat from Line \ref{alg:algOverview:fplus}.
\Else \Comment{Evolve chain using approximations}
\State Draw $u \sim \text{Uniform}(0,1)$. If $u < \alpha$, \textbf{return} $(\theta^+, \mathcal{S})$, else \textbf{return} $(\theta^-, \mathcal{S})$.
\EndIf
\EndProcedure

\end{algorithmic}
\end{algorithm}

\section{Theoretical results}
\label{sec:theory}
In this section we show that, under appropriate conditions, the following slightly modified  version of Algorithm \ref{alg:algOverview} converges to the target posterior $p(\theta|\d)$ asymptotically:
\begin{enumerate}
\item The sequence of parameters $\{ \beta_{t}\}_{t \in \mathbb{N}}$ used in that algorithm are of the form $\beta_{t} \equiv \beta > 0$. Our results hold with essentially the same proof if we use any sequence $\{ \beta_{t} \}_{t \in \mathbb{N}}$ that satisfies $\sum_{t} \beta_{t} = \infty$. Example \ref{ExDecRateBeta} in Appendix B shows that this is sharp: if $\sum_{t} \beta_{t} < \infty$, the algorithm can have a positive probability of failing to converge asymptotically, regardless of the sequence $\{ \gamma_{t} \}_{t \in \mathbb{N}}$.
\item The approximation of $\log p(\theta | \d)$ is made via quadratic interpolation on the $N = N_{\defi}$ nearest points.
We believe this to be a representative instantiation of the algorithm; similar results can be proved for other approximations of the likelihood function. 

\item The sub-algorithm \textsc{RefineNear} is replaced with:
\be 
\textsc{RefineNear}(\theta, \mathcal{S}) = \textbf{return}( \mathcal{S} \cup \{ (\theta, f(\theta)) \} ).
\ee 
This assumption substantially simplifies and shortens our argument, without substantially impacting the algorithm.
\item We fix a constant $0 < \lambda < 1$. In step 14, immediately before the word \textbf{then}, we add `\textbf{or}, for $\mathcal{B}(\theta^{+}, R)$ as defined in the subalgorithm $\textsc{LocApprox}(\theta^{+}, \mathcal{S}, \emptyset)$ used in step 8, the collection of points $\mathcal{B}(\theta^{+}, R) \cap \mathcal{S}$ is not $\lambda$-poised'. We add the same check, with $\theta^{-}$ replacing $\theta^{+}$ and `step 9' replacing `step 8', in step 16. The concept of poisedness is defined in \citep{Conn2009}, but the details are not required to read this proof. This additional check is needed for our approximate algorithm to `inherit' a one-step drift condition from the `true' algorithm. Empirically, we have found that this check rarely triggers refinement for sensible values of $\lambda$. 

\end{enumerate}

\subsection{Assumptions}

We now make some general assumptions and fix notation that will hold throughout this section and in Appendix~\ref{sec:theoryappendix}.  Denote by $\{ X_{t} \}_{t \in \mathbb{N}}$ a version of the stochastic process on $\Theta \subset \mathbb{R}^{d}$ defined by this modified version of Algorithm \ref{alg:algOverview}. Let $L(x,\cdot)$ be the kernel on $\mathbb{R}^{d}$ used to generate new proposals in Algorithm \ref{alg:algOverview}, $\ell(x,y)$ denote its density, and  $L_{t}$ be the point proposed at time $t$ in  Algorithm \ref{alg:algOverview}. Let $K_{\infty}(x,\cdot)$ be the MH kernel associated with proposal kernel $L$ and target distribution $p(\theta | \d)$. Assume that, for all measurable $A \subset \Theta$, we can write $K_{\infty}(x,A) = r(x) \delta_{x}(A) + (1 - r(x)) \int_{y \in A} p(x,y)dy$ for some $0 \leq r(x) \leq 1$ and density $p(x,y)$. Also assume that $L(x,\cdot)$ satisfies 
\be \label{eqn:Symrw}
L(x,S) = L(x+y, S+y)
\ee 
for all points $x,y \in \Theta$ and all measurable sets $S \subset \Theta$.  \par

Denote by $\mathcal{S}_{t}$ the collection of points in $\mathcal{S}$ from Algorithm \ref{alg:algOverview} at time $t$, denote by $R = R_{t}$ the value of $R_{\defi}$ at time $t$, and denote by $q_{t}^{1}, \ldots, q_{t}^{N}$ the points in $\mathcal{S}_{t}$ within distance $R_{t}$ of $X_{t}$. 

We define the \textit{Gaussian envelope condition}:
\begin{ass} \label{ass:GE}
There exists some positive definite matrix $[a_{ij}]$ and constant $0 < G < \infty$ so that the distribution
\be 
\log{p_{\infty}}(\theta_1,\theta_2, \ldots, \theta_d) = -\sum_{1 \leq i \leq j \leq d} a_{ij} \theta_{i} \theta_{j}
\ee  
satisfies
\be \label{EqGaussEnv}
\lim_{r \rightarrow \infty} \sup_{\| \theta \| \geq r} | \log p(\theta | \d) - \log{p_{\infty}(\theta)} | < G.
\ee
\end{ass}
For $\theta \in \Theta$, define the \emph{Lyapunov function}
\be \label{eqn:Vx}
V(\theta) = \frac{1}{\sqrt{p_{\infty}(\theta)}}.
\ee 
\begin{ass} \label{ass:geotarg} The proposal kernel $L$ and the density $p_{\infty}(\theta)$ satisfy the following:
\begin{enumerate} \label{assump:RoTw}
\item For all compact sets $\mathcal{A}$, there exists $\epsilon = \epsilon(\mathcal{A})$ so that $\inf_{y \in \mathcal{A}} \ell(0,y) \geq \epsilon > 0$.
\item There exist constants $C, \epsilon_0, x_{0} \geq 0$ so that $\ell(0, x) \leq C p_{\infty}(x)^{\frac{1}{1+\epsilon_0}}$ for all $\| x \| \geq x_{0}$.
\item The Metropolis-Hastings Markov chain $Z_t$ with proposal kernel $L$ and stationary density $p_\infty$ satisfies the drift condition
\be 
\E[V(Z_{t+1}) | Z_{t} = x] \leq \alpha V(x) + b
\ee 
for some $0 \leq \alpha < 1$ and some $0 \leq b < \infty$.
\end{enumerate}
\end{ass}

Before giving the main result, we briefly discuss the assumptions above.
\begin{enumerate}
\item Assumption \ref{ass:GE} is quite strong. It is chosen as a representative sufficient condition for convergence of our algorithm on unbounded state spaces primarily because it is quite easy to state and to check. The assumption is used only to guarantee that our approximation of the usual MH chain inherits a drift condition (\textit{i.e.} so that Lemma \ref{LemmaInfDrift} of Appendix B holds), and may be replaced by other assumptions that provide such a guarantee. We give some relevant alternative assumptions at the end of Appendix B. In particular, instead of Assumption \ref{ass:GE}, if we assume that the posterior $p(\theta| \d)$ has sub-Gaussian tails with bounded first and second derivatives, our methods can be reworked to show the ergodicity of a slight modification of Algorithm \ref{alg:algOverview}.  

Although Assumption \ref{ass:GE} is very strong, it does hold for one important class of distributions: mixtures of Gaussians for which one mixture component has the largest variance. That is, the condition holds if $p(\theta | \d)$ is of the form $\sum_{i=1}^{k} \alpha_{i} \mathcal{N}(\mu_{i}, \Sigma_{i})$ for some weights $\sum_{i=1}^{k} \alpha_{i} = 1$, some means $\{ \mu_{i} \}_{i=1}^{k} \in \mathbb{R}^d$, and some $d \times d$ covariance matrices $\{ \Sigma_{i} \}_{i=1}^{k}$ that satisfy $v^{\top} \Sigma_{1} v > v^{\top} \Sigma_{i} v$ for all $0 \neq v \in \mathbb{R}^{d}$ and all $i \neq 1$. 
\item Assumption \ref{ass:geotarg} holds for a very large class of commonly used Metropolis-Hastings algorithms (see, \textit{e.g.}, \cite{RoTw96} for sufficient conditions for item 3 of Assumption \ref{ass:geotarg}.) 
\end{enumerate}

\subsection{Ergodicity}

Here we state our main theorems on the convergence of the version of Algorithm \ref{alg:algOverview} introduced in this section. Proofs are given in Appendix~\ref{sec:theoryappendix}.

\begin{theorem}\label{ThmErgGaussEnv}
 Suppose Assumption \ref{ass:geotarg} holds. There exists some $G_{0} = G_{0}(L,p_{\infty}, \lambda,N)$ so that if assumption  \ref{ass:GE} holds with $0 < G < G_{0} < \infty$, then for any starting point $X_{0} = x \in \Theta$, we have
\be
\lim_{t \rightarrow \infty} \|\mathcal{L}(X_{t}) - p(\theta | \d) \|_{\TV} = 0.
\ee
\end{theorem}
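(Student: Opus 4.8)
The strategy is to realize $\{X_t\}$ as an adaptive (time-inhomogeneous) Markov chain and apply the now-standard toolkit for ergodicity of such chains --- containment plus diminishing adaptation (in the style of Roberts--Rosenthal) --- but with the twist that the ``adaptation parameter'' here is the growing sample set $\mathcal{S}_t$, which determines the local approximation $\tilde{\mathbf f}$ and hence the kernel $K_t$. First I would set up the state space so that the pair $(X_t, \mathcal{S}_t)$ is Markov, record that $K_t$ is a genuine Metropolis--Hastings kernel for the \emph{approximate} acceptance ratio built from $\tilde{\mathbf f}$ (so $K_t$ is $p$-reversible only in the limit, but always a valid transition kernel), and note that every refinement step leaves $X_t$ unchanged and only enlarges $\mathcal{S}_t$. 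The target $K_\infty$ is the exact MH kernel, which under Assumption~\ref{ass:geotarg}(3) --- via the Lyapunov function $V=1/\sqrt{p_\infty}$ and the envelope bound \eqref{EqGaussEnv} --- satisfies a geometric drift condition toward a small set; this is the role of the hypothesis $G<G_0$, and Lemma~\ref{LemmaInfDrift} (cited in the excerpt) is presumably exactly this statement.

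The two things I would then need to establish are: (i) \emph{a uniform-in-$t$ drift/minorization for the family $\{K_t\}$}, giving ``containment'' (tightness of the chain, non-escape to infinity, so that with high probability $X_t$ stays in a large compact set where the approximation is controlled); and (ii) \emph{diminishing adaptation}: $\sup_x \|K_t(x,\cdot) - K_{t+1}(x,\cdot)\|_{\TV} \to 0$ in probability. For (i), the key is that the error bounds \eqref{e:bounds} --- here the quadratic bound $\kappa_q R^3$ --- let the approximate log-posterior be made uniformly close to the true one once the local radius $R=R_t$ is small, and the drift condition for $K_\infty$ then transfers to $K_t$ with a slightly worse constant; the added $\lambda$-poisedness check (modification 4 in the excerpt) is what guarantees the regression is well-conditioned enough for \eqref{e:bounds} to apply, so the approximate chain ``inherits'' the one-step drift. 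The extra Gaussian-envelope slack $G$ must be small enough that this inherited drift still has contraction coefficient $<1$: that is precisely where $G_0=G_0(L,p_\infty,\lambda,N)$ comes from. For (ii), the mechanism is that $\mathcal{S}_t$ grows by at most one point per step, and a single new sample changes the ball $\mathcal{B}(\theta,R)$, hence $\tilde{\mathbf f}$, hence $\alpha$, by an amount that vanishes as the set densifies; combined with the fact that on any compact set the sample set eventually becomes dense (forced by the persistent random refinement, since $\sum_t \beta_t=\infty$ and Borel--Cantelli-type arguments guarantee infinitely many refinements near any recurrently-visited region), the per-step kernel perturbation tends to zero. The cross-validation refinements only help and never hurt this.

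The main obstacle I anticipate is \textbf{the coupling between the adaptation and the chain's own dynamics} --- specifically, showing the sample set becomes dense in the regions the chain actually visits, uniformly enough to drive diminishing adaptation, \emph{without} circular reasoning. One must argue that the chain returns to a compact ``core'' set infinitely often (from the drift condition, which itself needs the approximation to be decent, which needs density of $\mathcal{S}_t$...), and break the circularity by first using only the crude a priori bound that $R_t$ is bounded (finitely many samples, bounded region visited with high probability) to get a weak drift, bootstrap to recurrence of a compact set, then use persistent random refinement on that set to get $R_t\to 0$ there, and finally feed the sharp bound \eqref{e:bounds} back in. Handling the unbounded state space $\Theta\subseteq\mathbb R^d$ is where Assumption~\ref{ass:GE} does its work: it ensures that even in the tails, where the local approximation may be poor, the true and approximate kernels both drift inward strongly enough (the envelope controls how bad the approximate acceptance ratio can be relative to a pure Gaussian), so the chain cannot escape to infinity before refinement catches up. I would also need the technical point that $V$-convergence (from drift) upgrades to total-variation convergence, which is routine once containment and diminishing adaptation hold and $K_\infty$ is $V$-uniformly ergodic.
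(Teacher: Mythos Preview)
Your high-level strategy---treat $\{X_t\}$ as an adaptive chain with adaptation parameter $\mathcal{S}_t$, verify a drift/containment condition together with some form of kernel convergence, then invoke an adaptive-MCMC ergodicity theorem---is essentially the paper's strategy. The paper uses its own book-keeping result (Theorem~\ref{CorMainProbIneq}) rather than the literal Roberts--Rosenthal diminishing-adaptation framework: instead of requiring $\sup_x\|K_t(x,\cdot)-K_{t+1}(x,\cdot)\|_{\TV}\to 0$, it requires that after an almost-surely finite stopping time the kernels $\tilde K_t$ are uniformly $\delta$-close to $K_\infty$ on sublevel sets of $V$. This is a cosmetic difference; either framework would work once the substantive lemmas are in place.

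There is, however, a real gap in your mechanism for the drift condition in the tails. You propose to get a ``weak drift'' from the crude a priori bound that $R_t$ is bounded, invoking the error bounds~\eqref{e:bounds}. But those bounds scale like $\kappa_q R_t^3$ with $\kappa_q$ depending on the Lipschitz constant of the Hessian of $\log p(\theta|\d)$ over the ball, and give you nothing useful when $R_t$ is large---which it will be in the tails, where $\mathcal{S}_t$ is sparse. A bounded radius does not yield a bounded approximation error, so you cannot bootstrap drift this way; your circularity-breaking step fails as written.

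The paper's actual mechanism is quite different and does not use~\eqref{e:bounds} in the tails at all. The Gaussian envelope condition says $\log p(\theta|\d)$ is within $G$ of the fixed quadratic $\log p_\infty$ for $\|\theta\|$ large. Quadratic interpolation of a quadratic is \emph{exact}, so as soon as the $N$ interpolation nodes $q_t^{(1)},\ldots,q_t^{(N)}$ used at $L_t$ are themselves all in the tail region (where the envelope bound applies), the resulting approximation is automatically within $O(\lambda N G)$ of $\log p_\infty(L_t)$, regardless of how large $R_t$ is---this is Lemma~\ref{LemmAppAtInf}. The nontrivial prerequisite is Lemma~\ref{LemInfApprIgnoreZero}, which shows via a covering/Borel--Cantelli argument that eventually, whenever $\|L_t\|$ is large, the ball $\mathcal{B}(L_t,R_t)$ does not reach back near the origin. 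Only after this do you get the inherited drift (Lemma~\ref{LemmaInfDrift}), recurrence to a compact set (Lemma~\ref{LemmaInfManyRets}), and finally density of $\mathcal{S}_t$ on that compact set via the $\sum_t\beta_t=\infty$ refinement (Lemma~\ref{LemGridRefComp2}). Your outline has the right dependency structure but the wrong first rung of the ladder: the tail control comes from ``quadratic fit to a near-quadratic target is near-exact,'' not from small radius.
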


If we assume that $\Theta$ is compact, the same conclusion holds under much weaker assumptions: 

\begin{theorem}  \label{ThmCompSup}
Suppose $\Theta$ is compact and that both $p(\theta | \d)$ and $\ell(x,y)$ are bounded away from 0 and infinity. Then
\be
\lim_{t \rightarrow \infty} \| \mathcal{L}(X_{t}) - p(\theta | \d) \|_{\TV} = 0.
\ee
\end{theorem}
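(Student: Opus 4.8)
The plan is to regard $(X_{t},\mathcal S_{t})$ as a Markov chain on $\Theta\times(\text{finite subsets of }\Theta)$ whose $\Theta$-marginal transition is the (history-dependent) approximate kernel $K_{t}$, and to establish three facts that are all easy once $\Theta$ is compact and the densities are pinched between two positive constants: (i) the exact Metropolis kernel $K_{\infty}$ is uniformly ergodic; (ii) \emph{every} $K_{t}$, for \emph{every} realization of $\mathcal S_{t}$, satisfies the same uniform minorization, so that the $X_{t}$-chain is uniformly recurrent irrespective of how bad the current surrogate is; and (iii) the surrogate nevertheless becomes globally accurate, so $K_{t}\to K_{\infty}$. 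Point (ii) is what breaks the apparent circularity of the problem: the chain's exploration depends on the surrogate, while refinement of the surrogate depends on the exploration. None of the Lyapunov/drift machinery of Assumptions \ref{ass:GE}--\ref{ass:geotarg} (needed for the unbounded case) is used here.

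\emph{Uniform ergodicity of $K_\infty$, and uniform minorization of $K_t$.} Since $\Theta$ is compact with finite Lebesgue measure, $\ell\ge\ell_{\min}>0$, and $p(\theta\mid\d)$ is bounded above and below, for the symmetric proposal one has $K_{\infty}(x,A)\ge\int_{A}\ell(x,y)\min\!\big\{1,\tfrac{p(y\mid\d)}{p(x\mid\d)}\big\}\,dy\ge\ell_{\min}\tfrac{\inf p}{\sup p}\,\mathrm{Leb}(A)$, i.e.\ $K_{\infty}(x,\cdot)\ge\epsilon_{\infty}\,\nu(\cdot)$ with $\nu$ the normalized Lebesgue measure on $\Theta$ and $\epsilon_{\infty}>0$; hence $\|\mu K_{\infty}^{\,n}-p(\cdot\mid\d)\|_{\TV}\le(1-\epsilon_{\infty})^{n}$ for every probability measure $\mu$. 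For $K_{t}$ one first checks that the inner refinement loop terminates a.s., so $K_{t}$ is a well-defined kernel; this uses that proposals come from a density (so each \textsc{RefineNear} genuinely enlarges $\mathcal S$) together with the $\lambda$-poisedness check of modification~4. When the loop stops, the active balls $\mathcal B(\theta^{\pm},R^{\pm})$ are $\lambda$-poised with $R^{\pm}\le\mathrm{diam}\,\Theta$, so the bounds \eqref{e:bounds} give $|\log p(\theta'\mid\d)-\mathcal Q(\theta')|\le\kappa_{q}(\nu_{2},\lambda,d)\,(\mathrm{diam}\,\Theta)^{3}=:M<\infty$ --- a bound \emph{uniform} in $\mathcal S$. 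Consequently the approximate acceptance ratio is bounded below by a positive constant, and the computation above gives $K_{t}\big((x,\mathcal S),\cdot\big)\ge\epsilon_{0}\,\nu(\cdot)$ for some $\epsilon_{0}>0$ independent of $t,x,\mathcal S$; in particular $X_{t}$ a.s.\ visits every set of positive $\nu$-measure infinitely often.

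\emph{The surrogate fills in, and the conclusion.} Combining the recurrence just established with $\beta_{t}\equiv\beta$ (more generally $\sum_{t}\beta_{t}=\infty$) and a conditional Borel--Cantelli argument, for every open $U\subseteq\Theta$ the chain performs a refinement at a point of $U$ infinitely often, so $\mathcal S_{t}$ is a.s.\ dense in $\Theta$. Let $R_{t}(x)$ be the radius of the smallest ball about $x$ containing $N_{\defi}$ points of $\mathcal S_{t}$: it is continuous in $x$, nonincreasing in $t$ (as $\mathcal S_{t}$ only grows), and tends to $0$ pointwise a.s.; by Dini's theorem and compactness of $\Theta$ the convergence is uniform, $\sup_{x\in\Theta}R_{t}(x)\to0$ a.s.\ (a parallel genericity argument, using that proposal draws are non-degenerate, shows the active balls are eventually $\lambda$-poised). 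Feeding $\sup_{x}R_{t}(x)\to0$ into \eqref{e:bounds} and the boundedness and continuity of $\mathcal L(\cdot\mid\d,\cdot)$ yields $\tilde d_{t}:=\sup_{x\in\Theta}\|K_{t}((x,\mathcal S_{t}),\cdot)-K_{\infty}(x,\cdot)\|_{\TV}\to0$ a.s., and $\tilde d_{t}\le2$, so $\E[\tilde d_{t}]\to0$. Writing $\mu_{t}:=\mathcal L(X_{t})$ and $K_{t}((x,\mathcal S),\cdot)=K_{\infty}(x,\cdot)+E_{t}((x,\mathcal S),\cdot)$, one gets $\mu_{t+1}=\mu_{t}K_{\infty}+\rho_{t}$ with $\rho_{t}(\Theta)=0$ and $\|\rho_{t}\|_{\TV}\le\E[\tilde d_{t}]$; iterating and using the contraction of $K_{\infty}$ gives $\|\mu_{t+n}-p(\cdot\mid\d)\|_{\TV}\le(1-\epsilon_{\infty})^{n}+\epsilon_{\infty}^{-1}\sup_{j\ge t}\E[\tilde d_{j}]$, and letting $n\to\infty$ then $t\to\infty$ finishes the proof. (Equivalently, couple $(X_{t},\mathcal S_{t})$ with an exact $K_{\infty}$-chain started from $p(\cdot\mid\d)$, forcing coalescence via the common minorization $\epsilon_{0}\wedge\epsilon_{\infty}$ and staying coalesced with probability $\ge1-\tilde d_{t}$.)

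\emph{Main obstacle.} The crux is passing from pointwise refinement along the realized trajectory to a \emph{uniform} rate $\sup_{x}\|K_{t}-K_{\infty}\|_{\TV}\to0$; the Dini-theorem observation (monotone $+$ continuous $+$ compact) is what delivers this, while the realization-independent minorization of $K_{t}$ is what lets the exploration--refinement feedback run without circularity. The residual technicalities --- a.s.\ termination of the inner refinement loop under modification~3 and eventual $\lambda$-poisedness of the active balls --- are genuine but routine given the non-degeneracy of proposal draws.
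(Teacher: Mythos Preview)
Your argument is correct and follows a genuinely different route from the paper.

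The paper does two things. First, Lemma~\ref{LemGridRefComp} shows that for any $\delta>0$ one a.s.\ eventually has $\sup_{x}\|\tilde K_t(x,\cdot)-K_\infty(x,\cdot)\|_{\TV}<\delta$. The mechanism is to write $L(x,\cdot)=\epsilon\mu+(1-\epsilon)r_x$ and couple with i.i.d.\ Bernoullis so that an i.i.d.\ subsequence of \emph{proposals} drawn from the fixed measure $\mu$ gets deposited into $\mathcal S_t$ regardless of what $X_t$ is doing; compactness then reduces the required ``$(c,R,R)$-good'' property to a finite-covering statement about these i.i.d.\ draws. Second, the paper feeds this into the general adaptive-MCMC bookkeeping result, Theorem~\ref{CorMainProbIneq}, with the trivial Lyapunov function $V\equiv1$.

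You replace both pieces. For kernel convergence you first establish a uniform Doeblin bound $K_t\ge\epsilon_0\nu$, use it to get recurrence of $X_t$, then Borel--Cantelli gives density of $\cup_t\mathcal S_t$, and Dini's theorem upgrades the pointwise shrinkage $R_t(x)\to0$ to uniform shrinkage. For the conclusion you run a direct contraction-plus-perturbation telescoping argument, bypassing Theorem~\ref{CorMainProbIneq}.

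Two comments. Your step (ii) is more than you need here: since $\ell\ge\ell_{\min}$, at every step the \emph{proposal} $L_t$ already lands in any fixed open $U$ with probability $\ge\ell_{\min}\mathrm{Leb}(U)$ regardless of the surrogate's quality, and with probability $\beta/2$ that proposal is added to $\mathcal S$; conditional Borel--Cantelli then yields density of $\cup_t\mathcal S_t$ without ever bounding the approximate acceptance probability from below. This is precisely the paper's shortcut, and it also dodges the loop-termination/$\lambda$-poisedness worry you flag (note that under modification~3 the loop can add only $\theta^+$ and $\theta^-$, so ``each \textsc{RefineNear} genuinely enlarges $\mathcal S$'' is not quite right within a single step). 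On the other hand, your Dini-plus-telescoping finish is more elementary and self-contained than the paper's appeal to Theorem~\ref{CorMainProbIneq}; the paper's choice is modular because the same abstract result is reused in the non-compact Theorem~\ref{ThmErgGaussEnv}, where the trivial Lyapunov function no longer suffices.
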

\begin{remark}
We focus only on ergodicity, and in particular, do not obtain rates of convergence, laws of large numbers, or central limit theorems. We believe that, using results from the adaptive MCMC literature (see \cite{Fort2012}), the law of large numbers and central limit theorem can be shown to hold for the Monte Carlo estimator from our algorithm.  A significantly more challenging issue is to quantify the bias-variance tradeoff of our algorithm and its impact on computational effort. 
%
We plan to study this issue in a forthcoming paper.
\end{remark}



\section{Numerical experiments}
\label{sec:results}

Although the results in Section \ref{sec:theory} and further related results in Appendix \ref{sec:theoryappendix} establish the asymptotic exactness of our MCMC framework, it remains to demonstrate that it performs well in practice. This section describes three examples in which local surrogates produce accurate posterior samples using dramatically fewer evaluations of the forward model than standard MCMC. Additionally, these examples explore parameter tuning issues and the performance of several algorithmic variations. Though certain aspects of these examples depart from the assumptions of Theorems~\ref{ThmErgGaussEnv} or \ref{ThmCompSup}, the discussion in Appendix \ref{sec:alternateAssumptions}
 suggests that the theory is extensible to these cases; the success of the numerical experiments below reinforces this notion.

For each of these examples, we consider the accuracy of the computed chains and the number of forward model evaluations used to construct them. In the absence of analytical characterizations of the posterior, the error in each chain is estimated by comparing the posterior covariance estimates computed from a reference MCMC chain---composed of multiple long chains computed \textit{without} any approximation---to posterior covariance estimates computed from chains produced by Algorithm \ref{alg:algOverview}. The forward models in our examples are chosen to be relatively inexpensive in order to allow the construction of such chains and hence a thorough comparison with standard samplers. Focusing on the number of forward model evaluations is a problem-independent proxy for the overall running time of the algorithm that is representative of the algorithm's scaling as the model cost becomes dominant.

The first example uses an exponential-quartic distribution to investigate and select tunings of the refinement parameters $\beta_t$ and $\gamma_t$. The second and third examples investigate the performance of different types of local approximations (linear, quadratic, and Gaussian process) when inferring parameters for an ODE model of a genetic circuit and the diffusivity field in an elliptic PDE, respectively. We conclude with some brief remarks on the performance and scaling of our implementation.

\subsection{Exponential-quartic distribution}
\label{s:rosenbrock}

To investigate tunings of the the refinement parameters $\beta_t$ and $\gamma_t$, we consider a simple two dimensional target distribution, with log-density
\be
\log p(\theta) = - \frac{1}{10} \theta_1^4 - \frac{1}{2}(2\theta_2-\theta_1^2)^2,
\ee
illustrated in Figure \ref{fig:rosenbrockMarginals}. Performing MCMC directly on this model is of course very inexpensive, but we may still consider whether local quadratic approximations can reduce the number of times the model must be evaluated. For simplicity, we choose the proposal distribution to be a Gaussian random walk with variance tuned to $\sigma^2=4$. 

\begin{figure}[htbp]
\centering
\includegraphics[scale=.8]{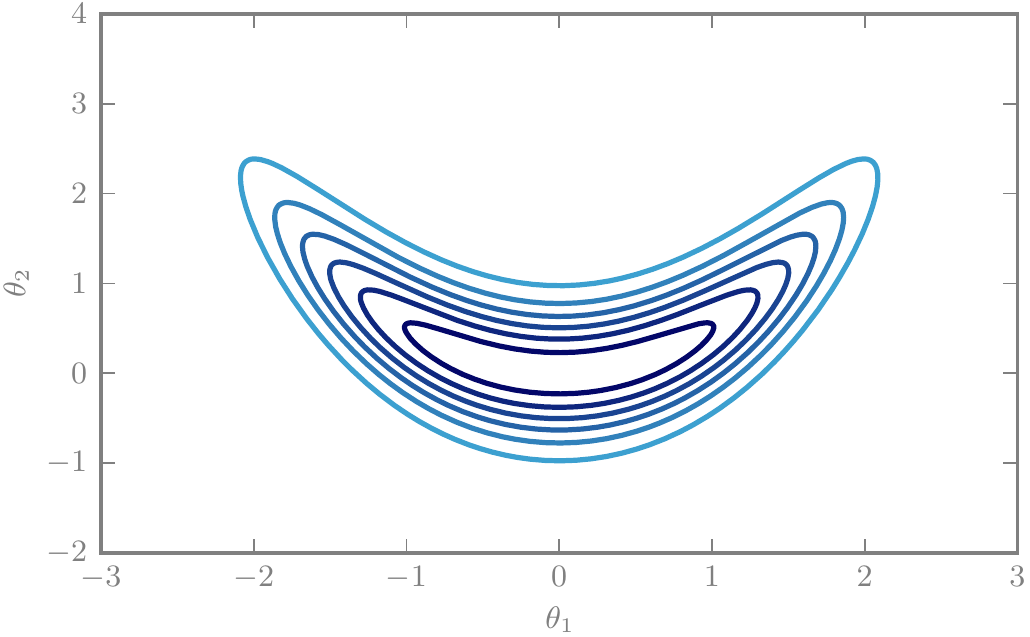}
\caption{The logarithm of the target density in the exponential-quartic example.}
\label{fig:rosenbrockMarginals}

\end{figure}

As a first step towards understanding the response of our approach to $\beta_t$ and $\gamma_t$, we test several constant values, setting only one of $\beta_n$ or $\gamma_n$ to be nonzero, choosing from $\beta_n \in \{10^{-3},10^{-2},10^{-1}\}$ and $\gamma_n \in \{10^{-2},10^{-1}, 0.5\}$. With these settings, we run Algorithm \ref{alg:algOverview}, using local quadratic approximations of the log-target density. 

The baseline configuration to which we compare Algorithm \ref{alg:algOverview} comprises 30 chains, each run for $10^5$ MCMC steps using the true forward model (\emph{i.e.}, with no approximation). In all of the numerical experiments below, we discard the first 10\% of a chain as burn-in. The reference runs are combined to produce a ``truth'' covariance, to which we compare the experiments. The chains are all initialized at the same point in the high target density region.
Ten independent chains are run for each parameter setting, with each chain containing $10^5$ MCMC steps. After discarding $10^4$ burn-in samples for each chain, we consider the evolution of the error as the chain lengthens; we compute a relative error measure at each step, consisting of the Frobenius norm of the difference in covariance estimates, divided by the Frobenius norm of the reference covariance. 
\begin{figure}[htb]
\centering
\subfloat[The accuracy of the chains.]{
\label{fig:rosenbrockConstants:a}

\includegraphics[scale=.8]{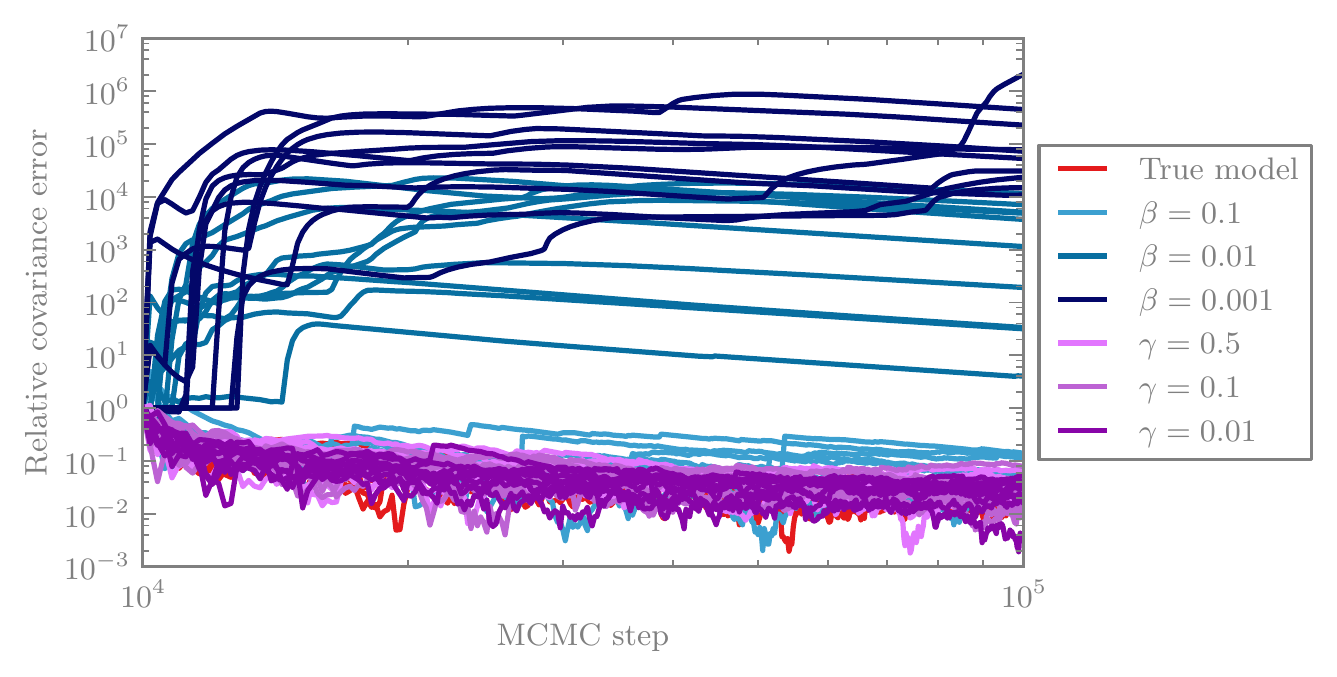}
}\\
\subfloat[The cost of the chains.]{
\label{fig:rosenbrockConstants:b}

\includegraphics[scale=.8]{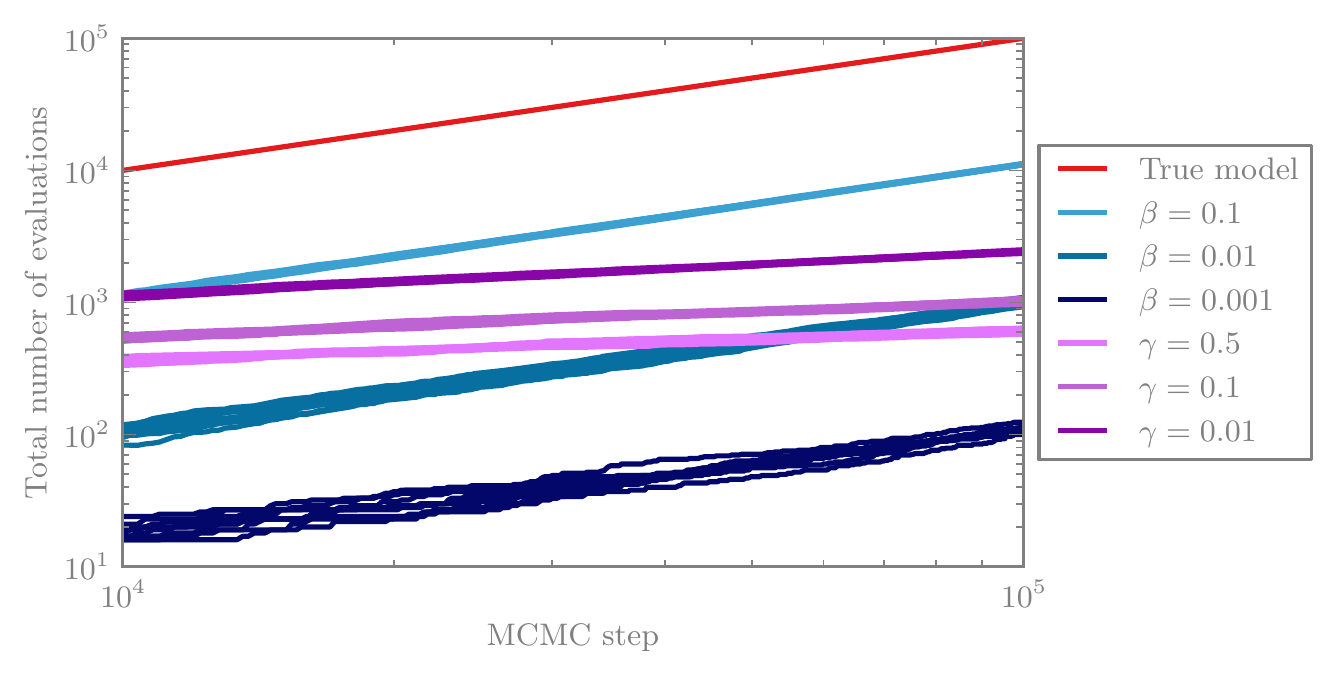}

}\caption{The accuracy and cost of sampling the exponential-quartic example using constant refinement parameters.}
\label{fig:rosenbrockConstants}
\end{figure}

This accuracy comparison is summarized in Figure~\ref{fig:rosenbrockConstants:a}, which shows the evolution of the error with the number of MCMC steps. The corresponding computational costs are summarized in Figure~\ref{fig:rosenbrockConstants:b}, which shows the number of true model evaluations performed for any given number of MCMC steps. The distribution of errors obtained with the baseline chains, shown in red, reflects both the finite accuracy of the reference chain and the variance resulting from finite baseline chain lengths. As expected, the cost of a chain increases when $\beta_t$ is larger or $\gamma_t$ is smaller; these values trigger more frequent random refinements or more strictly constrain the acceptance probability error indicator, respectively. When $\beta$-refinement is set to occur at a very low rate, the resulting chain is inexpensive but of low accuracy, and in contrast, higher values of $\beta$ show increased cost and reduced errors. The theory suggests that any constant $\beta_t>0$ should yield eventual convergence, but this difference in finite time performance is not surprising. Even the $\beta_t=0.01$ chains eventually show a steady improvement in accuracy over the interval of chain lengths considered here, which may reflect the predicted asymptotic behavior. Our experiments also show the efficacy of cross validation: all the chains using cross-validation refinement have accuracies comparable to the baseline runs while making significantly reduced use of the true model. These accuracies seem relatively insensitive to the value of $\gamma$. 

In practice, we use the two criteria jointly and set the parameters to decay with $t$. Allowing $\beta_t$ to decay is a cost-saving measure, and is theoretically sound as long as $\sum_t \beta_t$ diverges; on the other hand, setting $\gamma_t$ to decay increases the stringency of the cross validation criterion, improving robustness. Based upon our experimentation, we propose to use parameters $\beta_t = 0.01 t^{-0.2}$ and $\gamma_t = 0.1t^{-0.1}$; this seems to be a robust choice, and we use it for the remainder of the experiments. 

Figure \ref{fig:rosenbrock_final} summarizes the accuracy and cost of these parameter settings, and also considers the impact of a faster decay for the cross validation criterion: $\gamma_t=0.1t^{-0.6}$. The proposed parameters yield estimates that are comparable in accuracy to the standard algorithm, but cheaper (shifted to the left) by nearly two orders of magnitude. Observe that tightening $\gamma_t$ more quickly does not improve accuracy, but does increase the cost of the chains. 

\begin{figure}[htbp]
\centering
\includegraphics[scale=.8]{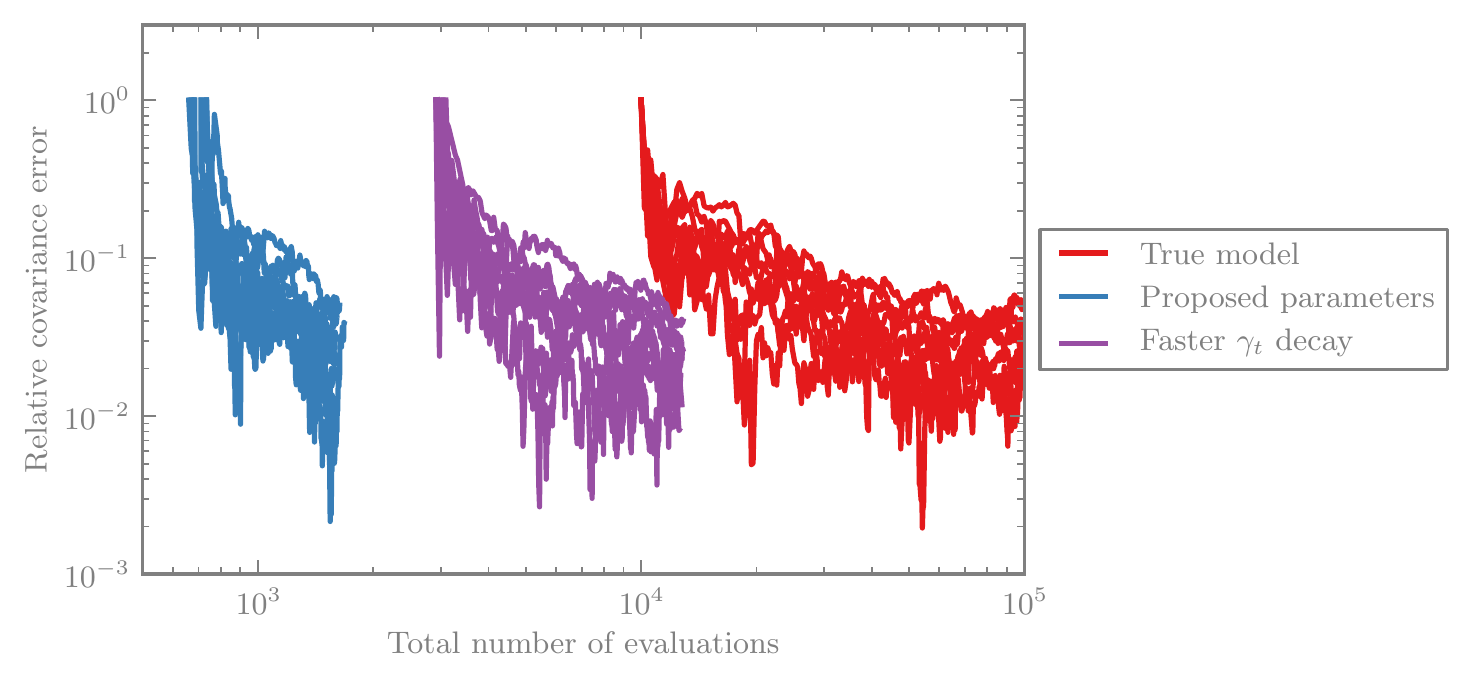}

\caption{The accuracy of the inference as compared to the number of forward model evaluations required using the proposed parameters or a setting with faster $\gamma_t$ decay. The plot depicts ten independent chains of each type, with the first $10\%$ of each chain removed as burn-in.}
\label{fig:rosenbrock_final}
\end{figure}

Before concluding this example, we explore the behavior of the refinement scheme in more detail. Figure \ref{fig:rosenbrockCostbreakdown} shows that under the proposed settings, though most refinements are triggered by cross validation, a modest percentage are triggered randomly; we propose that this is a useful balance because it primarily relies on the apparent robustness of cross validation, but supplements it with the random refinements required for theoretical guarantees. Interestingly, even though the probability of random refinement is decreasing \textit{and} the stringency of the cross-validation criterion is increasing, the proportion of refinements triggered randomly is observed to increase. This behavior suggests that the local approximations are indeed becoming more accurate as the chains progress.

\begin{figure}[htb]
\centering
\includegraphics[scale=.8]{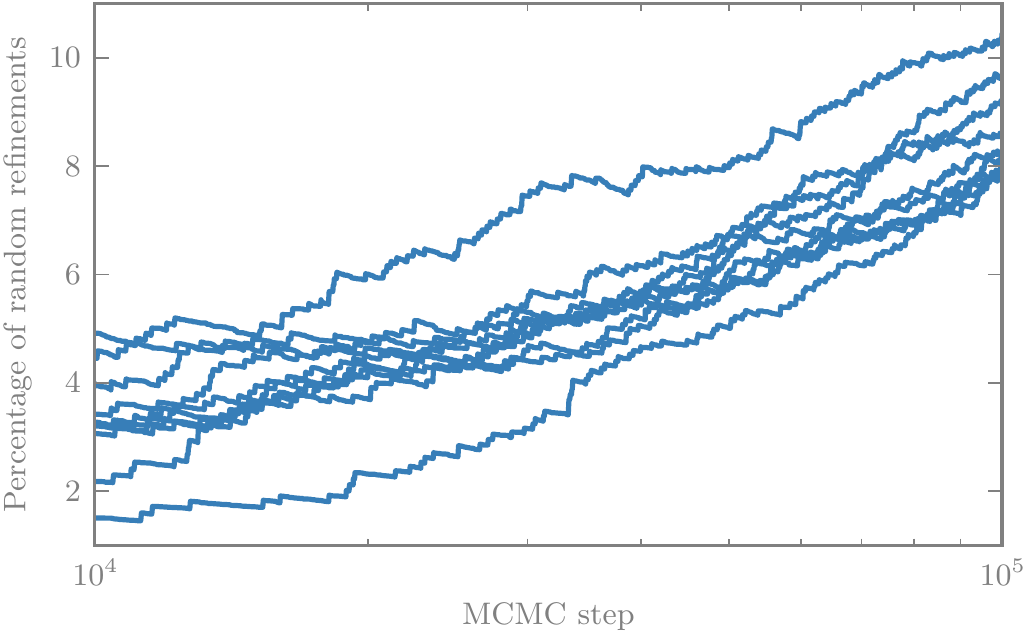}
\caption{The percentage of refinements triggered by the random refinement criterion, for ten independent chains in the exponential-quartic example, using the proposed parameters.}
\label{fig:rosenbrockCostbreakdown}

\end{figure}

Finally, it is instructive to directly plot the observed error indicators and compare them to the threshold used for refinement, as in Figure \ref{fig:rosenbrockCV}. Refinement occurs whenever the error indicators $\epsilon$, denoted by circles, exceed the current $\gamma_t$. Comparing Figures~\ref{fig:rosenbrockCVa} and \ref{fig:rosenbrockCVb}, we observe that many points lie just below the differing refinement thresholds, suggesting that choosing $\gamma_t$ provides significant control over the behavior of the algorithm.

\begin{figure}[htb]
\centering
\subfloat[The proposed parameters.]{
\includegraphics[scale=.8]{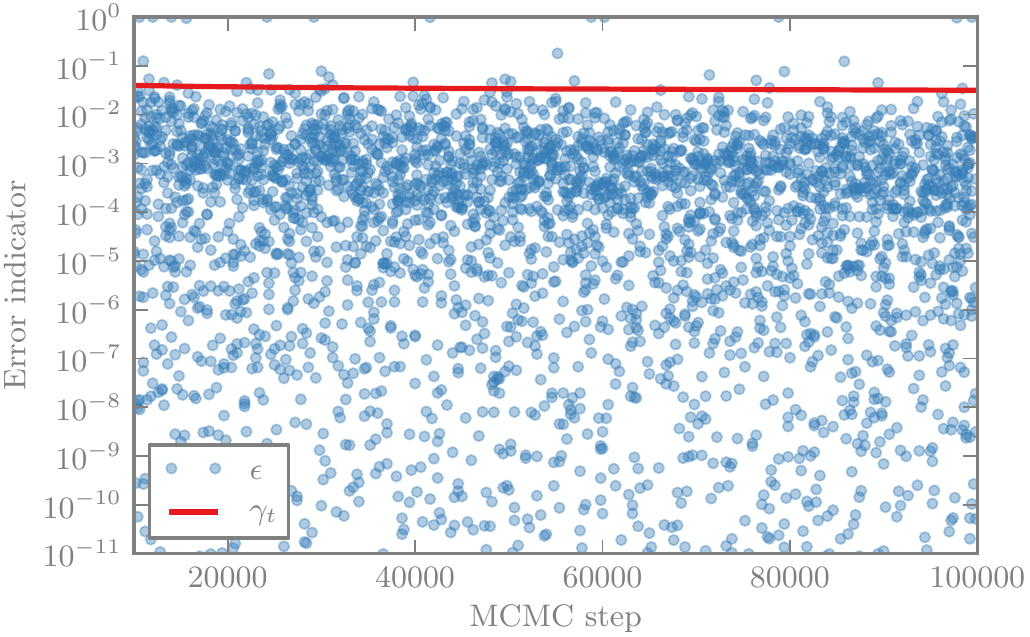}
\label{fig:rosenbrockCVa}
}\\
\subfloat[Faster decay of $\gamma_t$.]{
\includegraphics[scale=.8]{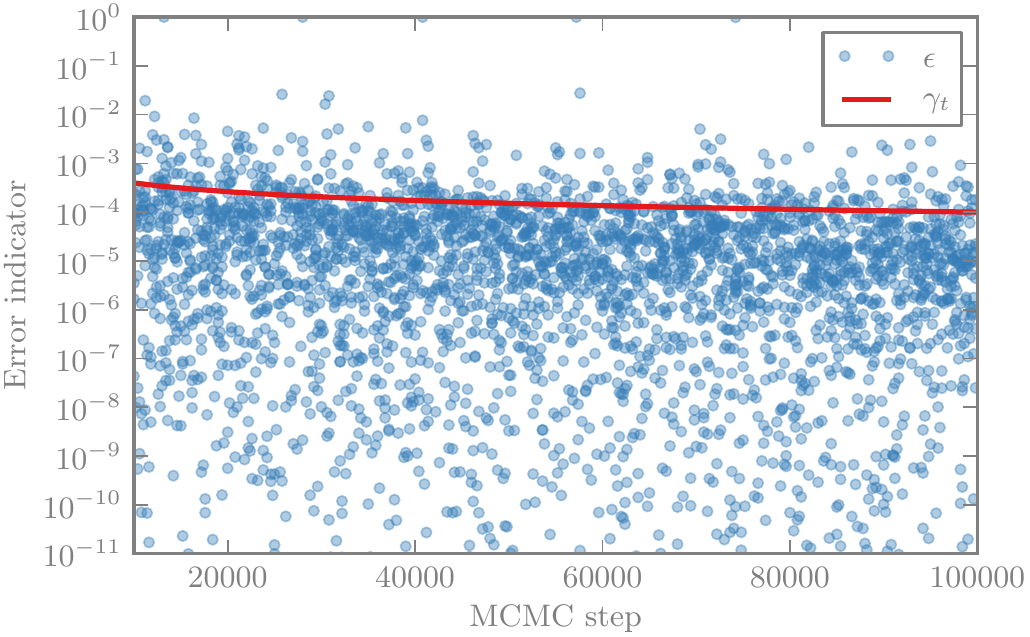}
\label{fig:rosenbrockCVb}
}\caption{The cross validation error indicator for the exponential-quartic example, using the proposed parameters or a faster $\gamma_t$ decay. The indicator shown is $\epsilon = \max(\epsilon^+, \epsilon^-)$, computed before any refinement occurs. The error indicators are often much smaller than $10^{-11}$---i.e., some proposals should obviously be accepted or rejected---but the plots are truncated to focus on behavior near the $\gamma_t$ threshold.}
\label{fig:rosenbrockCV}
\end{figure}

\subsection{Genetic toggle switch}
\label{s:genetics}
Given the refinement parameters chosen in the previous example, we now consider the performance of several different types of local approximations in an ODE model with a compact parameter domain.
We wish to infer the parameters of a genetic ``toggle switch'' synthesized in \textit{E.\ coli} plasmids by \cite{Gardner2000}, and previously used in an inference problem by \cite{Marzouk2009a}. \cite{Gardner2000} proposed a  differential-algebraic model for the switch, with six unknown parameters $Z_\theta = \{\alpha_1, \alpha_2, \beta, \gamma, K, \eta\} \in \mathbb{R}^6$, while the data correspond to observations of the steady-state concentrations. As in \cite{Marzouk2009a}, the parameters are centered and scaled around their nominal values so that they can be endowed with uniform priors over the hypercube $[-1,1]^6$. The measurement errors are independent and Gaussian, with zero mean and variances that depend on the experimental conditions. Further details on the problem setup are given in Appendix~\ref{apx:genetics}. Figure \ref{fig:geneticsMarginals} shows marginal posterior densities of the normalized parameters $\theta$. These results broadly agree with \cite{Marzouk2009a} and indicate that some directions are highly informed by the data while others are largely defined by the prior, with strong correlations among certain parameters.

\begin{figure}[htbp]
\centering
\includegraphics[scale=.8]{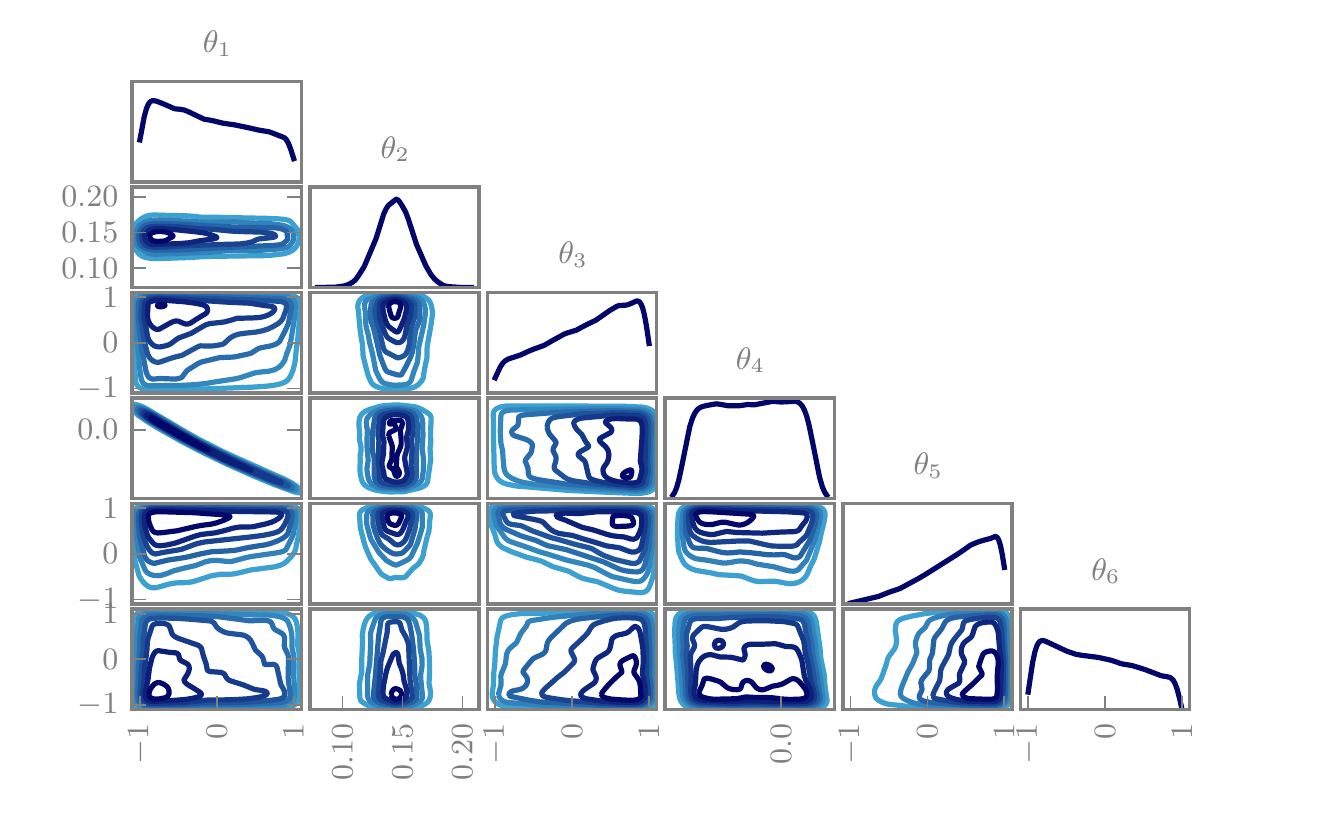}
\caption{One- and two-dimensional posterior marginals of the six parameters in the genetic toggle switch.}
\label{fig:geneticsMarginals}

\end{figure}

We investigate the performance of three different local approximations of the forward model: linear, quadratic, and Gaussian process. The experiment proceeds as in the last section (Section~\ref{s:rosenbrock}), with two differences: first, we adapt the covariance of the Gaussian proposal using the adaptive Metropolis algorithm of \cite{Haario2001}, a more practical choice than a fixed-size Gaussian random walk. Second, we limit our algorithm to perform at most two refinements per MCMC step, which is an \emph{ad hoc} limit to the cost of any particular step. Figure~\ref{fig:GeneticsResultsCost} shows that the accuracy is nearly identical for all the cases, but the approximate chains use fewer evaluations of the true model, reducing costs by more than an order of magnitude for quadratic or Gaussian process approximations (Figure~\ref{fig:GeneticsResults:b}). Local linear approximations show only modest improvements in the cost. Note that when proposals fall outside the support of the prior, the proposal is rejected without running either the true or approximate models; hence even the reference configuration runs the model less than once per MCMC step.

\begin{figure}[htb]
\centering
\includegraphics[scale=.8]{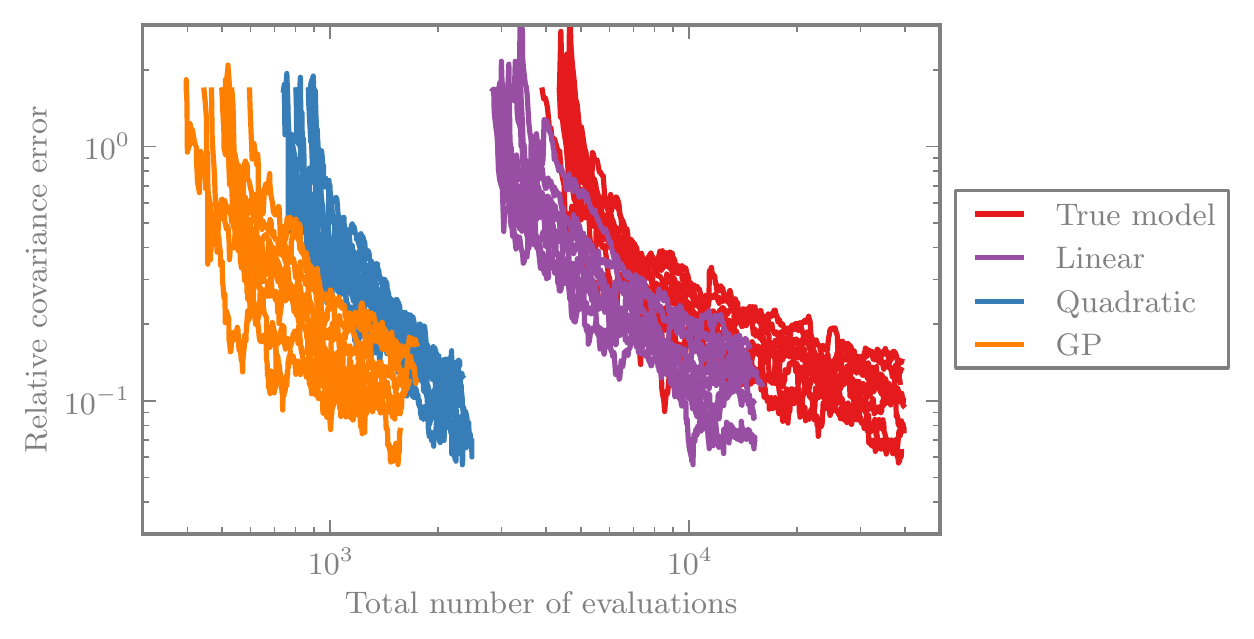}

\caption{Approximate relative covariance errors in the MCMC chains versus their costs, for the genetic toggle switch problem, using several different local approximation strategies. The plot depicts ten independent chains of each type, with the first $10\%$ of each chain removed as burn-in.}
\label{fig:GeneticsResultsCost}
\end{figure}

It is also instructive to plot the accuracy and cost as a function of the number of MCMC steps, as in Figure~\ref{fig:GeneticsResults}. All the accuracy trajectories in Figure~\ref{fig:GeneticsResults:a} lie on top of each other, suggesting that the approximations do not have any discernable impact on the mixing time of the chain. 
Yet Figure~\ref{fig:GeneticsResults:b} shows not only that the approximation strategies yield lower total cost at any given number of MCMC steps, but also that these costs accumulate at a \textit{slower rate} than when the true model is used directly.

\begin{figure}[htb]
\centering
\subfloat[The accuracy of the chains.]{
\label{fig:GeneticsResults:a}
\includegraphics[scale=.8]{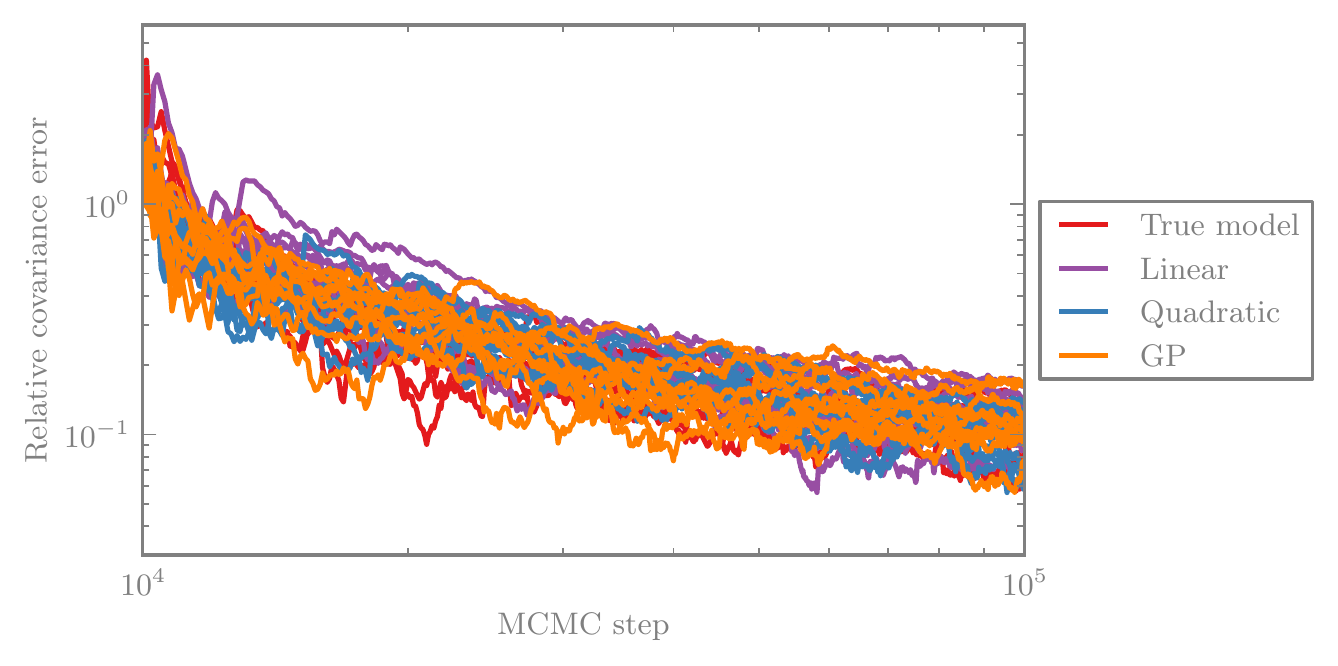}
}\\
\subfloat[The cost of the chains.]{
\label{fig:GeneticsResults:b}
\includegraphics[scale=.8]{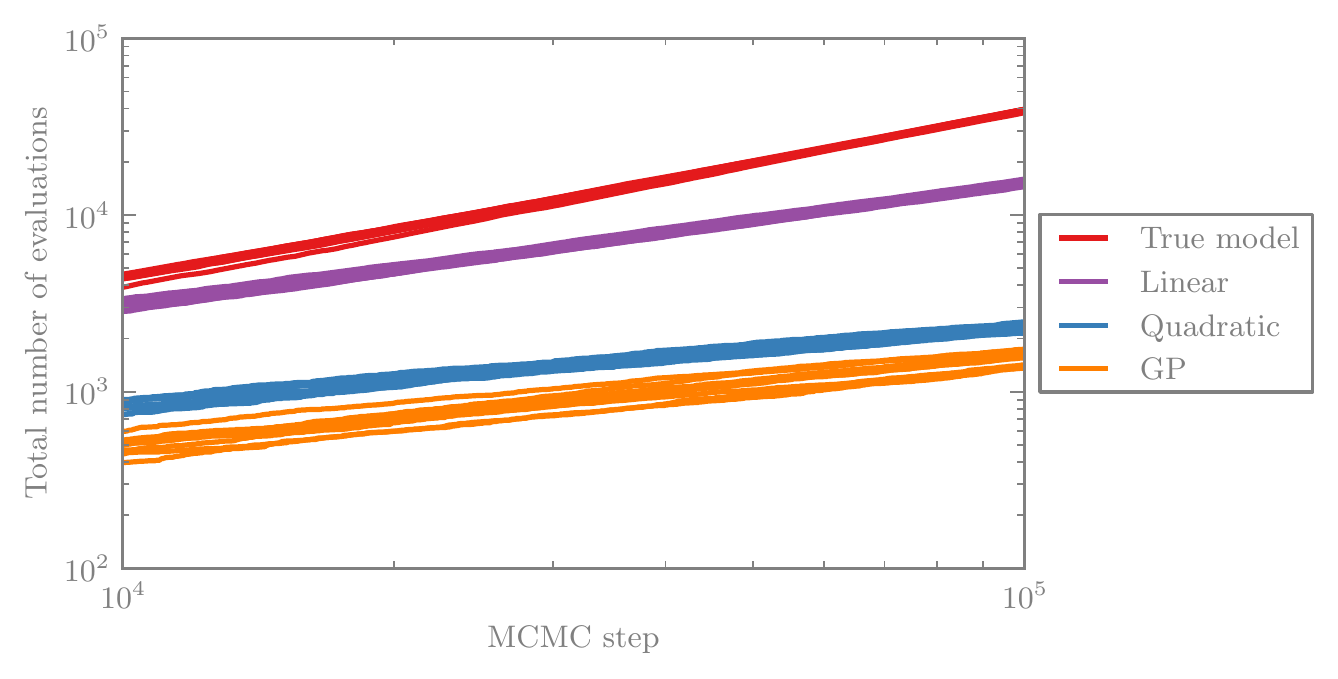}
}
\caption{Approximate relative covariance errors in the MCMC chains and their costs, shown over the length of the MCMC chain, for the genetic toggle switch problem, using several different local approximation strategies. The plot depicts ten independent chains of each type, with the first $10\%$ of each chain removed as burn-in.}
\label{fig:GeneticsResults}
\end{figure}

\subsection{Elliptic PDE inverse problem}
\label{s:elliptic}

We now turn to a canonical inverse problem involving inference of the diffusion coefficient in an elliptic PDE \citep{Dashti2011}. We leave the details of the PDE configuration to Appendix~\ref{apx:pde}; it suffices for our purposes that it is a linear elliptic PDE on a two-dimensional spatial domain, solved with a finite element algorithm at moderate resolution. The diffusion coefficient is defined by six parameters, each endowed with a standard normal prior. Noisy pointwise observations are taken from the solution field of the PDE and are relatively informative, and hence the posterior shifts and concentrates significantly with respect to the prior, as shown in Figure \ref{fig:laplaceMarginals}. We also emphasize that even though the PDE is linear, the forward model---\emph{i.e.}, the map from the parameters to the observed field---is nonlinear and hence the posterior is not Gaussian. We also note that, while the design of effective posterior sampling strategies for functional inverse problems is an enormous and important endeavor \citep{Cotter2013}, our parameterization renders this problem relatively low-dimensional and the simple adaptive Metropolis sampler used to obtain our results mixes well. 

\begin{figure}[htbp]
\centering
\includegraphics[scale=.8]{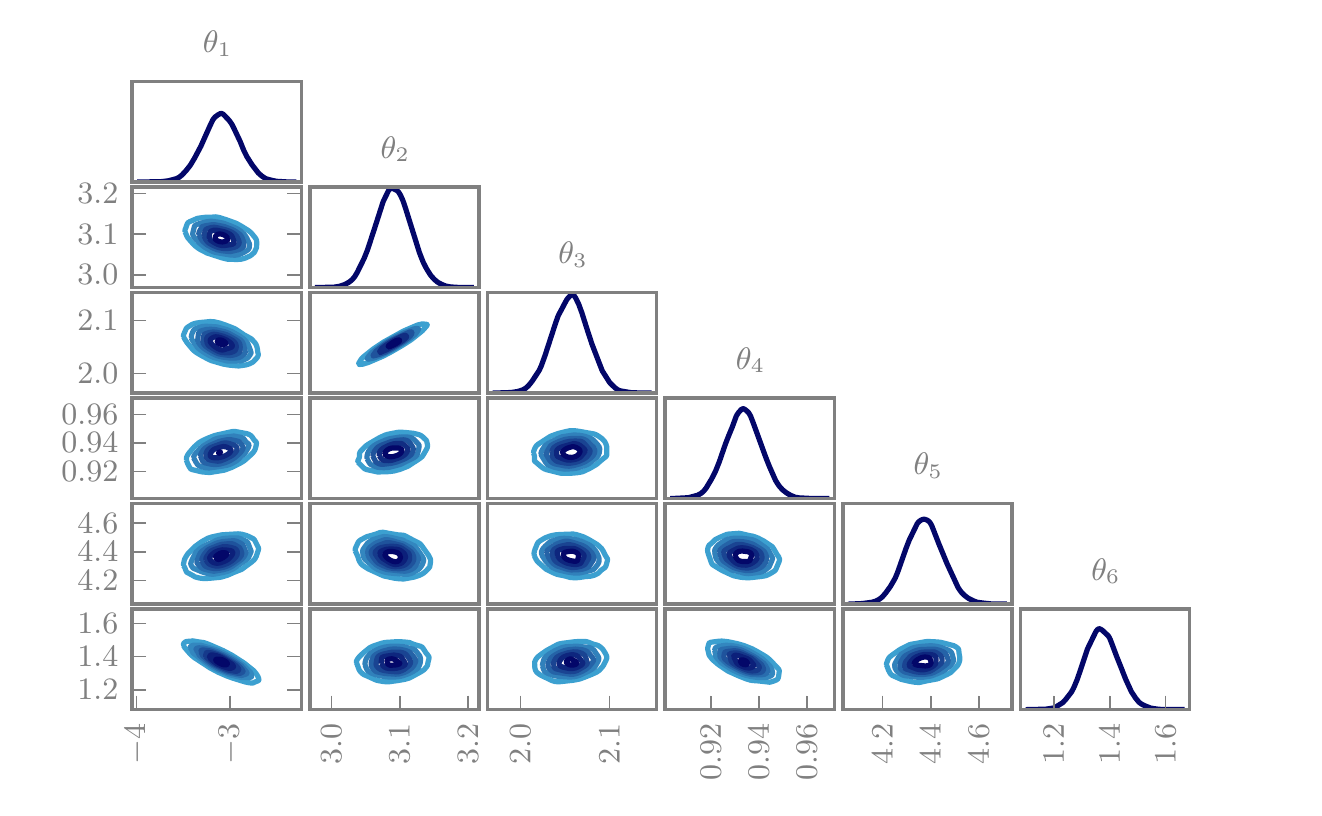}
\caption{One- and two- dimensional posterior marginals of the parameters in the elliptic PDE inverse problem.}
\label{fig:laplaceMarginals}
\end{figure}

Now we evaluate the performance of the various local approximation schemes, using the same experiments as in the previous section; results are summarized in Figure \ref{fig:LaplaceResults}. As in the genetic toggle switch example, the accuracies of all the configurations are nearly indistinguishable, yet the approximate chains demonstrate significantly reduced use of the true forward model. Local linear approximations of the forward model decrease the cost by over an order of magnitude. Both the local quadratic and local GP regressors yield well \emph{over two orders of magnitude reduction} in cost. We suggest that our schemes perform very well in this example both because of the regularity of the likelihood and because the concentration of the posterior limits the domain over which the approximation must be accurate.

\begin{figure}[htb]
\centering
\includegraphics[scale=.8]{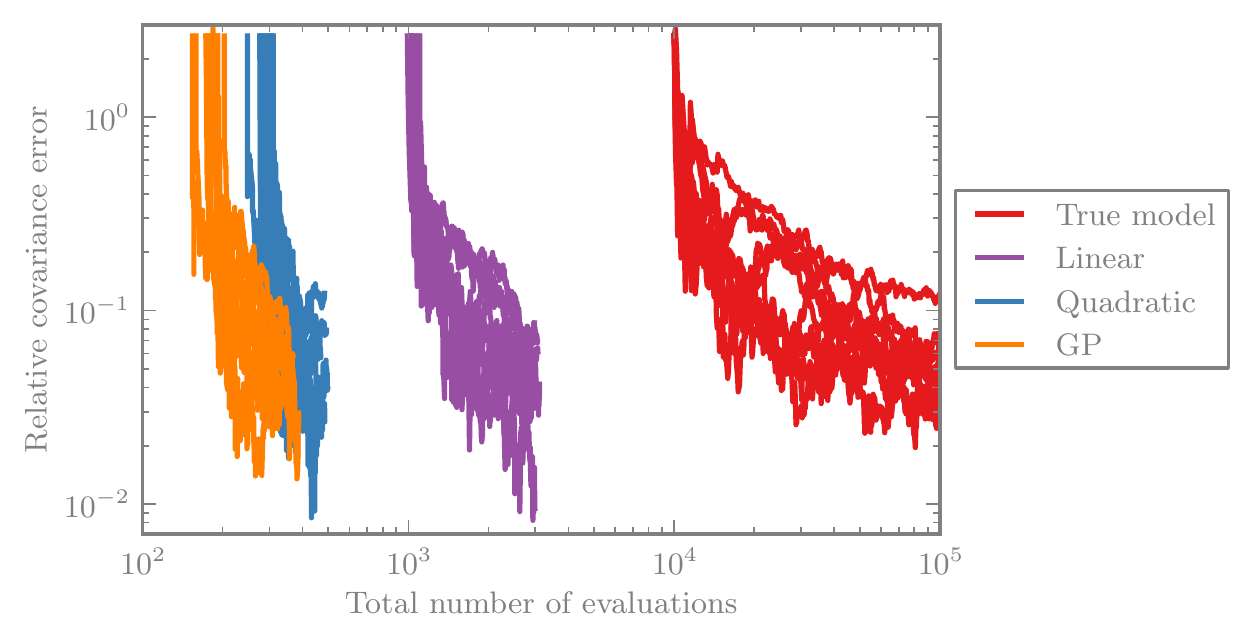}
\caption{Approximate relative covariance errors in the MCMC chains versus their costs, for the elliptic PDE inverse problem, using several different local approximation strategies. The plot depicts ten independent chains of each type, with the first $10\%$ of each chain removed as burn-in.}
\label{fig:LaplaceResults}
\end{figure}


\subsection{Implementation and performance notes}

We have now demonstrated how our approximate MCMC  framework can dramatically reduce the use of the forward model, but we have not yet addressed the performance of our implementation in terms of running time or memory. Although in principle one might worry that the cost of storing the growing sample set $\mathcal{S}$ or of performing the nearest neighbor searches might become challenging, we find that neither is problematic in practice. Storing a few thousand samples, as required in our tests, is trivial on modern machines. Finding nearest neighbors is a hard problem asymptotically with respect to the parameter dimension and size of the sample set, but our sample sets are neither high dimensional nor large. We use an efficient library to perform the nearest neighbor computations, which implements specialized algorithms that can vastly outperform the asymptotic complexity for low-dimensional nearest neighbors \citep{Muja2009}, and we observe that its run time is an insignificant cost. Computing the error indicator is also relatively inexpensive in these settings: for polynomials, each cross-validation sample only requires a low-rank update of the least squares solution; and for Gaussian processes, drawing from the posterior predictive distribution is fast once the GP has been fit.

To investigate the run-time performance, we measured the average wall-clock time needed to construct each chain used in the genetic toggle switch and elliptic PDE examples on a typical desktop: true model (9 and 4 minutes, respectively), linear (4 and 5 minutes), quadratic (5 minutes and 1 hour), Gaussian process (2.4 and 8.5 hours).\footnote{The overhead in computing approximations for the elliptic PDE example is more expensive because the forward model has many more outputs than the genetic toggle switch example.} For quadratic approximations, benchmarking suggests that around 70\% of the run-time was spent computing QR factorizations needed to fit the quadratic surrogates and $<\!2\%$ was spent performing nearest neighbor searches or running the full model. Even though the models take only a small fraction of a second to run, the linear approximation is already competitive in terms of run-time. For sufficiently expensive forward models, the fixed cost of constructing approximations will be offset by the cost of the model evaluations, and real run-times should reflect the strong performance we have demonstrated with problem-invariant metrics. Although Gaussian process approximations showed slightly superior performance in terms of model use, the computational effort required to construct them is much higher, suggesting that they will be most useful for extremely expensive models.


\section{Discussion}
\label{sec:discussion}

We have proposed a new class of MCMC algorithms that construct local surrogates to reduce the cost of Bayesian inference in problems with computationally expensive forward models. These algorithms introduce local approximations of the forward model or log-likelihood into the Metropolis-Hastings kernel and refine these approximations incrementally and infinitely. The resulting Markov chain thus employs a sequence of approximate transition kernels, but asymptotically samples from the exact posterior distribution. We describe variations of the algorithm that employ either local polynomial or Gaussian process approximations, thus spanning two widely-used classes of surrogate models. Gaussian processes appear to provide somewhat superior performance in terms of reducing use of the forward model, but local quadratic models are cheaper to construct; therefore, both seem to be useful options, depending on cost of the true model. In either case, numerical experiments demonstrate significant reductions in the number of forward model evaluations used for posterior sampling in ODE and PDE model problems. 

We do not claim that our algorithm provides minimal error in MCMC estimates given a particular budget of forward model runs; indeed, we expect that problem-specific methods could outperform our strategy in many cases. Instead, we argue that the convergence of the algorithm makes it straightforward to apply to novel problems and to assess the quality of the results. 
The essential reason is that refinement of local approximations is directly tied to the progress of the MCMC chain. As MCMC expends more effort exploring the target distribution, the quality of the approximations increases automatically, via refinement criteria that target problem-independent quantities. 
The cost of constructing the approximations is incurred incrementally and is tuned to correspond to the MCMC sampling effort. 
Although it is not feasible to predict in advance how many MCMC steps or model runs will be needed, difficulty either in exploring the posterior or in approximating the model is typically revealed through non-stationary behavior of the chain. Hence, standard MCMC diagnostics can be used to monitor convergence of the chain \textit{and} the underlying approximation.
This argument is supported by our numerical results, which produce chains whose convergence is largely indistinguishable from that of regular MCMC. Moreover, after initial exploration of the refinement thresholds, numerical results in these examples are obtained without problem-specific tuning.

Our theoretical and numerical results underscore the notion that local regularity in the forward model or log-likelihood should be harnessed for computational efficiency, and that the number of model evaluations needed to approach exact sampling from the posterior can be much smaller than the number of MCMC samples. Although our convergence arguments can be made quantitative, we believe that doing so in a straightforward manner does not capture the greatest strength of our algorithm. Looking at the process described in Example \ref{ex:decrag}, we see that a reasonable start results in a bias bound that decays almost exponentially in the number of likelihood evaluations and that the number of likelihood evaluations will grow approximately logarithmically in the running time of the process. Our general bounds, however, only imply that the bias decays at some rate, which may potentially be quite slow. The discrepancy between these rates comes from the fact that our cross-validation approach attempts to evaluate the likelihood primarily in regions where refinement is important. In situations such as Example \ref{ex:decrag}, these well-chosen likelihood evaluations give a much better estimate than would be obtained from points chosen according to the posterior distribution; in other cases, they seem to be similar. A more general theory would need to avoid the problems that arise in Example \ref{ExDecRateBeta} and similar constructions.

There remains significant room to develop other algorithms within this framework. A wide variety of local approximations have theoretical convergence properties similar to those exploited here, offering the opportunity to explore other families of approximations, different weight functions and bandwidths, or variable model order, \emph{cf.}\ \citep{Cleveland1996, Gramacy2013}. Other variations include constructing surrogates by sharing $\mathcal{S}$ across parallel MCMC chains; using any available derivative information from the forward model to help construct local approximations; or using local approximations as corrections to global surrogates, creating hybrid strategies that should combine the fast convergence of global approximations with the asymptotic exactness of our construction \citep{Chakraborty2013}. It should also be possible to extend our use of local approximations to other varieties of MCMC; of particular interest are derivative-based methods such as Metropolis-adjusted Langevin (MALA) or Hybrid Monte Carlo (HMC), where the easy availability of derivatives from our local approximations can dramatically impact their feasibility \citep{Rasmussen2003}. Several of these variations are explored in \cite{Conrad2014a}. Finally, further work may reveal connections between the present strategy and other methods for intractable likelihoods, such as pseudo-marginal MCMC, or with data assimilation techniques for expensive models \citep{law2015data}. 

\section*{Acknowledgments}
P.\ Conrad and Y.\ Marzouk acknowledge support from the Scientific Discovery through Advanced Computing (SciDAC) program funded by the US Department of Energy, Office of Science, Advanced Scientific Computing Research under award number DE-SC0007099. N.\ Pillai is partially supported by the grant ONR 14-0001. He thanks Dr.\ Pedja Neskovic for his interest in this work. Aaron Smith was supported by a grant from the Natural Sciences and Engineering Research Council of Canada.



\bibliographystyle{chicago} 
\bibliography{library,library_manual} 


 \newpage
\appendix

\section{Local polynomial regression}
\label{sec:polyDetails}

Here we provide additional detail about the polynomial regression scheme described in Section~\ref{s:polyapprox}. We consider the quadratic case, as the linear case is a simple restriction thereof. For each component $f_j$ of $\mathbf{f}$, the quadratic regressor is of the form 
\begin{equation*}
\tilde{{f}}_j(\hat{\theta}) := a_j + b_j^T \hat{\theta} + \frac{1}{2}\hat{\theta}^T H_j \hat{\theta},
\end{equation*}
where $a_j \in \mathbb{R}$ is a constant term, $b_j \in \mathbb{R}^d$ is a linear term, and $H_j \in \mathbb{R}^{d \times d}$ is a symmetric Hessian matrix. Note that $a_j$, $b_j$, and $H_j$ collectively contain $M = (d+2)(d+1)/2$ independent entries for each $j$. The coordinates $\hat{\theta} \in \mathbb{R}^d$ are obtained by shifting and scaling the original parameters $\theta$ as follows. Recall that the local regression scheme uses $N$ samples $\{\theta^1, \ldots, \theta^N\}$ drawn from the ball of radius $R$ centered on the point of interest $\theta$, along with the corresponding model evaluations $y_j^i = f_j(\theta^i)$.\footnote{To avoid any ambiguities, this appendix departs from the rest of the narrative by using a superscript to index samples and a subscript to index coordinates.} We assume that the components of $\theta$ have already been scaled so that they are of comparable magnitudes, then define $\hat{\theta}^i = (\theta^i - \theta)/R$, so that the transformed samples are centered at zero and have maximum radius one. Writing the error bounds as in (\ref{e:bounds}) requires this rescaling along with the $1/2$ in the form of the regressor above \citep{Conn2009}. 

Next, construct the diagonal weight matrix $W = \mathrm{diag}(w^1, \ldots, w^N)$ using the sample weights in (\ref{eqn:tricubeWeights}), where we have $R=1$ because of the rescaling. Then compute the $N$-by-$M$ basis matrix $\Phi$:

\begin{equation*}
\Phi = 
\begin{pmatrix}
1 & \hat{\theta}_1^1 & \cdots & \hat{\theta}_d^1 & \frac{1}{2} \! \left (\hat{\theta}_1^1 \right )^2 & \cdots & \frac{1}{2} \! \left (\hat{\theta}_d^1 \right )^2 & \hat{\theta}_1^1\hat{\theta}_2^1 & \cdots & \hat{\theta}_{d-1}^1 \hat{\theta}_d^1\\
\vdots & & & & & & & & & \vdots \\
1 & \hat{\theta}_1^N & \cdots & \hat{\theta}_d^N & \frac{1}{2} \! \left (\hat{\theta}_1^N \right )^2 & \cdots & \frac{1}{2} \! \left (\hat{\theta}_d^N \right )^2 & \hat{\theta}_1^N\hat{\theta}_2^N & \cdots & \hat{\theta}_{d-1}^N \hat{\theta}_d^N
\end{pmatrix}
\end{equation*}
where we ensure that $N > M$. Finally, solve the $n$ least squares problems,
\begin{equation}
\Phi^T W \Phi Z = \Phi^T W Y,
\label{e:lsq}
\end{equation}
where each column of the $N$-by-$n$ matrix $Y$ contains the samples $\left ( y_j^1,  \ldots, y_j^N  \right )^T$, $j=1, \ldots, n$. Each column $z_j$ of $Z \in \mathbb{R}^{M \times n}$ contains the desired regression coefficients for output $j$,
 \begin{equation}
 z_j^T = \begin{pmatrix}
 a_j & b_j^T  & (H_j)_{1,1} & \cdots & (H_j)_{d,d} & (H_j)_{1,2} & \cdots & (H_j)_{d-1,d}
 \end{pmatrix}.
 \end{equation}
The least squares problem may be solved in a numerically stable fashion using a QR factorization of $W \Phi Z$, which may be computed once and reused for all $n$ least squares problems. The cross-validation fit omitting sample $i$ simply removes row $i$ from both sides of (\ref{e:lsq}). These least squares problems can be solved efficiently with a low-rank update of the QR factorization of the full least squares problem, rather than recomputing the QR factors from scratch \citep{Hammarling2008}. 
\section{Detailed theoretical results and proofs of theorems}
\label{sec:theoryappendix}
\subsection{Auxiliary notation}
We now define some useful auxillary objects. For a fixed  finite set $\mathcal{S} \subset \Theta$, we consider the stochastic process defined by Algorithm \ref{alg:algOverview} with $\mathcal{S}_{1} = \mathcal{S}$ and lines 11--20 and 22 removed. This process is essentially the original algorithm with all approximations based on a single set of points $\mathcal{S}$ and no refinements. Since there are no refinements, this process is in fact a Metropolis-Hastings Markov chain, and we write  $K_{\mathcal{S}}$ for its transition kernel. For all measurable sets $U \subset \Theta$, this kernel can be written as $K_{\mathcal{S}}(x,U) = r_{\mathcal{S}}(x) \delta_{x}(U) + (1 - r_{\mathcal{S}}(x)) \int_{y \in U} p_{\mathcal{S}}(x,y)dy$ for some $0 \leq r_{\mathcal{S}}(x) \leq 1$ and  density $p_{\mathcal{S}}(x,y)$. We denote by $\alpha_{\mathcal{S}}(x,y)$ the acceptance probability of $K_{\mathcal{S}}$. 

We introduce another important piece of notation before giving our results.  Let $\{ Z_{t} \}_{t \in \mathbb{N}}$ be a (generally non-Markovian) stochastic process on some state space $\Omega$. We say that a sequence of (generally random, dependent) kernels $\{Q_{t} \}_{t \in \mathbb{N}}$ is \textit{adapted} to $\{ Z_{t} \}_{t \in \mathbb{N}}$ if there exists an auxillary process $\{ A_{t} \}_{t \in \mathbb{N}}$ so that:
\begin{itemize}
\item  $\{ (Z_{t}, A_{t} ) \}_{t \in \mathbb{N}}$ is a Markov chain,
\item $Q_{t}$ is $\sigma(A_{t})$-measurable, and
\item $\P[Z_{t+1} \in \cdot | Z_{t}, A_{t}] = Q_{t}(Z_{t}, \cdot)$.
\end{itemize} 

Let $\{ X_{t}, \mathcal{S}_{t} \}_{t \in \mathbb{N}}$ be a sequence evolving according to the stochastic process defined by Algorithm \ref{alg:algOverview} and define the following associated sequence of kernels:
\be 
\tilde{K}_{t}(x, A) \equiv \P[X_{t+1} \in A | \{X_{s}\}_{1 \leq s < t}, X_{t} = x, \{\mathcal{S}_{s} \}_{1 \leq s \leq t}].
\ee 
The sequence of kernels $\{ \tilde{K}_{t} \}_{t \in \mathbb{N}}$ is adapted to $\{ X_{t} \}_{t \in \mathbb{N}}$, with $\{ \mathcal{S} \}_{t \in \mathbb{N}}$ as the auxillary process. For any fixed $t$, one can sample from $\tilde{K}_{t}(x, \cdot)$ by first drawing a proposal $y$ from $L(x, \cdot)$ and then accepting with probability 
\be \label{EqAlphDef} 
\tilde{\alpha}_{t}(x, y) \equiv c_{1} \alpha_{\mathcal{S}_{t}}(x,y) + c_{2} \alpha_{\mathcal{S}_{t} \cup \{ (x,f(x)) \}}(x,y) + c_{3} \alpha_{\mathcal{S}_{t} \cup \{ (y,f(y)) \}}(x,y),
\ee  
where $c_1, c_2, c_3$ are some positive constants that depend on $x,y,\beta_{t}$ and $\gamma_{t}$ and satisfy the identity $c_{1} + c_{2} + c_{3} = 1$.

\subsection{Book-keeping result}
The following result will be used repeatedly in our ergodicity arguments.
\begin{theorem} [Approximate Ergodicity of Adaptive Chains] \label{CorMainProbIneq}
Fix a kernel $K$ with stationary distribution $\pi$ on state space $\mathcal{X}$ and let $\{Y_{t}\}_{t \geq 0}$ evolve according to $K$. Assume
\be \label{IneqBasicResPos3}
\| K^{t}(x,\cdot) - \pi \|_{\TV} \leq C_{x} (1 - \alpha)^{t}
\ee  
for some $0 < \alpha \leq 1$, $\{ C_{x} \}_{x \in \mathcal{X}}$ and all $t \in \mathbb{N}$.

Let $\{ K_{t} \}_{t \in \mathbb{N}}$ be a sequence of kernels adapted to some stochastic process $\{ X_{t} \}_{t \in \mathbb{N}}$, with auxillary process $\{A_{t} \}_{t \in \mathbb{N}}$. Also fix a Lyapunov function $V$ and constants $0 < a, \delta, \epsilon < 1$, $0 \leq b< \infty$ and $0 \leq B < \frac{2b}{ a \epsilon}$. Assume that there exists a non-random time $\mathcal{T} = \mathcal{T}_{\epsilon, \delta}$ and a $\sigma \left( \{(X_{s},A_{s}) \}_{s \in \mathbb{N}}^{\mathcal{T}} \right)$-measurable event $\mathcal{F}$ so that $\P[\mathcal{F}] > 1 - \epsilon$, 
\be 
\E[ V(X_{\mathcal{T}}) \textbf{1}_{\mathcal{F}}] &< \infty,\label{IneqBasicResPos1} \\
\sup_{t > \mathcal{T}} \sup_{x \, : \, V(x) < B } \| K_{t}(x, \cdot) - K(x,\cdot) \|_{\TV} &< \delta + \textbf{1}_{\mathcal{F}^{c}},\label{IneqBasicResPos2}
\ee  
and the following inequalities are satisfied for all $t > \mathcal{T}$:

\be \label{IneqBasicResPos4}
\E[V(X_{t+1}) \textbf{1}_{\mathcal{F}}  | X_{t}=x, A_{t}] &\leq (1 - a) V(x) + b \\
\E[V(Y_{t+1}) | Y_{t}=y] &\leq (1 - a) V(y) + b.
\ee  

Then
\be 
\limsup_{T \rightarrow \infty} \| \mathcal{L}(X_{T}) - \pi \|_{\TV} &\leq   3\epsilon  + \delta \frac{ \log \left( \frac{e \delta}{\mathcal{C} \log(1-\alpha)} \right) }{\log(1 - \alpha)} +  \frac{4b}{a B} \lceil \frac{ \log \left( \frac{\delta}{\mathcal{C} \log(1-\alpha)} \right) }{\log(1 - \alpha)} + 1 \rceil,
\ee 
where $\mathcal{C} = \mathcal{C}(\epsilon) \equiv \sup \{ C_{x} \, : \, V(x) \leq \frac{2b}{\epsilon a}  \} $. 
\end{theorem}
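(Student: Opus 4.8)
The plan is to compare the adaptive chain $\{X_{t}\}$ to the fixed kernel $K$ over a final window of length $n$, and then invoke the geometric ergodicity \eqref{IneqBasicResPos3} of $K$ to close the gap to $\pi$. For a large terminal time $T$ and a window length $n$ (optimized at the end), I would introduce an auxiliary chain $\{Y_{s}\}_{s=T-n}^{T}$ coupled so that $Y_{T-n} = X_{T-n}$ and $Y$ evolves according to $K$ thereafter, and write
\be
\| \mathcal{L}(X_{T}) - \pi \|_{\TV} \leq \| \mathcal{L}(X_{T}) - \mathcal{L}(Y_{T}) \|_{\TV} + \| \mathcal{L}(Y_{T}) - \pi \|_{\TV}.
\ee
The first term measures the perturbation accumulated by replacing the random kernels $K_{s}$ with $K$ over the window and grows roughly linearly in $n$; the second is the residual ergodicity error of $K$ and decays geometrically in $n$. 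Throughout I would take $T$ so large that $T - n > \mathcal{T}$, so that \eqref{IneqBasicResPos2} and the window drift inequalities \eqref{IneqBasicResPos4} apply at every step of the window; since $\mathcal{F}$ is $\sigma(\{(X_{s},A_{s})\}_{s \leq \mathcal{T}})$-measurable, whether $\mathcal{F}$ holds is determined before the window begins.

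First I would do the drift bookkeeping. Iterating the one-step inequality \eqref{IneqBasicResPos4} on $\mathcal{F}$ from time $\mathcal{T}$ and using \eqref{IneqBasicResPos1} gives
\be
\E[V(X_{s}) \textbf{1}_{\mathcal{F}}] \leq (1-a)^{s - \mathcal{T}} \E[V(X_{\mathcal{T}}) \textbf{1}_{\mathcal{F}}] + \frac{b}{a},
\ee
so $\limsup_{s} \E[V(X_{s}) \textbf{1}_{\mathcal{F}}] \leq b/a$, and the same holds for $Y$ without the indicator. Markov's inequality then controls the two relevant excursion probabilities: the threshold $2b/(\epsilon a)$ that defines $\mathcal{C}$ yields $\P[V(X_{s}) > 2b/(\epsilon a), \, \mathcal{F}] \leq \epsilon/2$ in the limsup, while the smaller threshold $B$ (admissible precisely because $B < 2b/(a\epsilon)$, so that $\{V < B\} \subseteq \{V \leq 2b/(\epsilon a)\}$ and hence $C_{x} \leq \mathcal{C}$ there) gives a per-step bound of order $b/(aB)$. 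This step, together with $\P[\mathcal{F}^{c}] < \epsilon$, is the origin of the $\epsilon$ terms and of the $b/(aB)$ factor.

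Next I would build the coupling. Starting coupled at $X_{T-n} = Y_{T-n}$, at each step I would draw $(X_{s+1}, Y_{s+1})$ from a maximal coupling of $K_{s}(X_{s}, \cdot)$ and $K(Y_{s}, \cdot)$, and once the chains separate let them run freely. While coupled, $X_{s} = Y_{s} =: z$ and the separation probability at that step is exactly $\|K_{s}(z,\cdot) - K(z,\cdot)\|_{\TV}$, which by \eqref{IneqBasicResPos2} is below $\delta$ when $V(z) < B$ and $\mathcal{F}$ holds, and is bounded by $1$ otherwise. A union bound over the window gives
\be
\P[X_{T} \neq Y_{T}] \leq n \delta + \sum_{s=T-n}^{T-1} \big( \P[V(X_{s}) \geq B, \, \mathcal{F}] + \P[\mathcal{F}^{c}] \big),
\ee
and the drift estimates bound the right side by $n\delta + n\,O(b/(aB)) + \epsilon$; since $\|\mathcal{L}(X_{T}) - \mathcal{L}(Y_{T})\|_{\TV} \leq \P[X_{T} \neq Y_{T}]$, this controls the perturbation term. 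For the ergodicity term, conditioning on $X_{T-n}$ and using \eqref{IneqBasicResPos3} gives $\|\mathcal{L}(Y_{T}) - \pi\|_{\TV} \leq \E[C_{X_{T-n}} (1-\alpha)^{n}] \leq \mathcal{C}(1-\alpha)^{n} + \epsilon/2 + \epsilon$, where the last two terms absorb the excursion of $V(X_{T-n})$ above $2b/(\epsilon a)$ and the event $\mathcal{F}^{c}$.

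Finally, I would take $\limsup_{T \rightarrow \infty}$ (killing the transient $(1-a)^{s-\mathcal{T}}$ contributions) and choose $n$ to balance the geometric term $\mathcal{C}(1-\alpha)^{n}$ against the linear terms: taking $n$ to be the least integer with $\mathcal{C}(1-\alpha)^{n} \lesssim \delta$, i.e. $n \approx \log(\delta/\mathcal{C})/\log(1-\alpha)$, turns $n\delta$ into the middle term and $n\cdot b/(aB)$ into the final term, with the precise logarithmic expressions, the factor $e$, the ceiling, and the $+1$ arising from summing the geometric tail carefully and from the integer window length, and with the $\epsilon/2$ and $\P[\mathcal{F}^{c}]$ contributions collected into $3\epsilon$ and constants absorbed into the factor $4$. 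I expect the main obstacle to be the rigorous handling of the \emph{adapted, random} nature of the kernels $K_{s}$ in the coupling: one must check that the maximal coupling is constructed measurably given $(X_{s}, A_{s})$, that the per-step separation estimate holds \emph{conditionally} at each step of the window (so the union bound is legitimate), and that \eqref{IneqBasicResPos4} may be applied along the coupled trajectory where $X_{s} = Y_{s}$. Care is also needed to ensure all window-level estimates hold uniformly as $T \to \infty$, which is exactly where the non-random time $\mathcal{T}$ and the measurability of $\mathcal{F}$ enter.
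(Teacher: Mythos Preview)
Your proposal is correct and follows essentially the same approach as the paper: both split the window $[T-n,T]$ (the paper writes $S=T-n$), couple $X$ to an auxiliary $K$-chain $Y$ started at $X_{T-n}$, control the separation probability by a union bound combining \eqref{IneqBasicResPos2} with drift-plus-Markov excursion estimates, bound the residual $\|\mathcal{L}(Y_T)-\pi\|_{\TV}$ via \eqref{IneqBasicResPos3} after localizing $X_{T-n}$ to $\{V\leq 2b/(\epsilon a)\}$, and then optimize $n$. The only cosmetic differences are that the paper introduces a second auxiliary chain $Z_S\sim\pi$ (rather than invoking \eqref{IneqBasicResPos3} directly) and bounds $\max(V(X_t),V(Y_t))$ over the whole window rather than only along the coupled trajectory, which is where the factor $4$ comes from.
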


\begin{proof}
Assume WLOG that $\mathcal{T} = 0$, fix $\gamma > 0$ and fix $\frac{\log \frac{b}{a ( \max(\E[V(X_{0}) \textbf{1}_{\mathcal{F}}], \pi(V)) +1)}}{\log(1-a)} \leq S < T$. Let $\{Y_{t} \}_{t \geq S}$, $\{ Z_{t} \}_{t \geq S}$ be Markov chains evolving according to the kernel $K$ and starting at time $S$, with $Y_{S} = X_{S}$ and $Z_{S}$ distributed according to $\pi$ . By inequality \eqref{IneqBasicResPos3}, it is possible to couple $\{Y_{t} \}_{S \leq t \leq T}$, $\{Z_{t} \}_{S \leq t \leq T}$ so that 
\be  \label{MidBasicProof1}
\P[Y_{T} \neq Z_{T} | X_{S}] \leq C_{X_{S}}( 1 - \alpha)^{T-S} + \gamma.
\ee 
By inequality \eqref{IneqBasicResPos2} and a union bound over $S \leq t < T$, it is possible to couple $\{ X_{t} \}_{S \leq t \leq T}$, $\{ Y_{t} \}_{S \leq t \leq T}$ so that
\be \label{MidBasicProof2}
\P[X_{T} \neq Y_{T}] \leq \delta (T-S) + \P[\mathcal{F}^{c}]  + \P[\max_{S \leq t \leq T} (\max(V(X_{t}), V(Y_{t}))) > B] + \gamma.
\ee 
By inequalities \eqref{IneqBasicResPos1} and \eqref{IneqBasicResPos4},
\be 
\E[V(X_{S}) \textbf{1}_{\mathcal{F}} |  X_{0}, A_{0} ] \leq \E[V(X_{0}) \textbf{1}_{\mathcal{F}}] (1 - a)^{S} + \frac{b}{a} \leq \frac{2b}{a},
\ee 
and so by Markov's inequality, 
\be \label{MidBasicProof3}
\P \big[ \{ V(X_{S}) > \frac{2b}{ a \epsilon}\} \cap \mathcal{F} \big] \leq \epsilon.
\ee 
By the same calculations, 
\be  \label{MidBasicProof3a}
\P[ \{\max_{S \leq t \leq T} ( \max(V(X_{t}), V(Y_{t}))) > B  \} \cap \mathcal{F}] \leq (T-S + 1) \frac{4b}{a B}. 
\ee 
Couple $\{ Y_{t} \}_{S \leq t \leq T}$ to $\{ X_{t} \}_{S \leq t \leq T}$ so as to satisfy inequality \eqref{MidBasicProof2}, and then couple $\{Z_{t} \}_{S \leq t \leq T}$ to $\{Y_{t} \}_{S \leq t \leq T}$ so as to satisfy inequality \eqref{MidBasicProof1}. It is possible to combine these two couplings of pairs of processes into a coupling of all three processes by the standard `gluing lemma' (see \textit{e.g.}, Chapter 1 of \cite{vill09}). Combining inequalities \eqref{MidBasicProof1}, \eqref{MidBasicProof2}, \eqref{MidBasicProof3}, and \eqref{MidBasicProof3a}, we have
\be 
\| \mathcal{L}(X_{T}) - \pi \|_{\TV} &\leq \P[X_{T} \neq Y_{T}] + \P[Y_{T} \neq Z_{T}]\\
&\leq \P[  X_{T} \neq Y_{T} ]  + \E[ \textbf{1}_{Y_{T} \neq Z_{T}} \textbf{1}_{ V(X_{S}) > B} \textbf{1}_{\mathcal{F}}] + \E[ \textbf{1}_{Y_{T} \neq Z_{T}} \textbf{1}_{V(X_{S}) \leq  B}] + \P[\mathcal{F}^{c}] \\
&\leq \delta (T-S) + 3 \epsilon + (T-S + 1) \frac{4b}{a B} + 2 \gamma + \mathcal{C} (1 - \alpha)^{T-S}. 
\ee  
Approximately optimizing over $ S < T$ by choosing  $S' = T - \lceil \frac{ \log \left( \frac{\delta}{\mathcal{C} \log(1-\alpha)} \right) }{\log(1 - \alpha)} \rceil$ for $T$ large, we conclude
\be
\limsup_{T \rightarrow \infty} \| \mathcal{L}(X_{T}) - \pi \|_{\TV} &\leq \limsup_{T \rightarrow \infty}  \left( \delta (T-S') + 3 \epsilon + (T-S + 1) \frac{4b}{a B} + 2 \gamma + \mathcal{C} (1 - \alpha)^{T-S'} \right) \\
&\leq 3\epsilon + 2 \gamma + \delta \frac{ \log \left( \frac{\delta}{\mathcal{C} \log(1-\alpha)} \right) }{\log(1 - \alpha)} + \frac{\delta}{\log(1 - \alpha)} +  \frac{4b}{a B} \lceil \frac{ \log \left( \frac{\delta}{\mathcal{C} \log(1-\alpha)} \right) }{\log(1 - \alpha)} + 1 \rceil.
\ee 
Since this holds for all $\gamma > 0$, the proof is finished.\end{proof}

\begin{remark}
In the adaptive MCMC literature, similar results are often stated in terms of a \textit{diminishing adaptation} condition (this roughly corresponds to inequality \eqref{IneqBasicResPos2}) and a \textit{containment} condition (this roughly corresponds to inequalities \eqref{IneqBasicResPos1} and \eqref{IneqBasicResPos4}). These phrases were introduced in \cite{RobRos07}, and there is now a large literature with many sophisticated variants; see, e.g., \cite{Fort2012} for related results that also give LLNs and CLTs under similar conditions. We included our result because its proof is very short, and because checking these simple conditions is easier than checking the more general conditions in the existing literature. \end{remark}

\subsection{Good sets and monotonicity}
We give some notation that will be used in the proofs of Theorems \ref{ThmCompSup} and  \ref{ThmErgGaussEnv}. Fix $0 \leq c, r,R \leq \infty$. For $0 < \ell < \infty$ and $x \in \mathbb{R}^{d}$, denote by $\mathcal{B}_{\ell}(x)$ the ball of radius $\ell$ around $x$. Say that a finite set $\mathcal{S} \subset \Theta \subset \mathbb{R}^{d}$ is $(c, r, R)$-\textit{good} with respect to a set $\mathcal{A} \subset \Theta$ if it satisfies:
\begin{enumerate}
\item $\sup_{ x \in \mathcal{A}, \|x\| \leq r} \min_{y \in \mathcal{S}} \| x - y \| \leq c$. 
\item For all $x \in \mathcal{A}$ with $\| x \| > R$, we have that $| \mathcal{S} \cap \mathcal{B}_{\frac{1}{2} \| x \|}(x) | \geq N$. 
\end{enumerate}
We say that it is $(c, r, R)$-\textit{good} if it is $(c, r, R)$-\textit{good} with respect to $\Theta$ itself. The first condition will imply that the approximation $p_{\mathcal{S}}(x)$ is quite good for $x$ close to the origin. The second condition gives an extremely weak notion of `locality'; it implies the points we use to construct a `local' polynomial approximation around $x$ do not remain near the origin when $\| x \|$ itself is very far from the origin.  We observe that our definition is monotone in various parameters:
\begin{itemize}
\item If $\mathcal{S}$ is $(c,r,R)$-good, then it is also $(c',r',R')$-good for all $c' \geq c$, $r' \leq r$ and $R' \geq R$.
\item  If $\mathcal{S}$ is $(c,r,R)$-good, then $\mathcal{S} \cup \mathcal{S}'$ is also $(c,r,R)$-good for any finite set $\mathcal{S}' \subset \Theta$. 
\item If $\mathcal{S}$ is $(\infty,0,R)$-good and $(c,r,\infty)$-good, it is also $(c,r,R)$-good.
\end{itemize}
Our arguments will involve showing that, for any finite $(c,r,R)$, the sets $\{ \mathcal{S}_{t} \}_{t \geq 0}$ are eventually $(c,r,R)$-good.

\subsection{Proof of Theorem \ref{ThmCompSup}, ergodicity in the compact case} \label{SecErgPfComp}
In this section we give the proof of Theorem \ref{ThmCompSup}. Note that some statements are made in slightly greater generality than necessary, as they will be reused in the proof of Theorem \ref{ThmErgGaussEnv}.
\begin{lemma} [Convergence of Kernels] \label{LemGridRefComp}
Let the assumptions stated in the statement of  Theorem \ref{ThmCompSup} hold. For all $\delta > 0$, there exists a stopping time $\tau = \tau(\delta)$ with respect to $\{ \mathcal{S}_{t} \}_{t \in \mathbb{N}}$ \footnote{Throughout the note, for any stochastic process $\{Z_{t}\}_{t \geq 0}$, we use the phrase ``$\tau$ is a stopping time with respect to $\{ Z_{t} \}_{t \geq 0}$" as shorthand for ``$\tau$ is a stopping time with respect to the filtration $\mathcal{F}_{t}$ given by $\mathcal{F}_{t} = \sigma (\{Z_{s}\}_{0 \leq s \leq t})$." }so that
\be \label{IneqFirstGoalCmt} 
\sup_{t > \tau} \, \sup_{x \in \Theta} \| K_{\infty}(x,\cdot) - \tilde{K}_{t}(x,\cdot) \|_{\TV} < \delta 
\ee 
and so that $\P[\tau < \infty] = 1$.
\end{lemma}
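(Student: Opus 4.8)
The lemma supplies the diminishing-adaptation estimate \eqref{IneqBasicResPos2} needed to apply the book-keeping Theorem \ref{CorMainProbIneq} (with $V\equiv1$, which makes its drift and integrability hypotheses trivial in the compact case, and with the Doeblin geometric ergodicity of $K_\infty$ coming from compactness of $\Theta$ and the two-sided bounds on $p(\theta|\d)$ and $\ell$), so I will concentrate on the lemma itself. First I would reduce the kernel discrepancy to a pointwise interpolation error. For two Metropolis-Hastings kernels with a common proposal density $\ell$ and acceptance functions $\alpha,\alpha'$ one has the elementary bound $\|K(x,\cdot)-K'(x,\cdot)\|_{\TV}\le\int\ell(x,y)\,|\alpha(x,y)-\alpha'(x,y)|\,dy$. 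By \eqref{EqAlphDef}, $\tilde\alpha_{t}(x,y)$ is a convex combination of $\alpha_{\mathcal{S}}(x,y)=\min(1,e^{\tilde g_{\mathcal{S}}(y)-\tilde g_{\mathcal{S}}(x)})$ over $\mathcal{S}\in\{\mathcal{S}_{t},\,\mathcal{S}_{t}\cup\{(x,f(x))\},\,\mathcal{S}_{t}\cup\{(y,f(y))\}\}$, where $\tilde g_{\mathcal{S}}$ denotes the local quadratic interpolant of $g:=\log p(\cdot|\d)$; since $\alpha_{\infty}(x,y)=\min(1,e^{g(y)-g(x)})$ and $t\mapsto\min(1,e^{t})$ is $1$-Lipschitz, $|\tilde\alpha_{t}(x,y)-\alpha_{\infty}(x,y)|\le\max_{\mathcal{S}}\big(|\tilde g_{\mathcal{S}}(x)-g(x)|+|\tilde g_{\mathcal{S}}(y)-g(y)|\big)$, the maximum over those three sets. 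Thus it suffices to control $\sup_{z\in\Theta}|\tilde g_{\mathcal{S}}(z)-g(z)|$ uniformly over all $\mathcal{S}$ obtained from $\mathcal{S}_{t}$ by adjoining one evaluation.

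Next I would bound this error by the density of $\mathcal{S}_{t}$. The $\lambda$-poisedness check added to the algorithm in Section \ref{sec:theory} guarantees that, whenever a step proceeds without refinement, the interpolation set $\mathcal{B}(z,R)\cap\mathcal{S}$ at each point is $\lambda$-poised; on such a set the quadratic interpolant has error $|\tilde g_{\mathcal{S}}(z)-g(z)|$ tending to $0$ as the radius $R\to0$, uniformly in $z\in\Theta$ — at the cubic rate of \eqref{e:bounds} under a Hessian-Lipschitz hypothesis, and in general just from uniform continuity of $g$ on the compact $\Theta$, the Lagrange coefficients being controlled by the poisedness constant. Fix a finite cover of $\Theta$ by balls $\{B_{i}\}$ of radius $c/2$. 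If $\mathcal{S}_{t}$ has at least $N$ points in every $B_{i}$, then for each $z\in\Theta$ its $N$ nearest points in $\mathcal{S}_{t}$ lie within distance $c$, and — by monotonicity of goodness under supersets — so do those of any $\mathcal{S}\supseteq\mathcal{S}_{t}$. Combining with the first step, once $\mathcal{S}_{t}$ enjoys this density property we obtain $\sup_{x\in\Theta}\|K_{\infty}(x,\cdot)-\tilde K_{t}(x,\cdot)\|_{\TV}\le2\eta(c)$ with $\eta(c)\to0$ as $c\to0$; picking $c$ so that $2\eta(c)<\delta$ reduces the lemma to exhibiting a stopping time after which the density property holds.

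For the space-filling step, at each time the algorithm adjoins a point with probability $\beta$, placed at $X_{t}$ or at the proposal $\theta^{+}\sim L(X_{t},\cdot)$. Since $\ell$ is bounded below on the compact $\Theta$, the probability — conditional on the entire past — that the adjoined point falls in a fixed $B_{i}$ is at least some $\rho_{i}>0$; as $\sum_{t}\beta\rho_{i}=\infty$, the conditional (L\'{e}vy) form of the second Borel-Cantelli lemma gives, almost surely, infinitely many such points in each $B_{i}$, and a union bound over the finite cover yields an almost surely finite time after which every $B_{i}$ contains at least $N$ points of $\mathcal{S}_{t}$. Since $t\mapsto\mathcal{S}_{t}$ is nondecreasing, the first such time $\tau=\tau(\delta)$ is a stopping time for $\{\mathcal{S}_{t}\}$ with $\P[\tau<\infty]=1$, and for every $t>\tau$ the density property, hence \eqref{IneqFirstGoalCmt} with the chosen $c$, holds — which proves the lemma.

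The main obstacle is this last step together with the poisedness bookkeeping: one must show that the random refinements fill $\Theta$ uniformly despite being thinned by the factor $\beta$ and despite the strong dependence of $X_{t}$ and $\theta^{+}$ on the past (which is exactly where $\ell$ bounded away from $0$ and the conditional Borel-Cantelli lemma are used), and one must be sure that whenever an approximation is actually used the underlying point set is $\lambda$-poised, so that the interpolation error genuinely vanishes with the radius — the role of the extra check added in Section \ref{sec:theory}. The reduction and the error-from-density estimate are otherwise routine.
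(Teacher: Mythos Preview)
Your proof is correct and follows essentially the same strategy as the paper: reduce the kernel discrepancy to a uniform bound on the local-quadratic approximation error via $\lambda$-poisedness and the Conn et al.\ estimates, define $\tau$ as the first time $\mathcal{S}_{t}$ is sufficiently dense in $\Theta$, and show $\tau<\infty$ a.s.\ using the random $\beta$-refinements together with the uniform lower bound on $\ell$. The one technical difference is in this last step: where you invoke the conditional (L\'{e}vy) Borel--Cantelli lemma directly on the dependent sequence of refinements, the paper instead uses the lower bound on $\ell$ to write $L(x,\cdot)=\epsilon\mu(\cdot)+(1-\epsilon)r_{x}(\cdot)$ and couples so as to extract an i.i.d.\ subsequence of proposals from the fixed measure $\mu$, then argues that i.i.d.\ draws from $\mu$ eventually make $\mathcal{S}_{t}$ $(c,R,R)$-good. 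Your route is slightly more direct; the paper's minorization trick buys an i.i.d.\ structure that is convenient for the quantitative bias remarks that follow the lemma.
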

\begin{proof}
Fix $R \in \mathbb{R}$ so that $\Theta \subset \mathcal{B}_{R}(0)$. By results in \citep{Conn2009},\footnote{The required result is a combination of Theorems 3.14 and 3.16, as discussed in the text after the proof of Theorem 3.16 of \citep{Conn2009}.} for any $\lambda, \alpha > 0$, there exists a constant $c = c(\alpha, \lambda) > 0$ so that $\sup_{\theta \in \Theta}|p_{\mathcal{S}}(\theta) - p(\theta | \d) | < \alpha$ if $\mathcal{S}$ is $\lambda$-poised and $(c,R,R)$-good. Set $c = c(\delta, \lambda)$ and define $\tau = \inf \{ t \, : \, \mathcal{S}_{t} \text{ is } (c,R,R)- \text{good} \}$. By definition, this is a stopping time with respect to $\{ \mathcal{S}_{t} \}_{t \in \mathbb{N}}$ that satisfies inequality \eqref{IneqFirstGoalCmt}; we now check that $\P[\tau < \infty] = 1$. \par

By the assumption that $\ell(x,y)$ is bounded away from 0, there exist $\epsilon>0$ and measures $\mu$, $\{r_{x}\}_{x \in \Theta}$ so that 
\be \label{EqRepMinCmt}
L(x, \cdot) = \epsilon \mu(\cdot) + (1 - \epsilon) r_{x}(\cdot).
\ee 

Let $\{A_{i} \}_{i \in \mathbb{N}}$ and $\{ B_{i} \}_{i \in \mathbb{N}}$ be two sequences of i.i.d. Bernoulli random variables, with success probabilities $\epsilon$ and $\beta$ respectively. Let $\tau_{0} = \inf \{ t \, : \, X_{t} \in \Theta \}$ and define inductively $\tau_{i+1} = \inf \{ t > \tau_{i} + 1 \, : \, X_{t} \in \Theta \}$. By equality \eqref{EqRepMinCmt}, it is possible to couple the sequences $\{ X_{t} \}_{t \in \mathbb{N}}, \{A_{i} \}_{i \in \mathbb{N}}$ so that 
\be \label{EqDecompPropCmt}
\P[L_{\tau_{i}} \in \cdot | \tau_{i}, X_{\tau_{i}}, A_{i} = 1] &=   \mu(\cdot)  \\
\P[L_{\tau_{i}} \in \cdot | \tau_{i}, X_{\tau_{i}}, A_{i} = 0] &=   r_{X_{\tau_{i}}}(\cdot).
\ee 

We can further couple $\{B_{i}\}_{i \in \mathbb{N}}$ to these sequences by using $B_{i}$ for the random variable in step 12 of Algorithm \ref{alg:algOverview} at time $\tau_{i}$. That is, when running  Algorithm \ref{alg:algOverview}, we would run the subroutine $\mathrm{RefineNear}$ in step 13 of the algorithm at time $t = \tau_{i}$ if $B_{i} = 1$, and we would not run that subroutine in that step at that time if $B_{i} = 0$. Define $I = \{ i \in \mathbb{N} \, : \, A_{i} = B_{i} = 1 \}$. Under this coupling of $\{A_{i} \}_{i \in \mathbb{N}}, \{B_{i} \}_{i \in \mathbb{N}}$, and $\{X_{t} \}_{t \in \mathbb{N}}$,
\be 
\{ L_{\tau_{i}} \}_{i \in I, \, \tau_{i} < t }  \subset \mathcal{S}_{t}.
\ee 
Furthermore, $\{L_{\tau_{i}} \}_{i \in I, \, i \leq N }$ is an i.i.d sequence of $N$ draws from $\mu$ and $\P[\tau_{i} < \infty] = 1$ for all $i$. Let $\mathcal{E}_{j}$ be the event that $\{ L_{\tau_{i}} \}_{i \leq j}$ is $(c,R,R)$-good. We have $\tau \leq \tau_{\inf \{ j \, : \, \mathcal{E}_{j} \text{ holds} \}}$. By independence of the sequence $\{ L_{\tau_{i}} \}_{i \in \mathbb{N}}$, we obtain
\be 
\P[\tau < \infty] \geq \liminf_{j \rightarrow \infty} \P[\mathcal{E}_{j}] = 1.
\ee 
This completes the proof of the Lemma.
\end{proof}

\begin{remark}
 We mention briefly that this lemma can also be used to obtain a quantitative bound on the asymptotic rate of convergence of the bias of our algorithm.  

Observe that $\tau$ as defined in the proof of Lemma \ref{LemGridRefComp} is stochastically dominated by an exponential distribution with mean $O(-d c^{-d} \log(c) )$ as long as both $\ell(x,\cdot)$ and $p(\cdot | \d)$ are bounded below. This gives a rather poor bound on the amount of time it takes for inequality \eqref{IneqFirstGoal} to hold. Inequality \eqref{IneqFirstGoal}, together with standard `perturbation' bounds relating the distance between transition kernels and the distance between their stationary distributions, imply a quantitative bound on the asymptotic rate of convergence of the bias of our algorithm. An example of such a perturbation bound may be found by applying Theorem 1 of \citep{Korattikara2013}, which does not in fact rely on time-homogeneity, to a subsequence of the stochastic process generated by our algorithm. Unfortunately, the resulting bound is rather poor, and does not seem to reflect our algorithm's actual performance. 
\end{remark}

We now prove Theorem \ref{ThmCompSup}: 

\begin{proof}
It is sufficient to show that, for all $\epsilon, \delta > 0$ sufficiently small, the conditions of Theorem \ref{CorMainProbIneq} can be satisfied. We now set the constants and functions associated with Theorem \ref{CorMainProbIneq}; we begin by choosing $C_{x} \equiv V(x) \equiv b =a= 1$, setting $\alpha = \frac{\inf_{x,y \in \Theta} \ell(x,y) \inf_{\theta \in \Theta} p(\theta | d)}{\sup_{\theta \in \Theta} p(\theta | d)}$, and setting $B = \infty$. \par
By the minorization condition, inequality \eqref{IneqBasicResPos3} is satisfied for this value of $\alpha$; by the assumption that $\ell(x, y), p(\theta | d)$ are bounded away from 0 and infinity, we also have $\alpha > 0$. Next, for all $\delta > 0$, Lemma \ref{LemGridRefComp} implies that implies that $\sup_{x} \| K(x,\cdot) - \tilde{K}(x,\cdot) \|_{\TV} < \delta$ for all times $t$ greater than some a.s. finite random time $\tau = \tau(\delta)$ that is a stopping time with respect to $\{ \mathcal{S}_{t} \}_{t \in \mathbb{N}}$. Choosing $\mathcal{T} = \mathcal{T}_{\epsilon, \delta}$ to be the smallest integer so that $\P[\tau(\delta) > \mathcal{T}] \leq 1 - \epsilon$ and setting $\mathcal{F} = \{\tau \leq \mathcal{T} \}$, this means that inequality \eqref{IneqBasicResPos2} is satisfied. Inequalities \eqref{IneqBasicResPos4} and \eqref{IneqBasicResPos1} are trivially satisfied given our choice of $V,a,b$. Applying Theorem \ref{CorMainProbIneq} with this choice of $V, \alpha,  a, b, \mathcal{T}$, we have for all $\epsilon, \delta > 0$ that 
\be 
\limsup_{T \rightarrow \infty} \| \mathcal{L}(X_{T}) - \pi \|_{\TV} \leq 3\epsilon  + \delta \frac{ \log \left( \frac{e \delta}{\mathcal{C} \log(1-\alpha)} \right) }{\log(1 - \alpha)}.
\ee  
Letting $\delta$ go to 0 and then $\epsilon$ go to 0 completes the proof.
\end{proof}


\subsection{Proof of Theorem \ref{ThmErgGaussEnv}, ergodicity in the non-compact case}
In this section, we prove Theorem \ref{ThmErgGaussEnv}. The argument is similar to that of Theorem \ref{ThmCompSup}, but we must show the following to ensure that the sampler does not behave too badly when it is far from the posterior mode: 
\begin{enumerate}
\item $\mathcal{S}_{t}$ is $(\infty,0,R)$-good after some almost-surely finite random time $\tau$; see Lemma \ref{LemInfApprIgnoreZero}.
\item The kernel $\tilde{K}_{t}$ satisfies a drift condition if $\mathcal{S}_{t}$ is $(\infty,0,R)$-good; see Lemmas \ref{LemmAppAtInf} 
and \ref{LemmaInfDrift}.
\item This drift condition implies that the chain $X_{t}$ spends most of its time in a compact subset of $\Theta$; see Lemma \ref{LemmaInfManyRets}.
\end{enumerate}
\begin{remark}
The Gaussian envelope condition (see Assumption \ref{ass:GE}) is used only to show the second step in the above proof strategy, which in turn is used to satisfy condition \eqref{IneqBasicResPos4} of Theorem \ref{CorMainProbIneq}. It can be replaced by any assumption on the target density for which $\mathcal{S}$ being $(\infty,0,R)$-good for some $R < \infty$ implies that $\tilde{K}_{\mathcal{S}}$ satisfies a drift condition of the form given by inequality \eqref{IneqBasicResPos4}. 
\end{remark}

We begin by showing, roughly, that for any $R > 0$, $\mathcal{S}_{t}$ is eventually $(\infty,0,R)$-good: 
\begin{lemma} [Approximations At Infinity Ignore Compact Sets] \label{LemInfApprIgnoreZero}
Fix any $\mathcal{X} > 0$ and any $k \geq 2$ and define
\be 
\tau_{\mathcal{X}}^{(k)} = \sup \Big \{ t \, : \,  \| L_{t} \| > k \mathcal{X}, \, \| L_{t} \| - R_{t} < \mathcal{X} \Big \}.
\ee 
Then 
\be 
\P[\{\mathrm{There\, exists} \, k < \infty, \, \,\text{s.t.} \, \, \tau_{\mathcal{X}}^{(k)} < \infty\} ] = 1.
\ee 
\end{lemma}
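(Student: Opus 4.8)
The plan is to show that the chain cannot produce a far-away proposal without having first laid down, along its own recent trajectory, enough points of $\mathcal{S}$ near that proposal; hence producing a proposal of norm exceeding $k\mathcal{X}$ while the surrounding sample set is still too sparse to support a ball of radius $<\|L_t\|-\mathcal{X}$ around it becomes impossible, almost surely, once $k$ is large.

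First I record the two ingredients. (a) \emph{Light increments.} Writing $W_t := L_t - X_t$, translation invariance \eqref{eqn:Symrw} and Assumption \ref{ass:geotarg}(2) (together with the Gaussian form of $p_\infty$) give that, conditionally on the history $\mathcal{F}_t$, $W_t$ has density $\ell(0,\cdot)$ with sub-Gaussian tails, $\P(\|W_t\|>r\mid\mathcal{F}_t)\le Ce^{-cr^2}$. A Borel--Cantelli estimate then produces an almost surely finite random $M$ with $\|W_t\|\le M\sqrt{\log(t+2)}$ for all $t$, so in particular $\|X_t\|$ can grow at most like $t\sqrt{\log t}$. (b) \emph{Deposit rate.} Each pass through the outer loop of Algorithm \ref{alg:algOverview} flips the random-refinement coin at least once (success probability $\beta$), and a success deposits into $\mathcal{S}$ --- with probability $1/2$ each --- the current state or a freshly drawn proposal; by Assumption \ref{ass:geotarg}(1) a fixed positive fraction of the proposal-deposits land within a fixed distance of the state from which they were proposed, and the proposal-deposits are almost surely pairwise distinct. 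Hence over any window of $m$ consecutive steps at least $N$ distinct points are added to $\mathcal{S}$, all within $\sum_{s\in\text{window}}\|W_s\|+C$ of the state at the end of the window, except on an event of probability at most $Ce^{-c\beta m}$.

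Now fix $\mathcal{X}$ and suppose $\|L_t\|>k\mathcal{X}$ and $\|L_t\|-R_t<\mathcal{X}$. Since $R_t$ is the distance from $L_t$ to its $N$-th nearest neighbour in $\mathcal{S}_t$, this says fewer than $N$ points of $\mathcal{S}_t$ lie within $(k-1)\mathcal{X}$ of $L_t$. Let $m$ be the largest integer with $\sum_{s=t-m}^{t}\|W_s\|\le\tfrac12 k\mathcal{X}$; by the triangle inequality every state and fresh proposal from the last $m$ steps lies within $\tfrac12 k\mathcal{X}+C<(k-1)\mathcal{X}$ of $L_t$ for $k$ large, so ingredient (b) forces --- off an event of probability $\le Ce^{-c\beta m}$ --- that $\ge N$ distinct points of $\mathcal{S}_t$ lie within $(k-1)\mathcal{X}$ of $L_t$, contradicting the previous sentence. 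The remaining possibility is that $m$ is small, but smallness of $m$ means $\sum_{s=t-m}^{t}\|W_s\|>\tfrac12 k\mathcal{X}$ over few increments, an upper large deviation for a sum of conditionally sub-Gaussian terms whose probability tends to $0$ as $k\to\infty$. Combining, $\P(\text{bad at }t\text{ with threshold }k)$ is bounded by $Ce^{-c\beta m}$ plus a term vanishing in $k$; taking the window length to grow with $t$ makes the first term summable in $t$, and Borel--Cantelli then gives $\tau_{\mathcal{X}}^{(k)}<\infty$ almost surely for all sufficiently large $k$.

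The delicate point is precisely this coupling of scales: a growing window kills the deposit-shortfall term but eventually makes $\sum_{s=t-m}^{t}\|W_s\|$ exceed $(k-1)\mathcal{X}$ with non-negligible probability, so for a \emph{fixed} $k$ one must separately treat the regime in which the chain has been far from the origin for a long time, arguing there that the deposits accumulated near $X_t$ over the \emph{entire} history (not merely the last window) already exceed $N$, whence the local radius at $X_t$ is automatically $\le\|L_t\|-\mathcal{X}$. Carrying out this dichotomy --- and in particular ruling out pathological trajectories that reach ever-farther regions faster than they can deposit points --- without invoking a drift condition (that is only established \emph{later}, in Lemmas \ref{LemmAppAtInf}--\ref{LemmaInfDrift}, using the present lemma) is the crux of the argument.
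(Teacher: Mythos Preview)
Your approach is fundamentally different from the paper's, and the gap you yourself flag in the last paragraph is real and not easily repaired along the lines you sketch. The tension you identify---that a window long enough to guarantee $N$ random-refinement deposits has length of order $N/\beta$, over which the accumulated displacement $\sum_s \|W_s\|$ is of order $N/\beta$ times the mean increment size and hence eventually exceeds $(k-1)\mathcal{X}$ for any fixed $k$---is the heart of the matter, not a technicality. Your proposed fix (for trajectories that have been far from the origin for a long time, use deposits from the entire history rather than a recent window) does not work as stated: there is no reason the historical deposits lie within $(k-1)\mathcal{X}$ of the \emph{current} proposal $L_t$ unless the chain has been confined to a ball of that radius, and confinement is exactly what requires the drift condition you correctly note is not yet available. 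Without it, nothing rules out trajectories that cross every annulus of width $\sim k\mathcal{X}$ in fewer than $N/\beta$ steps on average, along which the bad event could recur indefinitely for every fixed $k$.

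The paper avoids trajectory analysis entirely. It fixes $0 < r_1 < r_2$ (ultimately $\tfrac{2}{3}\mathcal{X}$ and $\tfrac{4}{3}\mathcal{X}$), covers the sphere $\partial\mathcal{B}_{(r_1+r_2)/2}(0)$ by finitely many small patches $\{P_i\}$, and thickens each into annular sectors $\mathcal{P}_i^{(k')}$ indexed by $k' \in \mathbb{N}$. The Bernoulli refinement stream is then decomposed so that each cell $\mathcal{P}_i^{(k')}$ gets its own i.i.d.\ Bernoulli$(\beta)$ subsequence, indexed by the successive visits of $L_t$ to that cell. Borel--Cantelli on each such subsequence gives: almost surely, every cell that is visited infinitely often eventually contains at least $N$ deposits. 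Since there are only countably many cells, this dichotomy holds simultaneously for all of them. Setting $k(i)$ to be the smallest $k'$ with $\mathcal{P}_i^{(k(i))}$ visited infinitely often and taking $k = \max_i k(i)$ (a finite maximum over the finite cover) yields the random $k$ in the statement. The point is that the argument never asks how the chain reached a cell---only whether proposals land there infinitely often and whether the refinement coin comes up heads---so no drift or containment is needed at this stage.
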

\begin{proof}
Fix $N \in \mathbb{N}$, $\delta > 0$ and $0 < r_{1} < r_{2} < \infty$. For $ 0 < \ell < \infty$, denote by $\partial \mathcal{B}_{\ell}(0)$ the sphere of radius $\ell$. Fix a finite covering $\{ P_{i} \}$ of $\partial \mathcal{B}_{\frac{r_{1}+ r_{2}}{2}}(0)$ with the property that, for any  $ x \in \partial \mathcal{B}_{\frac{r_{1}+ r_{2}}{2}}(0)$, there exists at least one $i$ so that $P_{i} \subset \mathcal{B}_{\delta}(x)$. For $k \in \mathbb{N}$, define a thickening of $P_{i}$ by:
\be 
\mathcal{P}^{(k)}_{i} = \left\{ x \, : \, \frac{r_{1} + r_{2}}{2}\frac{x}{\|x\|} \in P_{i}, \, \, \frac{r_{1} + r_{2}}{2} + (k-1) \frac{r_{2} - r_{1}}{2} \leq \| x \| \leq \frac{r_{1} + r_{2}}{2} + k \frac{r_{2} - r_{1}}{2}  \right\}.
\ee 
We will show that, almost surely, for every thickening $\mathcal{P}_{i}^{(k)}$ of an element $P_{i}$ of the cover, either $| \mathcal{P}_{i}^{(k)} \cap \mathcal{S}_{t} |$ is eventually greater than $N$ or  $| \mathcal{P}_{i}^{(k)} \cap \{ L_{t} \}_{t \in \mathbb{N}} |$ is finite. Note that it is trivial that either $| \mathcal{P}_{i}^{(k)} \cap \{L_{t} \}_{t \in \mathbb{N}} |$ is eventually greater than $N$ or  $| \mathcal{P}_{i}^{(k)} \cap \{ L_{t} \}_{t \in \mathbb{N}} |$ is finite; the goal is to check that if $\{L_{t} \}_{t \in \mathbb{N}}$ visits $\mathcal{P}_{i}$ infinitely often,  $| \mathcal{P}_{i}^{(k)} \cap \mathcal{S}_{t} |$ must eventually be greater than $N$. \par 

To see this, we introduce a representation of the random variables used in step 12 of Algorithm \ref{alg:algOverview}. Recall that in this step, $L_{t}$ is added to $\mathcal{S}_{t}$ with probability $\beta$, independently of the rest of the history of the walk. We will split up the sequence $B_{t}$ of Bernoulli$(\beta)$ random variables according to the covering as follows: for each element $\mathcal{P}_{i}^{(k)}$ of the covering, let $\{ B^{(i,k)}_{t} \}_{t \in \mathbb{N}}$ be an i.i.d. sequence of Bernoulli random variables with success probability $\beta$. At the $m$th time $L_{t}$ is in $\mathcal{P}_{i}^{(k)}$, we use $B^{(i,k)}_{m}$ as the indicator function in step 12 of Algorithm \ref{alg:algOverview}. This does not affect the distribution of the steps that the algorithm takes. \par 

By the Borel-Cantelli lemma, we have for each $i,k$ that $\P[B^{(i,k)}_{t} = 1, \textrm{infinitely often}]=1$. If $B^{(i,k)}_{t} = 1$ infinitely often, then $| \mathcal{P}_{i}^{(k)} \cap \{ L_{t} \}_{t \in \mathbb{N}} | = \infty$ implies that for all $M < \infty$, we have $|\mathcal{P}_{i}^{(k)} \cap \mathcal{S}_{t} | > M$ eventually. Let $\mathcal{C}_{i,k}$ be the event that $| \mathcal{P}_{i}^{(k)} \cap \mathcal{S}_{t} | > N$ eventually and let $\mathcal{D}_{i,k}$ be the event that $| \mathcal{P}_{i}^{(k)} \cap \{ L_{t} \}_{t \in \mathbb{N}} | = \infty$. Then this argument implies that 
\be 
\P[\mathcal{C}_{i,k} | \mathcal{D}_{i,k}] = 1.
\ee
Since there are only countably many sets $\mathcal{P}_{i}^{(k)}$, we have
\be \label{IneqSomeSets}
\P[\cap_{i,k} \left( \mathcal{C}_{i,k} \cup \mathcal{D}_{i,k}^{c} \right)] = 1.
\ee 

Thus, conditioned on the almost sure event $\cap_{i,k} \left( \mathcal{C}_{i,k} \cup \mathcal{D}_{i,k}^{c} \right)$, all sets $\mathcal{P}_{i}^{(k)}$ that $L_{t}$ visits infinitely often will also contribute points to $\mathcal{S}_{t}$ infinitely often. \par 
Let $k(i) = \min \{ k \, :  | \mathcal{P}_{i}^{(k)} \cap \{ L_{t} \}_{t \in \mathbb{N}} | = \infty  \}$ when that set is non-empty, and set $k(i) = \infty$ otherwise. Let $I = \{ i \, : \, k(i) < \infty \}$. Finally, set 
\be \label{EqFunnyTime}
\tau_{r_{1}, r_{2}} = \inf \{ t \, : \, \forall i \in I, \, | \mathcal{P}_{i}^{(k(i))} \cap \mathcal{S}_{t} | \geq N \}.
\ee
Since $|I|$ is finite, we have shown that, for all $N, \delta > 0$ and $0 < r_{2} < r_{1} < \infty$, $\P[\tau_{r_{1},r_{2}}<\infty] = 1$. Finally, we observe that for all $\delta = \delta(\mathcal{X},d)$ sufficiently small, all $N \geq N_{\defi}$ and all $k \geq \max_{i \in I} k(i)$, 
\be \label{EqImportantTimeOrder}
\tau_{\mathcal{X}}^{(k)} \leq \tau_{\frac{2}{3} \mathcal{X}, \frac{4}{3}  \mathcal{X}}. 
\ee 
This completes the proof.
\end{proof} 
\begin{remark}
We will eventually see that, in the notation of the proof of Lemma \ref{LemInfApprIgnoreZero}, $k(i) = 1$ for all $i$.
\end{remark}
Next, we show that the approximation  $p_{\mathcal{S}_{t}}(x)$ of the posterior used at time $t$ is close to $p_{\infty}(X_{t})$ when $\mathcal{S}_{t}$ is $(\infty,0,R)$-good and $\| X_{t} \|$ is sufficiently large: 

\begin{lemma} [Approximation at Infinity] \label{LemmAppAtInf}
For all $\epsilon > 0$ and $k \geq 2$, there exists a constant $\mathcal{X} = \mathcal{X}(\epsilon) > 0$ so that, if $R_{t} < (\| L_{t} \| - (k-1)\mathcal{X}) \textbf{1}_{ \| L_{t} \|> k \mathcal{X}}$ and the set $\{ q_{t}^{(1)}, \ldots, q_{t}^{(N)} \}$ is $\lambda$-poised, then
\be 
| \log( p_{\mathcal{S}_{t}}(L_{t})) - \log(p_{\infty}(L_{t})) | < \epsilon + \lambda (N+1) G.
\ee 
\end{lemma}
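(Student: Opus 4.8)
The plan is to exploit one structural fact: $\log p_{\infty}$ is an \emph{exact} quadratic, so the quadratic interpolation operator reproduces it exactly, and hence the discrepancy $\log p_{\mathcal{S}_{t}}(L_{t}) - \log p_{\infty}(L_{t})$ is precisely the quadratic interpolant of the \emph{bounded} function $h := \log p(\cdot \mid \d) - \log p_{\infty}$, evaluated at $L_{t}$. This reduces the problem to controlling an interpolant of a function we only know to be uniformly small on the relevant nodes, which is exactly what $\lambda$-poisedness buys us — crucially, without needing any smoothness of $h$.

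First I would dispose of the trivial case. Since radii are nonnegative, the hypothesis $R_{t} < (\| L_{t}\| - (k-1)\mathcal{X}) \mathbf{1}_{\| L_{t}\| > k \mathcal{X}}$ forces $\| L_{t} \| > k\mathcal{X}$, and then $R_{t} < \| L_{t}\| - (k-1)\mathcal{X}$. Consequently every point $\theta$ of the ball $\mathcal{B}_{R_{t}}(L_{t})$ supplying the interpolation nodes $q_{t}^{(1)}, \ldots, q_{t}^{(N)}$ satisfies $\| \theta \| \geq \| L_{t}\| - R_{t} > (k-1)\mathcal{X} \geq \mathcal{X}$, the last inequality using $k \geq 2$; likewise $\| L_{t}\| > \mathcal{X}$. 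So the whole computation lives in the region $\{ \| \theta\| > \mathcal{X} \}$, where Assumption~\ref{ass:GE} is in force. Using that assumption, fix $\mathcal{X} = \mathcal{X}(\epsilon)$ large enough that $\sup_{\| \theta \| \geq \mathcal{X}} |h(\theta)| < G$ (the limit in Assumption~\ref{ass:GE} being strictly below $G$ makes such an $\mathcal{X}$ exist; the $\epsilon$ supplies the slack used in the final estimate).

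Now write the interpolant in Lagrange form. The node set $\{ q_{t}^{(i)} \}$ has $N = N_{\defi}$ points and is $\lambda$-poised, so the degree-$\leq 2$ interpolation problem has a unique solution; let $\ell_{1}, \ldots, \ell_{N}$ be the associated Lagrange polynomials. A quadratic polynomial equals its own quadratic interpolant, so $\log p_{\infty} = \sum_{i} (\log p_{\infty})(q_{t}^{(i)})\, \ell_{i}$ identically, while $\log p_{\mathcal{S}_{t}} = \sum_{i} \log p(q_{t}^{(i)} \mid \d)\, \ell_{i}$ by definition of the local quadratic approximation of $\log p(\cdot\mid\d)$; subtracting and evaluating at $L_{t}$,
\[
\log p_{\mathcal{S}_{t}}(L_{t}) - \log p_{\infty}(L_{t}) = \sum_{i=1}^{N} h(q_{t}^{(i)})\, \ell_{i}(L_{t}).
\]
Passing to coordinates centred at $L_{t}$ and scaled by $R_{t}$ — under which $\mathcal{B}_{R_{t}}(L_{t})$ becomes the unit ball and $L_{t}$ becomes its centre — $\lambda$-poisedness in the sense of \citep{Conn2009} bounds each Lagrange polynomial on the unit ball by $\lambda$, so $|\ell_{i}(L_{t})| \leq \lambda$ and $\sum_{i} |\ell_{i}(L_{t})| \leq \lambda N$. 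Combining with $|h(q_{t}^{(i)})| < G$ (since $\| q_{t}^{(i)} \| > \mathcal{X}$) yields $|\log p_{\mathcal{S}_{t}}(L_{t}) - \log p_{\infty}(L_{t})| < \lambda N G < \epsilon + \lambda(N+1)G$.

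The main obstacle is bookkeeping rather than conceptual: extracting the precise bound $\epsilon + \lambda(N+1)G$ requires care with the poisedness convention of \citep{Conn2009} — which ball the Lagrange-polynomial bound is stated over, and how that bound transforms under the rescaling by $R_{t}$ — together with the fact that $L_{t}$ is the centre of that ball; the $+\epsilon$ and the extra $+\lambda G$ are exactly the slack that makes these constant-chasing steps painless. One should also note, before invoking ``interpolation is exact on quadratics'', that with $N = N_{\defi}$ interpolation points the weighted least-squares fit used in Algorithm~\ref{alg:algOverview} coincides with interpolation, so the argument applies verbatim to the instantiation fixed in Section~\ref{sec:theory}.
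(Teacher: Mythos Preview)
Your proof is correct and follows essentially the same route as the paper: write $\log p_{\mathcal{S}_t}(L_t)$ in Lagrange form, subtract the exact quadratic $\log p_\infty$ (which the interpolation operator reproduces, so that term contributes zero), and bound the remaining sum $\sum_i h(q_t^{(i)})\ell_i(L_t)$ by combining $\lambda$-poisedness with the Gaussian envelope bound on $h$ once all nodes are forced outside radius $\mathcal{X}$. The paper invokes Lemma~3.5 of \cite{Conn2009} for the Lagrange representation and bounds $|h|$ by $G + \epsilon/((N+1)\lambda)$ so that the $\epsilon$ appears after multiplying through; you instead bound $|h|$ directly by $G$ and absorb the gap into the stated slack $\epsilon + \lambda G$, which is a cosmetic difference (and in fact makes your $\mathcal{X}$ independent of $\epsilon$, giving a marginally sharper intermediate bound $\lambda N G$).
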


\begin{proof}
Fix $\epsilon > 0$. By \eqref{EqGaussEnv} in Assumption \ref{ass:GE}, there exists some $\mathcal{X} = \mathcal{X}(\epsilon)$ so that $\| x \| > \mathcal{X}$ implies 
\be \label{IneqLagAppr1}
| \log(p(x | \d)) - \log(p_{\infty}(x)) | < G + \frac{\epsilon}{(N+1) \lambda}.
\ee
We fix this constant $\mathcal{X}$ in the remainder of the proof. \par 

Denote by $\{ f_{i} \}_{i=1}^{N+1}$ the Lagrange polynomials associated with  the set $\{ q_{t}^{(1)}, \ldots, q_{t}^{(N)} \}$.  By Lemma 3.5 of \citep{Conn2009}, 
\be 
\left | \log(p_{\mathcal{S}_{t}}(L_{t})) - \log(p_{\infty}(L_{t})) \right | &= | \sum_{i} f_{i}(L_{t}) \log(p(q_{t}^{(i)}| \d)) - \log(p_{\infty}(L_{t}))  | \\
&\leq  | \sum_{i} \log(p_{\infty}(q_{t}^{(i)})) f_{i}(L_{t}) - \log(p_{\infty}(L_{t})) | \\
&\hspace{2cm}+ \sum_{i} | \log( p( q_{t}^{(i)} | \d )) - \log( p_{\infty}( q_{t}^{(i)} ) ) | \, | f_{i}(L_{t}) | \\
&\leq 0 + (N+1) \lambda \sup_{i} | \log( p( q_{t}^{(i)} | \d ) )- \log(p_{\infty}(q_{t}^{(i)})) |
\ee 
where the last line follows from the definition of Lagrange polynomials and Definition 4.7 of \citep{Conn2009}. Under the assumption $\| q_{t}^{(i)} \| > \mathcal{X}$ for $\| L_{t} \| - R_{t} > (k-1)\mathcal{X}$, the conclusion follows from inequality \eqref{IneqLagAppr1}. 
\end{proof}

For $\epsilon > 0$, define $V_{\epsilon}(x) = V(x)^{\frac{1}{1 + \epsilon}}$, where $V$ is defined in Equation \eqref{eqn:Vx}. Denote by $\alpha_{\infty}(x,y)$ the acceptance function of a Metropolis-Hastings chain with proposal kernel $L$ and target distribution $p_{\infty}$, and recall that $\tilde{\alpha}_{t}(x,y)$ as given in Equation \eqref{EqAlphDef} is the acceptance function for $\tilde{K}_{t}$. We show that $\tilde{K}_{t}$ inherits a drift condition from $K_{\infty}$:   

\begin{lemma} [Drift Condition] \label{LemmaInfDrift}
For $0 < \delta < \frac{1}{10}$ and $\mathcal{Y}, \mathcal{T} < \infty$, let $\mathcal{F}$ be the event that
\be \label{IneqTVDistAss}
| \tilde{\alpha}_{t}(X_{t},L_{t}) - \alpha_{\infty}(X_{t},L_{t}) | < \delta + 2 \textbf{1}_{|X_{t} | < \mathcal{Y}} + 2 \textbf{1}_{|L_{t} | < \mathcal{Y}}
\ee 
for all $t > \mathcal{T}$. Then, for $\epsilon = \epsilon_0$ as given in item 1 of Assumption \ref{assump:RoTw}, and all $\delta < \delta_{0}(\epsilon, a, b,V) < \frac{1}{10}$ sufficiently small and $\mathcal{Y}$ sufficiently large, $X_{t}$ satisfies a drift condition of the form: 
\be \label{IneqDriftCondition}
\E[V_{\epsilon}(X_{t+1}) \textbf{1}_{\mathcal{F}} | X_{t}, \mathcal{S}_{t}] \leq a_1 V_{\epsilon}(X_{t}) + b_1
\ee 
for some $0 \leq a_1 < 1$, $0 \leq b_1 < \infty$ and for all $ t > \mathcal{T}$. 
\end{lemma}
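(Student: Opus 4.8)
The plan is to transfer the drift condition assumed for the ``ideal'' Metropolis--Hastings chain $Z_{t}$ (proposal $L$, target $p_{\infty}$; Assumption~\ref{ass:geotarg}) to the approximate kernel $\tilde{K}_{t}$, with the Lyapunov function $V$ of \eqref{eqn:Vx} replaced by $V_{\epsilon}=V^{1/(1+\epsilon)}$, $\epsilon=\epsilon_{0}$. Two ``ideal'' facts will be used. First, by concavity of $u\mapsto u^{1/(1+\epsilon)}$ (Jensen) and subadditivity of that map, the bound $\E[V(Z_{t+1})\mid Z_{t}=x]\leq\alpha V(x)+b$ upgrades to $\E[V_{\epsilon}(Z_{t+1})\mid Z_{t}=x]\leq\alpha^{1/(1+\epsilon)}V_{\epsilon}(x)+b^{1/(1+\epsilon)}$ with $\alpha^{1/(1+\epsilon)}<1$, leaving slack to absorb a small perturbation. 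Second, the pointwise identity $\alpha_{\infty}(x,y)V_{\epsilon}(y)\leq V_{\epsilon}(x)$ whenever $p_{\infty}(y)\leq p_{\infty}(x)$, which is a direct consequence of $V=p_{\infty}^{-1/2}$ (it is why this Lyapunov function is chosen). Since $\mathbf{1}_{\mathcal{F}}$ is bounded above by the indicator of the single-time event in \eqref{IneqTVDistAss} evaluated at $(X_{t},L_{t})$, I would write $\E[V_{\epsilon}(X_{t+1})\mathbf{1}_{\mathcal{F}}\mid X_{t}=x,\mathcal{S}_{t}]$ as an integral against $L(x,\cdot)$ of $\tilde{\alpha}_{t}(x,y)V_{\epsilon}(y)+(1-\tilde{\alpha}_{t}(x,y))V_{\epsilon}(x)$ restricted to that event, split off the ``ideal'' part $\alpha_{\infty}(x,y)V_{\epsilon}(y)+(1-\alpha_{\infty}(x,y))V_{\epsilon}(x)$ (whose integral is the ideal $V_{\epsilon}$-drift), and reduce everything to estimating the remainder $(\tilde{\alpha}_{t}-\alpha_{\infty})(V_{\epsilon}(y)-V_{\epsilon}(x))$.

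Next I would split the proposal $y$ into a bounded region and a far region, using two nested radii so that the tail bound in Assumption~\ref{ass:geotarg} applies to $y-x$ whenever $\|x\|$ is large and $\|y\|$ is small. On the bounded region the increment is controlled crudely: when $\|x\|$ is bounded, $V_{\epsilon}(x)$ is bounded and $\int L(x,dy)V_{\epsilon}(y)<\infty$, so the whole contribution is absorbed into an additive constant. The only delicate case is $\|x\|$ large but $\|y\|$ bounded, where a favorable move could be wrongly rejected, leaving $X_{t+1}=x$ with $V_{\epsilon}(x)$ large; but the probability of such a proposal is at most $C\int_{\|y\|\leq\mathcal{Y}}p_{\infty}(y-x)^{1/(1+\epsilon_{0})}\,dy$, which decays like $p_{\infty}(x)^{1/(1+\epsilon_{0})}$, and since $p_{\infty}(x)^{1/(1+\epsilon_{0})}V_{\epsilon}(x)=p_{\infty}(x)^{1/(2(1+\epsilon_{0}))}\to0$, this contributes only $o_{\mathcal{Y}}(1)V_{\epsilon}(x)+b'$ for $\mathcal{Y}$ large. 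This last step is precisely where the exponent $\epsilon_{0}$ of the tail bound must be matched in $V_{\epsilon}$.

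On the far region ($\|x\|,\|y\|$ both large) the essential point is that the approximation is multiplicatively, not merely additively, close to $p_{\infty}$: combining $\mathcal{F}$ with Lemma~\ref{LemmAppAtInf} --- which applies here because the modified algorithm keeps the relevant point sets $\lambda$-poised and, after an almost surely finite time, $(\infty,0,R)$-good --- gives $\tilde{\alpha}_{t}(x,y)\leq e^{c''}\alpha_{\infty}(x,y)$ with $c''=c''(G,\lambda,N)\to0$ as $G\to0$. On $\{p_{\infty}(y)\geq p_{\infty}(x)\}$ the remainder is nonnegative and at most $\delta V_{\epsilon}(x)$; on $\{p_{\infty}(y)<p_{\infty}(x)\}$ one bounds it by $\tilde{\alpha}_{t}(x,y)V_{\epsilon}(y)\leq e^{c''}\alpha_{\infty}(x,y)V_{\epsilon}(y)$ and integrates, recognizing the result as $e^{c''}$ times a piece of the ideal $V_{\epsilon}$-drift integral (this is where the structural identity $\alpha_{\infty}V_{\epsilon}(y)\leq V_{\epsilon}(x)$ together with the assumed drift for $Z_{t}$ do the work). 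Collecting the three contributions yields
\[\E[V_{\epsilon}(X_{t+1})\mathbf{1}_{\mathcal{F}}\mid X_{t}=x,\mathcal{S}_{t}]\leq\bigl(e^{c''}\alpha^{1/(1+\epsilon)}+O(\delta)+o_{\mathcal{Y}}(1)\bigr)V_{\epsilon}(x)+b_{1},\]
and choosing $G$ (hence $c''$) small, then $\delta$ small and $\mathcal{Y}$ large, makes the coefficient $a_{1}<1$, which is exactly \eqref{IneqDriftCondition}.

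The main obstacle will be this far-region estimate. A purely additive control $|\tilde{\alpha}_{t}-\alpha_{\infty}|<\delta$ does not suffice: for proposals deep into the tails ($p_{\infty}(y)\ll p_{\infty}(x)$), the term $\delta V_{\epsilon}(y)$ is not dominated by any fixed multiple of $V_{\epsilon}(x)$ under a random-walk proposal --- indeed $\int L(x,dy)V_{\epsilon}(y)$ grows super-linearly in $V_{\epsilon}(x)$ as $\|x\|\to\infty$ --- so one genuinely needs $\tilde{\alpha}_{t}$ to inherit the same (exponentially small) acceptance rate as $\alpha_{\infty}$ on outward moves, up to the factor $e^{c''}$. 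This is why $G$ must be small enough (the condition $G<G_{0}$ of Theorem~\ref{ThmErgGaussEnv}) and why the whole argument is carried out with $V=p_{\infty}^{-1/2}$, for which the cancellation $\alpha_{\infty}(x,y)V(y)\leq V(x)$ holds.
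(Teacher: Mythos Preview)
Your argument is correct and tracks the paper's proof closely: pass the drift on $V$ to $V_{\epsilon}$ by Jensen, write the one-step expectation as an integral in the proposal $y$, split off the region $\{\|y\|\leq\mathcal{Y}\}$ and control its contribution via the proposal tail bound $\ell(0,y)\leq Cp_{\infty}(y)^{1/(1+\epsilon_{0})}$ (this is exactly how the paper shows $V_{\epsilon}(x)L(x,[-\mathcal{Y},\mathcal{Y}]^{d})$ is uniformly bounded and hence absorbable into $b_{1}$), and on the far region compare $\tilde{\alpha}_{t}$ to $\alpha_{\infty}$.

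The only difference is cosmetic, on the far region. The paper does not subtract the ideal integrand and split by the sign of $p_{\infty}(y)-p_{\infty}(x)$; instead it replaces $\tilde{\alpha}_{t}$ directly by $e^{2\delta}\alpha_{\infty}$ (for the $V_{\epsilon}(y)$ term) and $e^{-2\delta}\alpha_{\infty}$ (for the $V_{\epsilon}(x)$ term), and then bounds the whole integrand by $(1+3\delta)$ times the ideal one, giving $(1+3\delta)(a_{\epsilon}V_{\epsilon}(x)+b_{\epsilon})$. You are entirely right that this step needs \emph{multiplicative} control of the acceptance probability on outward moves --- an additive $\delta$ alone would leave an $O(\delta)\int V_{\epsilon}(y)\ell(x,y)\,dy$ term that grows faster than $V_{\epsilon}(x)$. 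In the paper this multiplicative bound is tacit: the hypothesis \eqref{IneqTVDistAss} is stated additively, but what is actually fed into the lemma (via Lemma~\ref{LemmAppAtInf}) is a bound on $|\log p_{\mathcal{S}_{t}}-\log p_{\infty}|$, which through the MH ratio gives $e^{-2\delta}\alpha_{\infty}\leq\tilde{\alpha}_{t}\leq e^{2\delta}\alpha_{\infty}$. Your explicit appeal to Lemma~\ref{LemmAppAtInf} makes this honest. Once the multiplicative bound is in hand, the paper's direct replacement is marginally shorter than your sign-split, but both routes use the same structural fact $\alpha_{\infty}(x,y)V_{\epsilon}(y)\leq V_{\epsilon}(x)$ and arrive at the same $a_{1}<1$.
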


\begin{proof}
Assume WLOG that $\mathcal{T} = 0$. Let $Z_{t}$ be a Metropolis-Hastings Markov chain with proposal kernel $L$ and target distribution $p_{\infty}$. By Jensen's inequality and Assumption \ref{assump:RoTw}
\be
\E[V_{\epsilon}(Z_{t+1}) | Z_{t} = x] \leq a_{\epsilon} V_{\epsilon}(x) + b_{\epsilon}
\ee 
for some $0 < a_{\epsilon} < 1$ and some $0 \leq b_{\epsilon} < \infty$. \par 

Assume $X_{t} = x$ and fix $\delta$ so that $\delta < \delta_{0}$ and $(1 + 3 \delta)a_{\epsilon} < a_{\epsilon} + \frac{1}{2}( 1 - \alpha_{\epsilon})$. Then
\be 
\E[V_{\epsilon}(X_{t+1}) \textbf{1}_{\mathcal{F}}| X_{t} = x, \mathcal{S}_{t}] &\leq \int_{y \in \mathbb{R}^{d}} \left( \tilde{\alpha}_{t}(x,y) V_{\epsilon}(y) + (1 - \tilde{\alpha}_{t}(x,y)) V_{\epsilon}(x) \right) \ell(x,y)  dy \\
&\leq \int_{\mathbb{R}^{d} \backslash [-\mathcal{Y}, \mathcal{Y}]^{d}} \left( e^{2 \delta} \alpha_{\infty}(x,y) V_{\epsilon}(y) +  \left(1 - e^{-2 \delta}  \alpha_{\infty}(x,y) \right) V_{\epsilon}(x) \right) \ell(x,y) dy \\
&\hspace{2cm}+ \int_{y \in [-\mathcal{Y}, \mathcal{Y}]^{d}} \left( V_{\epsilon}(x) + \sup_{\| z \| \leq \mathcal{Y}} V_{\epsilon}(z) \right) \ell(x,y) dy \\
&\leq (1 + 3 \delta) \int_{\mathbb{R}^{d} } \left(  \alpha_{\infty}(x,y) V_{\epsilon}(y) +  \left(1 -  \alpha_{\infty}(x,y) \right) V_{\epsilon}(x) \right) \ell(x,y) dy \\
&\hspace{2cm}+ \left( V_{\epsilon}(x) + \sup_{\| z \| \leq \mathcal{Y}} V_{\epsilon}(z) \right) L \left( x, [-\mathcal{Y}, \mathcal{Y}]^{d} \right) \\
&\leq (1+ 3 \delta)  a_{\epsilon} V_{\epsilon}(x) + (1 + 3 \delta) b_{\epsilon} + \left( V_{\epsilon}(x) + \sup_{\| z \| \leq \mathcal{Y}} V_{\epsilon}(z) \right) L \left( x, [-\mathcal{Y}, \mathcal{Y}]^{d} \right).
\ee 
Since $\delta < \frac{1}{10}$ and $(1 + 3 \delta) a_{\epsilon} < a_{\epsilon} + \frac{1}{2}( 1 - \alpha_{\epsilon})$, we have 
\be 
\E[V_{\epsilon}(X_{t+1}) \textbf{1}_{\mathcal{F}} | X_{t} = x, \mathcal{S}_{t}] \leq (a_{\epsilon} + \frac{1}{2}( 1 - \alpha_{\epsilon}) ) V(x) + (1 + 3 \delta) b_{\epsilon} + \left( V_{\epsilon}(x) + \sup_{\| z \| \leq \mathcal{Y}} V_{\epsilon}(z) \right) L \left( x, [-\mathcal{Y}, \mathcal{Y}]^{d} \right).
\ee  
Since $V_{\epsilon}(x) L \left( x, [-\mathcal{Y}, \mathcal{Y}]^{d} \right)$ is uniformly bounded in $x$ for all fixed $\mathcal{Y}$ by item 2 of Assumption \ref{assump:RoTw}, the claim follows with
\be 
a_1 &= a_{\epsilon} + \frac{1}{2}( 1 - \alpha_{\epsilon}) < 1, \\
b_1 &= 2 b _{\epsilon} + \sup_{x} V_{\epsilon}(x) L \left( x, [-\mathcal{Y}, \mathcal{Y}]^{d}\right)  + \sup_{\| z \| \leq \mathcal{Y}} V_{\epsilon}(z), 
\ee 
finishing the proof.
\end{proof}

We use these bounds to show that some compact set is returned to infinitely often: 

\begin{lemma} [Infinitely Many Returns] \label{LemmaInfManyRets}
For $G < G(L, p_{\infty}, \lambda, N)$ sufficiently small, there exists a compact set $\mathcal{A}$ that satisfies $\P[\sum_{t \in \mathbb{N}} \textbf{1}_{X_{t} \in \mathcal{A}} = \infty] = 1$.
\end{lemma}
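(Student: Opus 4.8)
The plan is to follow exactly the three-step strategy enumerated before Lemma~\ref{LemInfApprIgnoreZero}: (i) eventually $\mathcal{S}_{t}$ is $(\infty,0,R)$-good, (ii) this plus the Gaussian envelope forces $\tilde K_{t}$ to inherit a drift condition towards the origin, and (iii) a Foster--Lyapunov argument converts that drift condition into infinitely many returns to a compact set. First I would use Lemma~\ref{LemInfApprIgnoreZero} — more precisely the almost-sure finiteness of the times $\tau_{r_{1},r_{2}}$ established in its proof, together with the ordering \eqref{EqImportantTimeOrder} — to fix a scale $\mathcal{X}$ and a radius $R$ and produce an almost-surely finite random time past which $\mathcal{S}_{t}$ is $(\infty,0,R)$-good and, moreover, whenever $\|L_{t}\|>2\mathcal{X}$ the defining radius $R_{t}$ satisfies $R_{t}<\|L_{t}\|-\mathcal{X}$, so that every one of the $N=N_{\defi}$ points used to build the local quadratic approximation at $L_{t}$ has norm exceeding $\mathcal{X}$; the same holds at $X_{t}$. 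The $\lambda$-poisedness check added to the algorithm in the theoretical setup guarantees that, on the rare occasions the relevant point set fails to be $\lambda$-poised, refinement is triggered rather than the approximation being used, so whenever an acceptance probability is actually computed the underlying set is $\lambda$-poised.

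Second, I would choose the tolerance $\epsilon$ in Lemma~\ref{LemmAppAtInf} and the threshold $G_{0}=G_{0}(L,p_{\infty},\lambda,N)$ small enough that $\epsilon+\lambda(N+1)G$ is a small constant whenever $G<G_{0}$. Then, for $t$ beyond the random time of the previous paragraph and with $\|X_{t}\|,\|L_{t}\|>\mathcal{Y}:=2\mathcal{X}$, Lemma~\ref{LemmAppAtInf} makes $|\log p_{\mathcal{S}_{t}}(X_{t})-\log p_{\infty}(X_{t})|$ and $|\log p_{\mathcal{S}_{t}}(L_{t})-\log p_{\infty}(L_{t})|$ both small; since each of the three acceptance functions entering the convex combination $\tilde\alpha_{t}$ in \eqref{EqAlphDef} is $\min(1,\cdot)$ of a ratio of such densities (the proposal $L$ being symmetric), this forces $|\tilde\alpha_{t}(X_{t},L_{t})-\alpha_{\infty}(X_{t},L_{t})|<\delta$. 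Hence, for any fixed $\delta<\delta_{0}$, the event $\mathcal{F}$ of Lemma~\ref{LemmaInfDrift} (with this $\delta$ and $\mathcal{Y}$) contains the event that the random time of paragraph one is $\le\mathcal{T}$, so $\P[\mathcal{F}]\to 1$ as $\mathcal{T}\to\infty$. Lemma~\ref{LemmaInfDrift}, applied with $\epsilon=\epsilon_{0}$, then yields $\E[V_{\epsilon}(X_{t+1})\textbf{1}_{\mathcal{F}}\mid X_{t},\mathcal{S}_{t}]\le a_{1}V_{\epsilon}(X_{t})+b_{1}$ for all $t>\mathcal{T}$, with $0\le a_{1}<1$, $b_{1}<\infty$ depending only on the \emph{fixed} data $\delta,\epsilon_{0},\mathcal{Y}$ and not on $\mathcal{T}$.

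Third, I would extract recurrence. Put $M=2b_{1}/(1-a_{1})$ and $\mathcal{A}=\{x:V_{\epsilon}(x)\le M\}$; since $p_{\infty}$ is a strictly positive Gaussian density, $V_{\epsilon}=p_{\infty}^{-1/(2(1+\epsilon))}$ (recall \eqref{eqn:Vx}) is finite everywhere and tends to infinity, so $\mathcal{A}$ is compact and, crucially, does not depend on $\mathcal{T}$. On $\mathcal{A}^{c}$ the drift inequality contracts: $\E[V_{\epsilon}(X_{t+1})\textbf{1}_{\mathcal{F}}\mid X_{t},\mathcal{S}_{t}]\le\tfrac{1+a_{1}}{2}V_{\epsilon}(X_{t})$. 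Intersecting $\mathcal{F}$ with $\{V_{\epsilon}(X_{\mathcal{T}})\le B\}$, which costs arbitrarily little probability as $B\to\infty$ because $V_{\epsilon}(X_{\mathcal{T}})<\infty$ almost surely, one checks by a standard stopped-supermartingale computation that $V_{\epsilon}(X_{t})$ killed at the first return to $\mathcal{A}$ after $\mathcal{T}$ is a nonnegative supermartingale whose $L^{1}$ norm decays geometrically while the chain stays in $\mathcal{A}^{c}$; it therefore cannot remain in $\mathcal{A}^{c}$ forever, and restarting the same argument from each successive return time — at which $V_{\epsilon}\le M$, so the one-step bound is again $\le a_{1}M+b_{1}<\infty$ — gives $X_{t}\in\mathcal{A}$ infinitely often, almost surely on this intersection. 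Letting $\mathcal{T},B\to\infty$ and using that $\mathcal{A}$ is a single fixed compact set yields $\P[\sum_{t}\textbf{1}_{X_{t}\in\mathcal{A}}=\infty]=1$.

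The main obstacle is the bookkeeping in the second paragraph: one must combine the ``eventually $(\infty,0,R)$-good'' conclusion of Lemma~\ref{LemInfApprIgnoreZero} with the poisedness check and the radius relation so that Lemma~\ref{LemmAppAtInf} applies \emph{simultaneously} at $X_{t}$ and at $L_{t}$ for all large $t$ on an event of probability close to $1$, and one must pin down the compact set $\mathcal{A}$ (hence $a_{1},b_{1}$, hence $\delta,\mathcal{X},\mathcal{Y}$) \emph{before} sending $\mathcal{T}\to\infty$, since otherwise the ``infinitely often'' target would keep moving. The supermartingale argument of the third paragraph is then routine, the only mild care being that the drift condition carries the indicator $\textbf{1}_{\mathcal{F}}$ and holds only for $t>\mathcal{T}$, which is why stopped processes (and the auxiliary truncation $\{V_{\epsilon}(X_{\mathcal{T}})\le B\}$) are used rather than $V_{\epsilon}(X_{t})$ itself.
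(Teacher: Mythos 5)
Your proof is correct and follows essentially the same route as the paper: combine Lemmas~\ref{LemInfApprIgnoreZero}, \ref{LemmAppAtInf} and \ref{LemmaInfDrift} to obtain a drift condition for $V_{\epsilon}$ past an almost-surely finite random time, then convert drift into recurrence of a fixed compact sublevel set. The only difference is that the paper dispatches the final drift-implies-recurrence step by citing Lemma~4 of \cite{Rose95}, whereas you carry out the standard stopped-supermartingale argument explicitly (with the extra care about the indicator $\textbf{1}_{\mathcal{F}}$ and the truncation in $B$), which is a perfectly valid substitute.
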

\begin{proof}
Combining Lemmas \ref{LemInfApprIgnoreZero}, \ref{LemmAppAtInf} and \ref{LemmaInfDrift}, there exists some number $\mathcal{X} > 0$ and almost surely finite random time $\tau_{\mathcal{X}}$  so that $X_{t}$ satisfies a drift condition of the form
\be 
\E[V(X_{t+1}) \textbf{1}_{t > \tau_{\mathcal{X}}} | X_{t} = x, \mathcal{S}_{t}] \leq aV(x) + b
\ee 
for some function $V$ and constants $0 \leq a < 1$, $b < \infty$
. The existence of a recurrent compact set follows immediately from this drift condition and Lemma 4 of \citep{Rose95}. \end{proof}

This allows us to slightly strengthen Lemma \ref{LemmaInfDrift}: 

\begin{lemma}  \label{CorInfDrift}
All times $\tau_{\mathcal{X}, 2 \mathcal{X}}$ of the form given in Equation \eqref{EqFunnyTime} satisfy $\P[\tau_{\mathcal{X}, 2 \mathcal{X}} < \infty] = 1$ and are stopping times with respect to $\{ \mathcal{S}_{t} \}$. Furthermore, for $G < G(L, p_{\infty}, \lambda, N)$ sufficiently small, there exists a random time $\tau$ of the form given in Equation \eqref{EqFunnyTime} so that
\be \label{IneqDriftCondition2}
\E[V_{\epsilon}(X_{t+1}) \textbf{1}_{\tau < t} | X_{t}, \mathcal{S}_{t}] \leq a_1 V_{\epsilon}(X_{t}) + b_1
\ee 
for some $0 \leq a_1 < 1$, $0 \leq b_1 < \infty$. 
\end{lemma}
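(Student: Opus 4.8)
The goal is to upgrade the somewhat ad hoc random ``burn-in'' time appearing in the proofs of Lemmas~\ref{LemInfApprIgnoreZero}--\ref{LemmaInfManyRets} to a genuine stopping time for the filtration generated by $\{\mathcal{S}_t\}$, after which the event $\mathcal{F}$ of Lemma~\ref{LemmaInfDrift} holds automatically. The mechanism is the observation that, for a fixed finite $R$, the property ``$\mathcal{S}$ is $(\infty,0,R)$-good'' is a property of $\mathcal{S}$ alone and is preserved under the addition of points (one of the monotonicity statements in the subsection on good sets). Hence $\tau := \inf\{t : \mathcal{S}_t \text{ is } (\infty,0,R)\text{-good}\}$ satisfies $\{\tau \le t\} = \{\mathcal{S}_t \text{ is }(\infty,0,R)\text{-good}\} \in \sigma(\mathcal{S}_t)$, so $\tau$ is a $\{\mathcal{S}_t\}$-stopping time. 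Finiteness $\P[\tau < \infty] = 1$, and the identification of $\tau$ with a time of the form~\eqref{EqFunnyTime} (for a suitable finite $R = R(\mathcal{X})$ and $r_1 = \mathcal{X}$, $r_2 = 2\mathcal{X}$), come from the argument already given for Lemma~\ref{LemInfApprIgnoreZero} together with the Remark that $k(i) = 1$ for every $i$: once every directional shell that is visited infinitely often by $\{L_s\}$ contains at least $N = N_{\defi}$ points of $\mathcal{S}_t$, condition (2) of $(\infty,0,R)$-goodness holds for the corresponding $R$.

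For the drift condition, fix $\epsilon > 0$ small, let $\mathcal{X} = \mathcal{X}(\epsilon)$ be the constant produced by Lemma~\ref{LemmAppAtInf} used with $k = 2$, let $R = R(\mathcal{X})$ be as above, take $\mathcal{Y} \ge \max(2\mathcal{X}, R)$ large (also large enough for Lemma~\ref{LemmaInfDrift}), and let $\tau$ be the stopping time above. I claim that, for $G$ sufficiently small, inequality~\eqref{IneqTVDistAss} holds at every $t > \tau$. On the event $\{\tau < t\}$ the set $\mathcal{S}_t$ is $(\infty,0,R)$-good, so if $\|L_t\| > \mathcal{Y}$ then condition (2) of goodness forces $R_t \le \tfrac12 \|L_t\|$; consequently the $N$ nearest points $q_t^{(1)},\dots,q_t^{(N)}$ to $L_t$ all have norm $> \mathcal{X}$ and $R_t < \|L_t\| - \mathcal{X}$. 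Since the poisedness check added in steps~14 and~16 of the modified algorithm guarantees that $\{q_t^{(i)}\}$ is $\lambda$-poised whenever this approximation is actually used to move the chain, Lemma~\ref{LemmAppAtInf} applies and gives $|\log p_{\mathcal{S}_t}(L_t) - \log p_\infty(L_t)| < \epsilon + \lambda(N+1)G$; the same bound holds at $X_t$ when $\|X_t\| > \mathcal{Y}$, and, by monotonicity of goodness, for the two augmented sample sets occurring in~\eqref{EqAlphDef} as well. Writing $\tilde\alpha_t(X_t,L_t)$ as the convex combination~\eqref{EqAlphDef} of the corresponding approximate acceptance ratios $\min(1, e^{\log p_{\mathcal{S}'}(L_t) - \log p_{\mathcal{S}'}(X_t)})$ and using that $x \mapsto \min(1,e^x)$ is $1$-Lipschitz, one obtains $|\tilde\alpha_t(X_t,L_t) - \alpha_\infty(X_t,L_t)| \le 2(\epsilon + \lambda(N+1)G)$ whenever $\min(\|X_t\|,\|L_t\|) > \mathcal{Y}$; when instead $\|X_t\| \le \mathcal{Y}$ or $\|L_t\| \le \mathcal{Y}$, the right-hand side of~\eqref{IneqTVDistAss} is at least $2 \ge |\tilde\alpha_t - \alpha_\infty|$, so the inequality is trivial. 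Thus~\eqref{IneqTVDistAss} holds for all $t > \tau$ once $G < G(L,p_\infty,\lambda,N)$ is small enough that $2(\epsilon + \lambda(N+1)G) < \delta_0$, with $\delta_0 = \delta_0(\epsilon_0,a,b,V)$ and the small $\delta$ as in Lemma~\ref{LemmaInfDrift}. Re-running the estimate in the proof of Lemma~\ref{LemmaInfDrift} with $\textbf{1}_{\tau < t}$ in place of $\textbf{1}_{\mathcal{F}}$ --- legitimate because $\textbf{1}_{\tau < t}$ is $\sigma(\mathcal{S}_{t-1})$-measurable, so that~\eqref{IneqDriftCondition2} is trivial on $\{\tau \ge t\}$ and reduces to the bound of Lemma~\ref{LemmaInfDrift} on $\{\tau < t\}$ --- then yields~\eqref{IneqDriftCondition2} with constants $0 \le a_1 < 1$ and $0 \le b_1 < \infty$ depending only on $L$, $p_\infty$, and $\epsilon_0$.

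The main obstacle is the bookkeeping in the first paragraph: reconciling the time $\tau_{\mathcal{X},2\mathcal{X}}$ defined in~\eqref{EqFunnyTime} --- built from the index set $I$ and exponents $k(i)$, which are functionals of the entire proposal trajectory $\{L_s\}_{s \in \mathbb{N}}$ and are not $\sigma(\mathcal{S}_t)$-measurable at finite $t$ --- with the monotone, $\sigma(\mathcal{S}_t)$-measurable hitting time of the $(\infty,0,R)$-good sets, and checking that a finite $R = R(\mathcal{X})$ for which one can certify a.s.\ eventual $(\infty,0,R)$-goodness is the same $R$ that makes Lemma~\ref{LemmAppAtInf} usable with $k = 2$. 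Everything downstream is routine constant-chasing: one fixes $\epsilon$, then $\mathcal{X}(\epsilon)$ and $R(\mathcal{X})$, then $\mathcal{Y}$, then the threshold $G(L,p_\infty,\lambda,N)$ via $\delta_0$, in an order with no circularity --- which is precisely the meaning of the hypothesis ``$G < G(L,p_\infty,\lambda,N)$ sufficiently small'' both here and in Theorem~\ref{ThmErgGaussEnv}.
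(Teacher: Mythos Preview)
Your approach is essentially the paper's, and your second paragraph (the drift condition via Lemma~\ref{LemmAppAtInf} and Lemma~\ref{LemmaInfDrift}) is in fact more explicit than what the paper writes there; the paper simply observes that $\tau_{\mathcal{X},2\mathcal{X}} \ge \tau_{\mathcal{X}}^{(k)}$ by~\eqref{EqImportantTimeOrder} and that the drift bound already established in Lemma~\ref{LemmaInfManyRets} for $t > \tau_{\mathcal{X}}$ then persists for the larger time.

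The gap is in your first and third paragraphs. You correctly flag that the index set $I$ and the layer exponents $k(i)$ entering~\eqref{EqFunnyTime} depend on the full trajectory $\{L_s\}_{s\in\mathbb{N}}$ and are not $\sigma(\mathcal{S}_t)$-measurable at finite $t$, so that $\tau_{\mathcal{X},2\mathcal{X}}$ as literally defined is not obviously a stopping time. But you do not resolve this; you only cite ``the Remark that $k(i)=1$,'' which is a forward reference to precisely the argument being asked for here. The paper's resolution is short and is the missing ingredient: by Lemma~\ref{LemmaInfManyRets} there is a compact set $\mathcal{A}$ visited infinitely often by $\{X_t\}$, and by part~1 of Assumption~\ref{assump:RoTw} the proposal density is bounded below on compacta, so each thickened shell $\mathcal{P}_i^{(1)}$ satisfies $\P[|\{L_t\}_{t\in\mathbb{N}}\cap\mathcal{P}_i^{(1)}|=\infty]=1$. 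Hence $I^c=\emptyset$ and $k(i)=1$ for every $i$, and~\eqref{EqFunnyTime} collapses to
\[
\tau_{\mathcal{X},2\mathcal{X}} \;=\; \inf\{\, t : \forall i,\ |\mathcal{P}_i^{(1)}\cap\mathcal{S}_t|\ge N\,\},
\]
which is manifestly a $\{\mathcal{S}_t\}$-stopping time. Once you supply this step, your identification of $\tau_{\mathcal{X},2\mathcal{X}}$ with a monotone $\sigma(\mathcal{S}_t)$-measurable hitting time is justified and the rest of your argument goes through. Note also that the first claim of the lemma is about \emph{all} $\tau_{\mathcal{X},2\mathcal{X}}$, not just the particular $\mathcal{X}$ you later use for the drift bound, so this argument must be made for arbitrary $\mathcal{X}>0$.
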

\begin{proof}
By inequality \eqref{EqImportantTimeOrder}, there exists a random time $\tau \equiv \tau_{\mathcal{X}, 2 \mathcal{X}}$ of the form \eqref{EqFunnyTime} that is at least as large as the random time $\tau_{\mathcal{X}}$ constructed in the proof of Lemma \ref{LemmaInfManyRets} and that satisfies $\P[\tau < \infty] = 1$. As shown in Lemma \ref{LemmaInfManyRets}, an inequality of the form \eqref{IneqDriftCondition2} holds for $\tau_{\mathcal{X}}$, and so the same inequality must also hold with $\tau_{\mathcal{X}}$ replaced by the larger time $\tau \geq \tau_{\mathcal{X}}$.

The only detail to check is that all random times $\tau_{\mathcal{X}, 2 \mathcal{X}}$ of the form \eqref{EqFunnyTime} are stopping times with respect to $\{ \mathcal{S}_{t} \}_{t \in \mathbb{N}}$. Let $\{ P_{i} \}$ be the partition associated with $\tau_{\mathcal{X}, 2 \mathcal{X}}$, as constructed in Lemma \ref{LemInfApprIgnoreZero}. By Lemma \ref{LemmaInfManyRets} and part 3 of Assumption \ref{assump:RoTw}, we have $\P[|\{ L_{t} \}_{t \in \mathbb{N}} \cap \mathcal{P}_{i}^{(1)}| = \infty] = 1$ for all $i$. Thus, in the notation of Lemma \ref{LemInfApprIgnoreZero}, $I^{c} = \emptyset$ and $k(i) = 1$ for all $i \in I$. Thus, we have shown that $\tau_{\mathcal{X}, 2 \mathcal{X}} = \inf \{ t \, : \, \forall i, \, | \mathcal{P}_{i}^{(1)} \cap \mathcal{S}_{t} | \geq N \}$, which is clearly a stopping time with respect to $\{ \mathcal{S}_{t} \}_{t \in \mathbb{N}}$, and the proof is finished.
\end{proof}

We now finish our proof of Theorem \ref{ThmErgGaussEnv} analogously to our proof of Theorem \ref{ThmCompSup}.

The following bound is almost identical to Lemma \ref{LemGridRefComp}, but now proved under the Gaussian envelope assumption for the target density.

\begin{lemma} [Convergence of Kernels] \label{LemGridRefComp2}
Let the assumptions stated in the statement of  Theorem \ref{ThmErgGaussEnv} hold and fix a compact set $\mathcal{A} \subset \Theta$. For all $\delta > 0$, there exists a stopping time $\tau = \tau(\delta)$ with respect to $\{ \mathcal{S}_{t} \}_{t \in \mathbb{N}}$ so that
\be \label{IneqFirstGoal} 
\sup_{t > \tau} \, \sup_{x \in \mathcal{A}} \| K_{\infty}(x,\cdot) - \tilde{K}_{t}(x,\cdot) \|_{\TV} < \delta 
\ee 
and so that $\P[\tau < \infty] = 1$.
\end{lemma}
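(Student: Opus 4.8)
The plan is to mimic the proof of Lemma \ref{LemGridRefComp}, isolating the two uses of compactness of $\Theta$ and replacing them with the non-compact machinery of Lemmas \ref{LemmAppAtInf}--\ref{CorInfDrift}. Since $L$ is symmetric and translation invariant and $p(\theta\mid\d)$ absorbs the prior, the acceptance function of $K_\infty$ is $\alpha(x,y)=\min(1,p(y\mid\d)/p(x\mid\d))$, and each of the three terms of $\tilde\alpha_t$ in \eqref{EqAlphDef} has the same form with $p(\cdot\mid\d)$ replaced by one of $p_{\mathcal S_t}$, $p_{\mathcal S_t\cup\{(x,f(x))\}}$, $p_{\mathcal S_t\cup\{(y,f(y))\}}$; hence $\|K_\infty(x,\cdot)-\tilde K_t(x,\cdot)\|_{\TV}\le\int_\Theta|\alpha(x,y)-\tilde\alpha_t(x,y)|\,\ell(x,y)\,dy$. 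Because $\mathcal A$ is compact, the Gaussian tail bound in item 2 of Assumption \ref{ass:geotarg} (with \eqref{eqn:Symrw}) gives $\sup_{x\in\mathcal A}L(x,\{\|y\|>\mathcal X\})\to0$ as $\mathcal X\to\infty$; I would fix $\mathcal X$ so that this supremum is $<\delta/2$ and set $\mathcal K:=\mathcal A\cup\overline{\mathcal B_{\mathcal X}(0)}$, a compact set. It then suffices to force $|\alpha(x,y)-\tilde\alpha_t(x,y)|<\delta/2$ for all $x,y\in\mathcal K$.

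On $\mathcal K$ I would reuse the Conn et al.\ estimate exactly as in Lemma \ref{LemGridRefComp}: Assumption \ref{ass:GE} makes $\log p(\cdot\mid\d)$ have Lipschitz Hessian near $\mathcal K$, so Theorems 3.14 and 3.16 of \citep{Conn2009} yield, for any $\lambda$ and any target accuracy, a radius parameter $c$ such that every $\lambda$-poised set that is $c$-dense near $\mathcal K$ gives $\sup_{\theta\in\mathcal K}|\log p_{\mathcal S}(\theta)-\log p(\theta\mid\d)|$ below that accuracy; the fourth modification of Algorithm \ref{alg:algOverview} (the poisedness check) guarantees that the $N$-point configurations actually used by $\tilde K_t$ are $\lambda$-poised once $\mathcal S_t$ is dense enough, and adjoining $(x,f(x))$ or $(y,f(y))$ only preserves $c$-density by monotonicity of goodness. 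Since $p(\cdot\mid\d)$ is bounded away from $0$ and $\infty$ on $\mathcal K$, this controls each ratio $p_{\mathcal S}(y)/p_{\mathcal S}(x)$ and hence, using that $u\mapsto\min(1,e^u)$ is Lipschitz on bounded sets, each term of $\tilde\alpha_t$; choosing $c$ small enough makes $|\alpha-\tilde\alpha_t|<\delta/2$ on $\mathcal K$. I would therefore set $c=c(\delta,\lambda)$, let $\rho:=\sup_{\theta\in\mathcal K}\|\theta\|$, and define $\tau:=\inf\{t:\mathcal S_t\text{ is }(c,\rho,\rho)\text{-good with respect to }\mathcal K\}$ --- which, as $\mathcal K\subset\mathcal B_\rho(0)$, simply says $\mathcal S_t$ is $c$-dense in $\mathcal K$; monotonicity of goodness then gives \eqref{IneqFirstGoal} for every $t>\tau$, and $\tau$ is visibly a stopping time for $\{\mathcal S_t\}$.

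The remaining, and I expect hardest, step is $\P[\tau<\infty]=1$: unlike the compact case, the chain might drift to infinity and never deposit points near $\mathcal K$. Here I would invoke Lemma \ref{LemmaInfManyRets} (with $G<G_0$ small), which furnishes a compact $\mathcal A_0$ with $\P[X_t\in\mathcal A_0\text{ infinitely often}]=1$. Item 1 of Assumption \ref{ass:geotarg}, applied to the compact set $\{y-x:y\in\mathcal K,\,x\in\mathcal A_0\}$ together with \eqref{eqn:Symrw}, gives $\inf_{x\in\mathcal A_0}\inf_{y\in\mathcal K}\ell(x,y)=:\epsilon_1>0$, hence a minorization $L(x,\cdot)\ge\epsilon_1\,\mathrm{Leb}|_{\mathcal K}$ uniform over $x\in\mathcal A_0$. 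Then, exactly as in the proof of Lemma \ref{LemGridRefComp} (with the auxiliary Bernoulli sequences used there), at each visit to $\mathcal A_0$ one can couple so that with probability at least $\epsilon_1\,\mathrm{Leb}(\mathcal K)\cdot\beta\cdot\tfrac12>0$ the proposal is $\mathrm{Unif}(\mathcal K)$, the random-refinement coin in step 12 succeeds, and $\textsc{RefineNear}$ is applied at $\theta^+$, so a uniform-on-$\mathcal K$ point enters $\mathcal S$; the cross-validation and poisedness refinements can only enlarge $\mathcal S$ and so do no harm. By conditional Borel--Cantelli, almost surely infinitely many i.i.d.\ $\mathrm{Unif}(\mathcal K)$ points are added, and covering $\mathcal K$ by finitely many $c$-balls of positive Lebesgue measure shows that $\mathcal S_t$ is $c$-dense in $\mathcal K$ for all large $t$, i.e.\ $\tau<\infty$ a.s. The delicate bookkeeping is the interaction of this randomized refinement with the other refinement triggers and the requirement that the eventual $N$-point configurations be $\lambda$-poised, both absorbed into monotonicity of goodness and the poisedness modification.
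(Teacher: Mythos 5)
Your proof is correct and follows the same overall strategy as the paper's: fix $c=c(\delta,\lambda)$ via the Conn et al.\ error bounds so that a $\lambda$-poised, $c$-dense sample set gives a uniformly accurate surrogate, define $\tau$ as the first time $\mathcal{S}_t$ achieves this goodness (manifestly a stopping time for $\{\mathcal{S}_t\}$), and prove $\P[\tau<\infty]=1$ by coupling the proposal's minorization component and the random-refinement coin at successive returns to a recurrent compact set, then applying Borel--Cantelli. Two places where you are more careful than the printed proof deserve note. First, the paper controls the surrogate only on $\mathcal{A}$ itself and does not explicitly address proposals landing outside $\mathcal{A}$; your truncation to a compact $\mathcal{K}\supset\mathcal{A}$ absorbing all but $\delta/2$ of the proposal mass (via item 2 of Assumption \ref{ass:geotarg} and translation invariance) closes that gap cleanly. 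Second, the paper's proof takes return times to the \emph{arbitrary} compact set $\mathcal{A}$ and cites Lemma \ref{LemmaInfManyRets} for their finiteness, even though that lemma only guarantees recurrence of \emph{some} compact set $\mathcal{A}_0$; your version routes the argument through $\mathcal{A}_0$ and uses the minorization $\inf_{x\in\mathcal{A}_0,\,y\in\mathcal{K}}\ell(x,y)>0$ to deposit uniform points into $\mathcal{K}$ from there, which is the right fix and costs nothing. The remaining bookkeeping (monotonicity of goodness under adjoining points, the poisedness check forcing refinement otherwise) matches the paper.
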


\begin{proof} 
Fix a constant $0 < R < \infty$ so that $\mathcal{A} \subset \mathcal{B}_{R}(0)$. By results in \citep{Conn2009}, for any $\lambda, \alpha > 0$, there exists a constant $c = c(\alpha, \lambda) > 0$ so that $\sup_{\theta \in \mathcal{A}}|p_{\mathcal{S}}(\theta) - p(\theta | \d) | < \alpha$ if $\mathcal{S}$ is $\lambda$-poised and $(c,R,R)$-good. Set $c = c(\epsilon, \lambda)$ and define $\tau' = \inf \{ t \, : \, \mathcal{S}_{t} \text{ is } (c,R,R)- \text{good} \}$. By definition, $\tau'$ is a stopping time with respect to $\{ \mathcal{S}_{t} \}_{t \in \mathbb{N}}$ that satisfies inequality \eqref{IneqFirstGoal}. We now check that $\P[\tau' < \infty] = 1$.  
By the assumption that $\ell(x,y)$ is bounded away from 0, there exist $\epsilon>0$ and measures $\mu$, $\{r_{x}\}_{x \in \Theta}$ so that 
\be \label{EqRepMin}
L(x, \cdot) = \epsilon \mu(\cdot) + (1 - \epsilon) r_{x}(\cdot).
\ee 

Let $\{A_{i} \}_{i \in \mathbb{N}}$ and $\{ B_{i} \}_{i \in \mathbb{N}}$ be two sequences of i.i.d. Bernoulli random variables, with success probabilities $\epsilon$ and $\beta$ respectively. Let $\tau_{0} = \inf \{ t \, : \, X_{t} \in \mathcal{A} \}$ and define inductively $\tau_{i+1} = \inf \{ t > \tau_{i} + 1 \, : \, X_{t} \in \mathcal{A} \}$. By equality \eqref{EqRepMin}, it is possible to couple the sequences $\{ X_{t} \}_{t \in \mathbb{N}}, \{A_{i} \}_{i \in \mathbb{N}}$ so that 
\be \label{EqDecompProp}
\P[L_{\tau_{i}} \in \cdot | \tau_{i}, X_{\tau_{i}}, A_{i} = 1] &=   \mu(\cdot)  \\
\P[L_{\tau_{i}} \in \cdot | \tau_{i}, X_{\tau_{i}}, A_{i} = 0] &=   r_{X_{\tau_{i}}}(\cdot).
\ee 

We can further couple $\{B_{i}\}_{i \in \mathbb{N}}$ to these sequences by using $B_{i}$ for the random variable in step 12 of Algorithm \ref{alg:algOverview} at time $\tau_{i}$. That is, when running  Algorithm \ref{alg:algOverview}, we would run the subroutine $\mathrm{RefineNear}$ in step 13 of the algorithm at time $t = \tau_{i}$ if $B_{i} = 1$, and we would not run that subroutine in that step at that time if $B_{i} = 0$.  Define $I = \{ i \in \mathbb{N} \, : \, A_{i} = B_{i} = 1 \}$. Under this coupling of $\{A_{i} \}_{i \in \mathbb{N}}, \{B_{i} \}_{i \in \mathbb{N}}$, and $\{X_{t} \}_{t \in \mathbb{N}}$,
\be 
\{ L_{\tau_{i}} \}_{i \in I, \, \tau_{i} < t }  \subset \mathcal{S}_{t}.
\ee 
Furthermore, $\{L_{\tau_{i}} \}_{i \in I, \, i \leq N }$ is an i.i.d sequence of $N$ draws from $\mu$, and by Lemma \ref{LemmaInfManyRets}, $\P[\tau_{i} < \infty] = 1$ for all $i$. Let $\mathcal{E}_{j}$ be the event that $\{ L_{\tau_{i}} \}_{i \leq j}$ is $(c,R,R)$-good. We have $\tau' \leq \inf \{ \tau_{j} \, : \, \mathcal{E}_{j} \text{ holds} \}$. By independence of the sequence $\{ L_{\tau_{i}} \}_{i \in \mathbb{N}}$, we obtain
\be 
\P[\tau' < \infty] \geq \liminf_{j \rightarrow \infty} \P[\mathcal{E}_{j}] = 1.
\ee 

This argument shows that, for any compact set $\mathcal{A}$, there exists a stopping time $\tau'$ with respect to $\{ \mathcal{S}_{t} \}_{t \in \mathbb{N}}$ so that $\P[\tau' < \infty] = 1$ and so that 
\be \label{IneqNextToLastLemmaConc}
\sup_{t > \tau'} \sup_{x \in \mathcal{A}} \| \tilde{K}_{t}(x,\cdot) - K_{\infty}(x,\cdot) \|_{\TV} < \delta.
\ee 
This completes the proof of the Lemma.

\end{proof}

We are finally ready to prove Theorem \ref{ThmErgGaussEnv}: 
\begin{proof} [Proof of Theorem \ref{ThmErgGaussEnv}]
As with the proof of Theorem \ref{ThmCompSup}, it is sufficient to show that, for all $\epsilon, \delta, G > 0$ sufficiently small and all $B \gg \epsilon^{-1}$ sufficiently large, the conditions of Theorem \ref{CorMainProbIneq} can be satisfied for some time $\mathcal{T} = \mathcal{T}_{\epsilon,\delta}$ with the same drift function $V$ and constants $\alpha, a,b$. 

By Assumption \ref{assump:RoTw} and Theorem 12 of \cite{Rose95}, inequality \eqref{IneqBasicResPos3} holds for some $\alpha > 0$ and $\{C_{x} \}_{x \in \Theta}$. For any fixed $0 < B < \infty$ and all $0 < G,\delta$ sufficiently small, Lemma \ref{LemGridRefComp2} implies that there exists some almost surely finite stopping time $\tau_{1} = \tau_{1}(\delta)$ so that inequality \eqref{IneqBasicResPos2} holds for the set $\mathcal{F}_{1} = \{ \tau_{1} > t \}$. Lemma \ref{CorInfDrift} implies that, for all $G > 0$ sufficiently small, there exists some almost surely finite stopping time $\tau_{2}$ so that inequality \eqref{IneqBasicResPos4} holds for the set $\mathcal{F}_{2} = \{ \tau_{2} > t \}$. Choose $\mathcal{T}$ to be the smallest integer so that $\P[\max(\tau_{1}, \tau_{2}) > \mathcal{T}] <  \epsilon$ and set $\mathcal{F} = \{ \min(\tau_{1}, \tau_{2}) > \mathcal{T} \}$. We then have that inequalities \eqref{IneqBasicResPos2} and \eqref{IneqBasicResPos4} are satisfied.  Finally, inequality \eqref{IneqBasicResPos1} holds by part 2 of Assumption \ref{assump:RoTw}. We have shown that there exist fixed values of $\mathcal{C}$ and $\alpha$ so that the conditions of Theorem \ref{CorMainProbIneq} hold for all $\epsilon, \delta > 0$ sufficiently small.  We conclude that, for all $\epsilon, \delta > 0$ sufficiently small, 
\be 
\limsup_{T \rightarrow \infty} \| \mathcal{L}(X_{T}) - \pi \|_{\TV} \leq 3\epsilon  + \delta \frac{ \log \left( \frac{e \delta}{\mathcal{C} \log(1-\alpha)} \right) }{\log(1 - \alpha)} +  \frac{4b}{a B} \lceil \frac{ \log \left( \frac{\delta}{\mathcal{C} \log(1-\alpha)} \right) }{\log(1 - \alpha)} + 1 \rceil.
\ee  
Letting $B$ go to infinity, then $\delta$ go to 0 and finally $\epsilon$ go to 0 completes the proof.
\end{proof}
\subsection{Alternative assumptions}
\label{sec:alternateAssumptions}
In this section, we briefly give other sufficient conditions for ergodicity. We do not give detailed proofs but highlight the instances at which our current arguments should be modified.\par
 The central difficulty in proving convergence of our algorithm is that, in general, the local polynomial fits we use may be very poor when $R_{t}$ is large. This difficulty manifests in the fact that, for most target distributions, making the set $\mathcal{S}$ a $(c,r,R)$-good set does not guarantee that $\tilde{K}_{\mathcal{S}}$ inherits a drift condition of the form \eqref{IneqBasicResPos4} from $K_{\infty}$, for any value of $c,r,R$. Indeed, no property that is monotone in the set $\mathcal{S}$ can guarantee that $\tilde{K}_{\mathcal{S}}$ satisfies a drift condition. In a forthcoming project focused on theoretical issues, we plan to show convergence based on drift conditions that only hold `on average' and over long time intervals. 
There are several other situations under which it is possible to guarantee the eventual existence of a drift condition, and thus ergodicity:

\begin{enumerate}
\item Fix a function $\delta_{0}: \Theta \rightarrow \mathbb{R}^{+}$ and add the step ``If $R_{t} > \delta_{0}(\theta^{+})$, $\mathcal{S} \leftarrow \{ (\theta^{+}, f(\theta^{+}) \} \cup \mathcal{S})$" between steps 7 and 8 of Algorithm \ref{alg:algOverview}. If $\lim_{r \rightarrow \infty} \sup_{\| x \| \geq r} \delta_{0}(x) = 0$ and 
\be
\lim_{r \rightarrow \infty} \sup_{\| x \| \geq r} \max( \| p'(\theta | \d) \|, \| p''(\theta | \d) \|) = 0,
\ee
then the main condition of Lemma \ref{LemmaInfDrift}, inequality \eqref{IneqTVDistAss} (with $\alpha_{\infty}$ replaced by the acceptance function of $K$), holds by a combination of Theorems 3.14 and 3.16 of \citep{Conn2009}. If $p( \theta | \d)$ has sub-Gaussian tails, the proof of Lemma \ref{LemmaInfDrift} can then continue largely as written if we replace $p_{\infty}(x)$ with $p(x | \d)$ wherever it appears. Since the Gaussian envelope condition is only used to prove that the condition in Lemma \ref{LemmaInfDrift} holds, Theorem \ref{ThmErgGaussEnv} holds with the Gaussian envelope condition replaced by these requirements.
\item Similar results sometimes hold if we only require that $\delta_{0}(x) \equiv \delta_{0}$ be a sufficiently small constant. Theorem 1 of \cite{FHL2013}, combined with Theorems 3.14 and 3.16 of \citep{Conn2009}, can be used to obtain weaker sufficient conditions under which the condition in Lemma \ref{LemmaInfDrift} holds. 
\item If $d=1$, $N_{\defi} = 2$, and the approximations in Algorithm \ref{alg:algOverview} are made using linear rather than quadratic models, we state without proof that a drift condition at infinity proved in Lemma \ref{LemmaInfDrift} can be verified directly. For $d \geq 2$, more work needs to be done.
\item Finally, we discuss analogous results that hold for other forms of local approximation, such as Gaussian processes. When the target distribution is compact, we expect Theorem \ref{ThmCompSup} to hold as stated whenever local approximations to a function based on $(c,R,R)$-good sets converge to the true function value as $c$ goes to 0. In our proof of Theorem \ref{ThmCompSup}, we cite \citep{Conn2009} for this fact. The proof of Theorem \ref{ThmCompSup} will hold as stated for other local approximations if all references to \citep{Conn2009} are replaced by references to appropriate analogous results. Such results typically hold for reasonably constructed local approximation strategies \citep{Cleveland1996, Atkeson1997}.

When the target distribution is not compact, modifying our arguments can be more difficult, though we expect similar conclusions to often hold.
\end{enumerate}
\subsection{Examples for parameter choices}
\begin{example}[Decay Rate for $\beta$] \label{ExDecRateBeta}
We note that if $\beta_{t}$ decays too quickly, our sampler may not converge, even if $\gamma_{t} \rightarrow 0$ at any rate. Consider the proposal distribution $L$ that draws i.i.d.\ uniform samples from $[0,1]^{d}$ and let $\lambda(\cdot)$ denote the Lebesgue measure. Consider a target distribution of the form $p(\theta | \d) \propto \textbf{1}_{\theta \in G}$ for set $G$ with Lebesgue measure $0 < \lambda(G) < 1$. If $\sum_{t} \beta_{t} < \infty$, then by Bo+rel-Cantelli, the probability $p = p \left( \{ \beta_{t} \}_{t \in \mathbb{N}} \right)$ that no points are added to $\mathcal{S}$ except during the initial choice of reference points or failed cross-validation checks is strictly greater than 0. With probability $ \lambda(G)^{k} > 0$, the first $k$ reference points are all in $G$. But if both these events happen, all cross-validation checks are passed for any $\gamma > 0$, and so the walk never converges; it samples from the measure $\lambda$ forever. 
\end{example}
\begin{example}[Decay Rate for $\gamma$]\label{ex:decrag}
We note that we have not used the assumption that $\gamma < \infty$ anywhere. As pointed out in Example \ref{ExDecRateBeta}, in a way this is justified---we can certainly find sequences $\{ \beta_{t} \}_{t \in \mathbb{N}}$ and walks that are not ergodic for any sequence $\gamma_{t} > 0$ converging to zero at any rate. \par 

In the other direction, there exist examples for which having any reasonable fixed value of $\gamma$ gives convergence, even with $\beta = 0$. We point out that this depends on the initially selected points; one could be unlucky and choose points with log-likelihoods that happen to lie exactly on some quadratic that does not match the true distribution. Consider a target density $\pi(x) \propto 1 + C \textbf{1}_{x > \frac{1}{2}}$ on $[0,1]$ with independent proposal moves from the uniform measure on $[0,1]$. To simplify the discussion, we assume that our approximation of the density at each point is linear and based exactly on the three nearest sampled points. Denote by $\mathcal{S}_{t}$ the points which have been evaluated by time $t$, and let $\mathcal{S}_{0} = \{ \frac{1}{8}, \frac{2}{8}, \frac{3}{8}, \frac{5}{8}, \frac{6}{8}, \frac{7}{8} \}$. Write $x_{1}, \ldots, x_{m(t)} = \mathcal{S}_{t} \cap [0, \frac{1}{2}]$ and $x_{m(t)+1}, \ldots, x_{n(t)} = \mathcal{S}_{t} \cap [\frac{1}{2},1]$. It is easy to check that 
\be \label{IneqEasyTVBound}
\| \mathcal{L}(X_{t+1}) - \pi \|_{\TV} \leq x_{m(t) + 3} - x_{m(t) - 2}.
\ee
It is also easy to see that with probability one, for any $\gamma < \frac{1}{2}$, there will always be a subinterval of $[x_{m(t)-2}, x_{m(t)+3}]$ with strictly positive measure for which a cross-validation check will fail. Combining this with inequality \eqref{IneqEasyTVBound} implies that the algorithm will converge in this situation, even with $\beta = 0$. Furthermore, in this situation choosing $\beta \equiv 0$ results in a set $\mathcal{S}_{t}$ that grows extremely slowly in $t$, without substantially increasing bias. \par 
\end{example}

\section{Genetic toggle switch inference problem}
\label{apx:genetics}

Here we provide additional details about the setup of the genetic toggle switch inference problem from Section~\ref{s:genetics}. This genetic circuit has a bistable response to the concentration of an input chemical, [IPTG]. Figure \ref{fig:GeneticsResponse} illustrates these high and low responses, where the vertical axis corresponds to the expression level of a particular gene. \citep{Gardner2000} proposed the following differential-algebraic model for the switch:
\begin{eqnarray}
\frac{du}{dt} &= &\frac{\alpha_1}{1 + v^\beta} - u,\\
\frac{dv}{dt} &= &\frac{\alpha_2}{1+w^\gamma}-v, \nonumber \\
w &= & \frac{u}{(1+\mathrm{[IPTG]}/K)^\eta}. \nonumber
\label{e:genesystem}
\end{eqnarray}
The model contains six unknown parameters $Z_\theta = \{\alpha_1, \alpha_2, \beta, \gamma, K, \eta\} \in \mathbb{R}^6$, while the data correspond to observations of the steady-state values $v(t=\infty)$ for six different input concentrations of  [IPTG], averaged over several trials each. As in \citep{Marzouk2009a}, the parameters are centered and scaled around their nominal values so that they can be endowed with uniform priors over the hypercube $[-1,1]^6$. Specifically, the six parameters of interest are normalized around their nominal values to have the form
\begin{equation*}
Z_i = \bar{\theta}_i(1+\zeta_i \theta_i), \ i=1, \ldots, 6,
\end{equation*}
so that each $\theta_i$ has prior $\text{Uniform}[-1,1]$. The values of $\bar{\theta}_i$ and $\zeta_i$ are given in Table \ref{tab:geneticParams}. The data are observed at six different values of [IPTG]; the first corresponds to the ``low'' state of the switch while the rest are in the ``high'' state. Multiple experimental observations are averaged without affecting the posterior by correspondingly lowering the noise; hence, the data comprise one observation of $v/v_\text{ref}$ at each concentration, where $v_\text{ref} = 15.5990$. The data are modeled as having independent Gaussian errors, \emph{i.e.}, as draws from $\mathcal{N}(d_i, \sigma_i^2)$, where the high- and low-state observations have different standard deviations, specified in Table \ref{tab:geneticData}. The forward model may be computed by integrating the ODE system (\ref{e:genesystem}), or more simply by iterating until a fixed point for $v$ is found. 

\begin{figure}[htbp]

\centering
\includegraphics{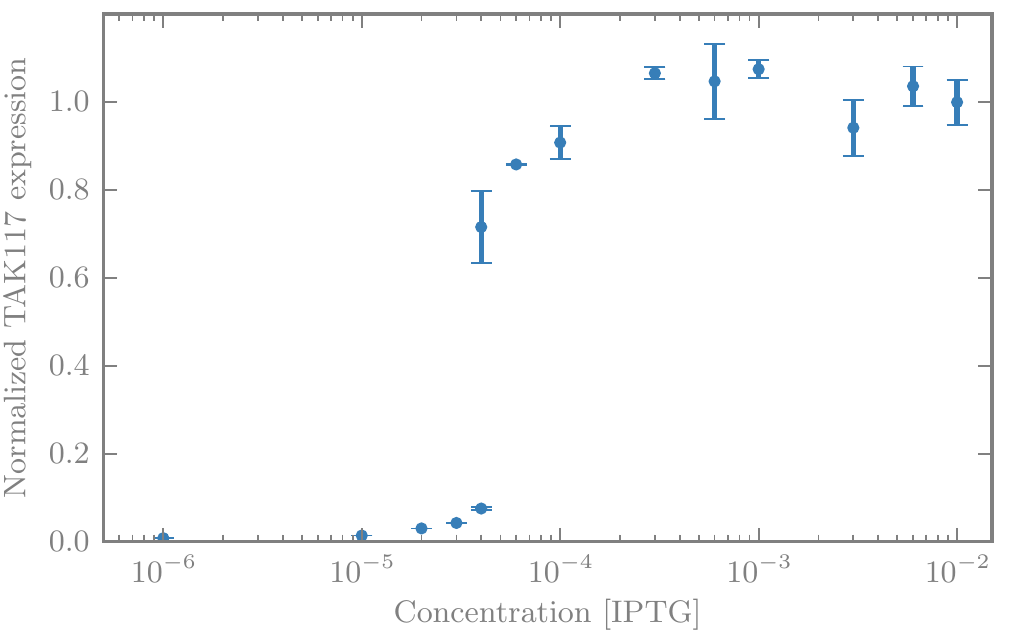}
\caption{Response of the pTAK117 genetic toggle switch to the input concentration of IPTG \citep{Gardner2000}. The plot shows the mean and standard deviation of the experimentally-observed gene expression levels over a range of input concentrations. Expression levels are normalized by the mean response at the largest IPTG concentration.}
\label{fig:GeneticsResponse}
\end{figure}

\begin{table}
\caption{\label{tab:geneticParams}Normalization of the parameters in the genetic toggle switch example.}

\ra{1.2}
\centering
\begin{tabular}{l l l l l l l}
\toprule
& $\alpha_1$ & $\alpha_2$ & $\beta$ & $\gamma$ &  $K$ & $\eta$\\
\midrule
$\bar{\theta}_i$ & 156.25 & 15.6 & 2.5 & 1 & 2.0015 & 2.9618e-5\\
$\zeta_i$ & 0.20 & 0.15 & 0.15 & 0.15 & 0.30 & 0.2\\
\bottomrule
\end{tabular}
\end{table}

\begin{table}
\caption{\label{tab:geneticData}
Data and obervation error variances for the likelihood of the genetic toggle switch example.}
\ra{1.2}
\centering
\begin{tabular}{l l l l l l l}
\toprule
$[$IPTG$]$ & 156.25 & 15.6 & 2.5 & 1 & 2.0015 & 2.9618e-5\\
\midrule
$d_i$ &  0.00798491 & 1.07691684 & 1.05514201 & 0.95429837 & 1.02147051 & 1.0\\
$\sigma_i$ & 4.0e-5 & 0.005 & 0.005 & 0.005 & 0.005 & 0.005\\
\bottomrule
\end{tabular}
\end{table}

\section{Elliptic PDE inverse problem}
\label{apx:pde}

Here we provide details about the elliptic PDE inference problem. The forward model is given by the solution of an elliptic PDE in two spatial dimensions
\be
\nabla_{\mathbf{s}} \cdot \left ( k(\mathbf{s}, \theta) \nabla_{\mathbf{s}} u(\mathbf{s}, \theta) \right ) = 0 ,
\label{e:elliptic}
\ee
where $\mathbf{s} =(s_1, s_2) \in [0,1]^2$ is the spatial coordinate. The boundary conditions are
\begin{eqnarray*}
u(\mathbf{s}, \theta) |_{s_2 =0} &=& s_1,\\
u(\mathbf{s}, \theta) |_{s_2 =1} &=& 1-s_1,\\
\left. \frac{\partial u(\mathbf{s}, \theta)} {\partial s_1 } \right|_{s_1=0} &=& 0,\\
\left. \frac{ \partial u(\mathbf{s}, \theta)} {\partial s_1} \right |_{s_1=1} &=& 0.
\end{eqnarray*}
This PDE serves as a simple model of steady-state flow in aquifers and other subsurface systems; $k$ can represent the permeability of a porous medium while $u$ represents the hydraulic head. Our numerical solution of (\ref{e:elliptic}) uses the standard continuous Galerkin finite element method with bilinear basis functions on a uniform $30$-by-$30$ quadrilateral mesh. 

The log-diffusivity field $\log k(\mathbf{s})$ 
%
%
is endowed with a Gaussian process prior, with mean zero and an isotropic squared-exponential covariance kernel:
\begin{equation*}
C(\mathbf{s}_1, \mathbf{s}_2) = \sigma^2 \exp  \left ( - \frac{ \left \| \mathbf{s}_1 - \mathbf{s}_2 \right \|^2}{2 \ell^2} \right ),
\end{equation*}
for which we choose variance $\sigma^2 = 1$ and a length scale $\ell=0.2$. This prior allows the field to be easily parameterized with a Karhunen-Lo\`{e}ve (K-L) expansion \citep{Adler1981}:
\begin{equation*}
k(\mathbf{s}, \theta) \approx \exp \left( \sum_{i=1}^d \theta_i \sqrt{\lambda_i} k_i(\mathbf{s}) \right),
\end{equation*}
where $\lambda_i$ and $k_i(\mathbf{s})$ are the eigenvalues and eigenfunctions, respectively, of the integral operator on $[0,1]^2$ defined by the kernel $C$, and the parameters $\theta_i$ are endowed with independent standard normal priors, $\theta_i \sim \mathcal{N}(0,1)$. These parameters then become the targets of inference. In particular, we truncate the Karhunen-Lo\`{e}ve expansion at $d=6$ modes and condition the corresponding mode weights $(\theta_1, \ldots, \theta_6)$ on data. Data arise from observations of the solution field on a uniform $11\times 11$ grid covering the unit square. The observational errors are taken to be additive and Gaussian:
\begin{equation*}
d_j = u(\mathbf{s}_j, \theta) + \epsilon_j,
\end{equation*}
with $\epsilon_j \sim \mathcal{N}(0, 0.1^2)$. 

\end{document}